\title{The Value of Recall in Extensive-Form Games}
\author[1,2]{Ratip Emin Berker}
\author[1,2]{Emanuel Tewolde}
\author[1]{Ioannis Anagnostides}
\author[1,4]{\\ Tuomas Sandholm}
\author[1,2,3]{Vincent Conitzer}
\affil[1]{Carnegie Mellon University}
\affil[2]{Foundations of Cooperative AI Lab (FOCAL)}
\affil[3]{University of Oxford}
\affil[4]{Strategic Machine, Inc.}
\affil[5]{Strategy Robot, Inc.}
\affil[6]{Optimized Markets, Inc.}
\affil[ ]{\texttt{\{rberker, etewolde, ianagnos, sandholm, conitzer\}@cs.cmu.edu}}
\begin{document}

\maketitle

\begin{abstract}
    Imperfect-recall games---in which players may forget previously acquired information---have found many practical applications, ranging from game abstractions to team games and testing AI agents. In this paper, we quantify the utility gain by endowing a player with perfect recall, which we call the \emph{value of recall (VoR)}. While VoR can be unbounded in general, we parameterize it in terms of various game properties, namely the structure of chance nodes and the \emph{degree of absentmindedness} (the number of successive times a player enters the same information set). Further, we identify several pathologies that arise with VoR, and show how to circumvent them. We also study the complexity of computing VoR, and how to optimally apportion \emph{partial recall}. Finally, we connect VoR to other previously studied concepts in game theory, including the price of anarchy. We use that connection in conjunction with the celebrated \emph{smoothness} framework to characterize VoR in a broad class of games.
\end{abstract}

\section{Introduction}

Game theory offers a principled framework for reasoning about complex interactions that involve multiple strategic players. It continues to propel landmark results in long-standing challenges in artificial intelligence (AI), ranging from poker~\citep{Brown17:Superhuman,Bowling15:Heads,Moravvcik17:DeepStack} to diplomacy~\citep{Bakhtin22:Human}. A common premise in game-theoretic modeling is \emph{perfect recall}---players never forget information once acquired. The perfect-recall assumption is often called into question for games involving human players; however, it is difficult to come up with a faithful model in such cases due to the unpredictability of when and what human players will forget. In contrast, AI agents can be specifically designed to relinquish certain information, thereby making the imperfect-recall framework directly applicable. But why should one consider AI agents with imperfect recall?

An early, influential application of imperfect-recall games revolves around \emph{abstraction}: games encountered in practice are typically too large to represent exactly, and so one resorts to abstraction to compress its description. In particular, one way of doing so consists of allowing players to carefully forget less important aspects of previously held information. Indeed, imperfect-recall abstractions have been a crucial component of state-of-the-art algorithms in poker solving~\citep{Brown15:Hierarchical,Johanson13:Evaluating,Waugh09:Practical,Ganzfried14:Potential,Cermak17:Algorithm}. Imperfect recall also naturally arises in so-called \emph{adversarial team games}~\citep{Celli18:Computational,Zhang23:Team,Zhang22:Correlation,Emmons22:Learning,VonStengel97:Team}, wherein a team of players---which can be construed as a single player with imperfect recall---faces an adversary. The benefit of reinforcing the communication capacity of the team in such settings---corresponding to boosting recall---is an active area of research, prominently featured in a recent NeurIPS competition~\citep{Meisheri20:Sample,Resnick2020:Pommerman}. Relatedly, natural notions of \emph{correlated equilibria} can be modeled via an imperfect-recall mediator, endowed with the ability to provide recommendations~\citep{Zhang22:Polynomial}; in that context, imperfect recall can serve to safeguard players' private information, a consideration that also arises in other settings~\citep{Conitzer19:Designing}. Finally, another possible application revolves around simulating and testing AI agents before their deployment in the real world \citep{Kovarik23:Game,Kovarik24:Recursive,Chen2024:Imperfect}. As a result, it is becoming increasingly pressing to expand our scope beyond the assumption of perfect recall.

In this paper, we examine a question at the heart of this research agenda: \emph{how does perfect recall affect players' utilities under various natural solution concepts?} More specifically, we contrast the utilities obtained by a player in an initial imperfect-recall game (in extensive form) to those in a perfect recall refinement thereof; we refer to the corresponding ratio as the \emph{value of recall (VoR)}. Here, our main contribution is to provide a broad characterization of VoR for different solution concepts in terms of natural game properties.

Many strategic interactions demonstrate that perfect recall offers a significant advantage. In the popular card game blackjack, the house is expected to prevail in the long run against a player with poor recall, but certain memorization strategies tip---at least under the earlier rules followed by casinos---the balance in the player's favor~\citep{Thorp16:Beat}, as pop-culture has hyperbolically portrayed. The role of memory is even more pronounced in other card games such as solitaire~\citep{KIRKPATRICK54:Probability,Foerster13:Solitaire}, where remembering the previously dealt cards drastically increases the odds of winning. We are interested in quantifying how much players benefit from perfect recall.

\subsection{A plot twist: perfect recall can hurt}

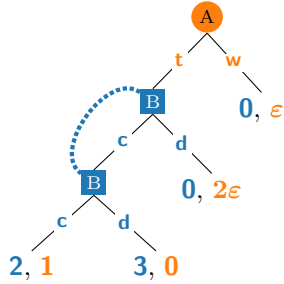
\begin{figure}[t]
    \tikzset{
        every path/.style={-},
        every node/.style={draw},
    }
    \forestset{
    subgame/.style={regular polygon,
    regular polygon sides=3,anchor=north, inner sep=1pt},
    }
    \centering
        \begin{forest}
            [\scriptsize{A},p2gs,name=p20,s sep=25 pt,l sep=21pt
                [\scriptsize{B},p1gs,name=p0,el={2}{t}{},s sep=25pt,l sep=21pt
                    [\scriptsize{B},p1gs,name=p1b,el={1}{c}{},s sep=25pt,l sep=21pt
                        [\util1{2}\text{, }\util2{1},terminal,el={1}{c}{},yshift=-3.3pt]
                        [\util1{3}\text{, }\util2{0},terminal, el={1}{d}{},yshift=-3.3pt]
                    ]
                    [\util1{0}\text{, }\util2{$\boldsymbol{2\varepsilon}$},terminal,el={1}{d}{},yshift=-3.3pt]
                ]
                [\util1{0}\text{, }\util2{$\boldsymbol{\varepsilon} $},terminal,el={2}{w}{},yshift=-3.3pt]
            ]
            \draw[infoset1] (p0) to [bend right=90] (p1b);
        \end{forest}
    \caption{A game with imperfect recall. Giving Bobble (\ponegs) perfect recall hurts both players. Terminals show utilities for Bobble and Alice (\ptwogs). Infosets are joined by dotted lines. 
    \label{fig:recall_bad}}
\end{figure}

The previous examples notwithstanding, surprisingly, endowing a player with perfect recall can end up diminishing every player's utility! Consider \Cref{fig:recall_bad}: Alice has a small amount of money ($\varepsilon>0$) and interacts with an investment bot Bobble, starting from a free trial to see if the bot is defective (\emph{i.e.}, Bobble plays \blued, in which case the game is over and Alice receives a small compensation of $\varepsilon$). If Bobble cooperates (\bluec), the game continues and it gains access to Alice's money, which it multiplies through investments. If Bobble defects (\blued) now, it gets to run away with all the money. However, if it has imperfect recall (cannot remember if the free trial is over), then it has the incentive to cooperate (\bluec) with Alice on both counts, as attempting $\blued$ has a greater chance of causing it to get caught during the free trial. Knowing this, Alice is incentivized to trust (\oranget) Bobble, leading to the cooperative outcome. On the other hand, if Bobble is given perfect recall, it has every incentive to cooperate in the free trial and then defect after getting the money; anticipating this, Alice walks out (\orangew) without interacting with Bobble (\Cref{prop:recall_bad} formalizes this example).

Intuitively, this demonstrates that a player gaining perfect recall can result in the other players trusting it less, eliminating a cooperative outcome that is arbitrarily better for everyone. This is in line with prior work showing that the ability of a player to be simulated by others can benefit everyone in trust-based games \citep{Kovarik23:Game,Kovarik25:Game, Conitzer23:Foundations}.

In spirit, this phenomenon is similar to the famous Braess paradox
~\citep{Braess68:Paradox}, which predicts that augmenting a network with more links can result in worse equilibria. 
We formalize this type of hurtful recall in later sections, and also provide necessary conditions under which it does not arise.

\subsection{Overview of our results}

We formally introduce the value of recall (\Cref{def:vor}) in (imperfect-recall) extensive-form games for a broad set of solution concepts. In particular, building on prior work, our definition is based on the coarsest information refinement of a game that attains perfect recall (\Cref{def:prr}). In the remainder of the paper, we investigate a number of questions relating to the value of recall.

We first formalize the observation made earlier regarding hurtful recall (\Cref{fig:recall_bad}) by showing the existence of games in which a single player getting perfect recall can arbitrarily hurt all players, including themselves, for all the solution concepts considered in this paper (\Cref{prop:recall_bad}). Even more surprisingly, this type of behavior can also arise in single-player games under certain solution concepts (\Cref{ex:refinement_needed}); we argue that this is a pathology as the single player can always choose to ignore information. We show that this issue can be circumvented by replacing each of these solution concepts with an appropriate refinement thereof, one of which is a novel definition (\Cref{defn:edt nash}).

Next, we turn our attention to the computational aspects of the value of recall. We show that VoR is $\NP$-hard to compute, and to approximate, for all solution concepts considered in this paper, even in single-player games (\Cref{thm:hardness}). While this mostly follows from existing hardness results for solving imperfect-recall games~\citep{Tewolde23:Computational}, we prove new hardness results for some solution concepts, which even rule out any multiplicative approximation factor.

Those hardness results notwithstanding, we characterize VoR under optimal play in single-player games based on certain natural properties of the game tree. In particular, we show in \Cref{prop:1p_vor1} that value degradation due to imperfect recall can be fully explained by two sources: \emph{absentmindedness} (an infoset being entered multiple times in a path of play) and external stochasticity. In~\Cref{prop:1p_am,prop:1p_chance}, we provide tight upper bounds for VoR for each of these sources separately. Finally, as our main characterization result, we show that those two bounds compose for games that exhibit both absentmindedness and external stochasticity (\Cref{thm:1p}).

The aforedescribed characterization applies only to optimal play. To extend it to more permissive solution concepts, we make a connection with the price of anarchy literature. Namely, inspired by the homonymous class of games introduced by~\citet{Roughgarden15:Intrinsic}, we introduce the notion of a \emph{smooth} (imperfect-recall) single-player game (\Cref{def:smooth}), and show VoR can be bounded in terms of the smoothness parameters of the game, in conjunction with our previous bound concerning optimal strategies. Besides this connection with the price of anarchy, we further observe that VoR captures some previously studied concepts, such as the \emph{price of uncorrelation} in adversarial team games \citep{Celli18:Computational} and the \emph{price of miscoordination} in security games \citep{Jiang13:Defender}, which enables interpreting their results as bounds on VoR in those games.

Finally, we examine the value of recall with respect to \emph{partial recall}---instead of perfect recall---refinements. In particular, we focus on the natural problem of refining an imperfect-recall game so as to maximize the utility gain, subject to constraining the amount of new recall. We show that, even with oracle access to optimal strategies, that problem is \NP-hard even in single-player games (\Cref{thm:partial}). We conclude with a number of interesting future directions stemming from our work. 

\section{Preliminaries}

Before we proceed, we provide some necessary background on imperfect-recall games and solution concepts for them.

\subsection{Games with imperfect recall}

We start by introducing extensive-form games. We will be following the formalism introduced by~\citet{Fudenberg91:Game_theory}.

\begin{defn}
    An \emph{extensive-form game} $\Gamma$ specifies
    \begin{enumerate}
        \item A rooted tree with node set $\calH$ and edges that represent \emph{actions}. The game starts at the root, and actions are taken to traverse down the tree, until the game finishes at a leaf node, called \emph{terminal node}. The set of terminal nodes is denoted by $\calZ \subset \calH$, and the set of actions available at any nonterminal node $h \in \calH \setminus \calZ$ is denoted by $A_h$.
        \item A finite set $\calN \cup \{c\}$ of $N+1$ players where $N \geq 1$. Set $\calN$ contains the \emph{strategic players}, and $c$ stands for a \emph{chance} ``player'' that models exogenous stochasticity. Each nonterminal node $h$ is assigned to a particular player $i \in \calN \cup \{c\}$, who chooses an action to take from $A_h$. Set $\calH_i$ denotes all nodes assigned to Player $i$.
        \item For each chance node $h \in \calH_c$, a probability distribution $\Prob_c(\cdot \mid h)$ on $A_h$ with which chance elects an action at $h$.
        \item For each strategic player $i \in \calN$, a (without loss of generality) nonnegative \emph{utility (payoff)} function $u_i : \calZ \to \RR_{\geq 0}$ which returns what $i$ receives when the game finishes at a terminal node. Player $i$ aims to maximize that utility.\footnote{Whenever relevant for computational results, we assume all numbers to be rationals represented in binary.}
        \item For each strategic player $i \in \calN$, a partition $\calH_i = \sqcup_{I \in \calI_i} I$ of the nodes of $i$ into information sets (\emph{infosets}). Nodes of the same infoset are considered indistinguishable to the player at that infoset. For that, we also require $A_h = A_{h'}$ for $h, h' \in I$. This also makes action set $A_I$ well-defined.
    \end{enumerate}
\end{defn}

The \emph{game tree} of $\Gamma$ refers to $\calH$, $\{A_h\}_{h\in \calH \setminus \calZ}$, and $\{\Prob_c(\cdot \mid h)\}_{h \in \calH_c}$ (but not its infoset partitioning or utilities). We now formalize games where players may \emph{forget} previously available information.

\begin{defn}[(Im)perfect recall]\label{def:imperf_recall}
For a decision node $h$ of a game $\Gamma$, let $\seq(h) = ( h_k)_{k = 0}^{\depth(h) - 1}$ be the ordered sequence of nodes from the root node $h_0$ to $h$ (excluding $h$) and let $\obs(h)=(i_k, I_k,a_k)_{k=0}^{\depth(h)-1}$ be the corresponding sequence of tuples showing which player $i_k$ acts at $h_k$, the infoset $I_k$ of node $h_k$, and what action $a_k$ was taken at $h_k$ .
For a player $i \in \calN$, let $\obs_i(h)$ be the ordered subsequence of tuples from $\obs(h)$ for which $i_k = i$. We say player $i$ has \emph{perfect recall} in $\Gamma$ if for all of $i$'s infosets $I \in \calI_i$, and all pairs of nodes $h,h' \in I$, we have $\obs_i(h)=\obs_i(h')$. Otherwise, we say Player $i$ has \emph{imperfect recall}. We say that $\Gamma$ is a perfect-recall game if all players $i\in \calN$ have perfect recall in $\Gamma$. Otherwise, we say $\Gamma$ is an imperfect-recall game. 
\end{defn}

\paragraph{Strategies and utilities}

Players can select a probability distribution---a \emph{randomized action}---over the actions at an infoset. A (behavioral) \emph{strategy} $\pi_i$ of a player $i \in \calN$ specifies a randomized action $\pi_i(\cdot \mid I) \in \Delta(A_I)$ at each infoset $I \in \calI_i$. We say $\pi_i$ is \emph{pure} if it assigns probability 1 to a single action for each infoset. A (strategy) \emph{profile} $\pi = (\pi_i)_{i \in \calN}$ specifies a strategy for each player. We use the common notation $\pi_{-i} = (\pi_1, \dots, \pi_{i-1}, \pi_{i+1}, \dots, \pi_n)$. We denote the strategy set of Player $i$ with $S_i$, and $S = \bigtimes_{i \in \calN} S_i$.

We denote the reach probability of a node $h'$ from another node $h$ under a profile $\pi$ as $\Prob(h' \mid \pi, h)$. It evaluates to $0$ if $h \notin \seq(h')$, and otherwise to the product of probabilities with which the actions on the path from $h$ to $h'$ are taken under $\pi$ and chance. We denote with $\U_i(\pi \mid h) \coloneqq \sum_{z \in \calZ} \Prob(z \mid \pi, h) \cdot u_i(z)$ the expected utility of Player $i$ given that the game is at node $h$ and the players are following profile $\pi$. We overload notation for the special case the game starts at root node $h_0$ by defining $\Prob(h \mid \pi) := \Prob(h \mid \pi, h_0)$ and $\U_i(\pi) \coloneqq \U_i(\pi \mid h_0)$.  Finally, let $I^{\text{1st}}$ refer to the nodes $h\in I$ for which $I$ does not appear in $\obs(h)$. Then the reach probability of $I$ (from $h_0$) is $\Prob(I\mid \pi)\coloneqq \sum_{h \in I^{\text{1st}}}\Prob(h \mid \pi)$.

\subsection{Solution concepts}

The value of recall, which we introduce in the next section, does not only depend on the underlying game, but also on our assumptions on what reasoning capabilities each player has. These are formally captured by \emph{solution concepts}.

\paragraph{Nash equilibrium} This is the most classic solution concept in game theory~\citep{Nash50:Non}.

\begin{defn}\label{def:nash}
    A profile $\pi \in S$ is a \emph{\NE{}} of a game $\Gamma$ if for each player $i \in \calN$,
    \begin{align}
        \label{eq:player BR}
        \pi_i \in \argmax_{\pi'_i \in S_i} \U_i(\pi'_i, \pi_{-i}).
    \end{align}
\end{defn}

In the special case that $\Gamma$ is a single-player game, we use the term \emph{optimal strategy} instead of \NE{}.

Unfortunately, a \NE{} is hard to compute, even in a single-player game with imperfect recall \citep{Koller92:Complexity,Gimbert20:Bridge,Tewolde24:Imperfect}. To make matters worse, it may not even exist~\citep{Wichardt08:Existence}. This motivates considering two relaxations based on the \emph{multiselves approach}~\citep{Kuhn53:Extensive}.

\paragraph{Multiselves equilibria} The multiselves approach interprets a player with imperfect recall as a team of multiple instantiations of the player (referred to as \emph{agents} to distinguish from the original player) who independently act at distinct infosets on behalf of the original imperfect-recall player. 

For strategy $\pi_i \in S_i$ of Player $i$, infoset $I \in \calI_i$, and randomized action $\sigma \in \Delta(A_I)$, we denote by $\pi^{I \mapsto \sigma}_i$ the strategy that plays according to $\pi_i$ except at $I$, where it plays $\sigma$.

\begin{defn}
    \label{def:EDT}
    A profile $\pi \in S$ is an \emph{EDT equilibrium} of a game $\Gamma$ if for each player $i \in \calN$ and each of its infosets $I \in \calI_i$, the randomized action $\pi_i( \cdot \mid I)$ satisfies
        $\pi_i( \cdot \mid I) \in \argmax_{\sigma \in \Delta(A_I)} \U_i(\pi^{I \mapsto \sigma}_i, \pi_{-i})$.
\end{defn}

EDT abbreviates \emph{evidential decision theory};
we refer to~\citet{PiccioneR73,Briggs10:Putting,Oesterheld22:Can} for a detailed treatment and motivation. A third equilibrium concept that arose from the aforementioned literature is based on \emph{causal decision theory (CDT)}. It differentiates from EDT only in games with \emph{absentmindedness}, which is when a single infoset $I$ appears multiple times in $\obs(h)$ for some $h\in \calH$ (\Cref{fig:prr}, left). Its original definition is not central to this work and deferred to the appendix. Instead, below we give an equivalent characterization of it \citep{Tewolde24:Imperfect} using \emph{Karush-Kuhn-Tucker (KKT)} points~\citep{Boyd04:Convex}, which generalize the concept of a \emph{stationary point} of a function over an unconstrained domain.

\begin{defn}
    A profile $\pi \in S$ is a \emph{CDT equilibrium} of a game $\Gamma$ if for each player $i \in \calN$, strategy $\pi_i$ is a KKT point of the utility maximization problem (\ref{eq:player BR}).
\end{defn}

These solution concepts form a strict inclusion hierarchy.

\begin{lemma}[\citealp{Oesterheld22:Can}]
\label{lem:EQ hierarchy}
    Nash equilibria are EDT equilibria, which in turn are CDT equilibria.
\end{lemma}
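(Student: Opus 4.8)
The plan is to establish the two inclusions separately: first that every Nash equilibrium is an EDT equilibrium, and then that every EDT equilibrium is a CDT equilibrium. Throughout I fix a player $i \in \calN$, freeze the opponents at $\pi_{-i}$, and write $F(\pi'_i) \coloneqq \U_i(\pi'_i, \pi_{-i})$ for the induced utility of $i$, viewed as a function on $S_i = \bigtimes_{I \in \calI_i} \Delta(A_I)$. For the first inclusion, suppose $\pi$ is a \NE{}, so that $\pi_i$ globally maximizes $F$ over $S_i$. Fixing any infoset $I \in \calI_i$, the key observation is that every single-infoset deviation $\pi^{I \mapsto \sigma}_i$, as $\sigma$ ranges over $\Delta(A_I)$, is again an element of $S_i$, and that $\pi_i = \pi^{I \mapsto \pi_i(\cdot\mid I)}_i$ itself belongs to this family. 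Hence $F(\pi^{I \mapsto \sigma}_i) \le F(\pi_i) = F(\pi^{I \mapsto \pi_i(\cdot\mid I)}_i)$ for all $\sigma$, which is exactly the statement that $\pi_i(\cdot \mid I) \in \argmax_{\sigma \in \Delta(A_I)} \U_i(\pi^{I \mapsto \sigma}_i, \pi_{-i})$. As $i$ and $I$ were arbitrary, $\pi$ is an EDT equilibrium; this direction is routine.

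The substance lies in the second inclusion, which I would route through the KKT characterization of CDT. The first step is to record that the feasible set $S_i$ is a Cartesian product of simplices, whose defining constraints (nonnegativity and the unit-sum constraint of each $\Delta(A_I)$) are affine and \emph{separable across infosets}. Consequently the KKT system of $\max_{\pi'_i \in S_i} F(\pi'_i)$ decouples into one independent system per infoset: the multipliers attached to block $I$ appear only in the stationarity equations for the variables of block $I$. Combined with the elementary identity that the partial derivatives of $f_I(\sigma) \coloneqq \U_i(\pi^{I \mapsto \sigma}_i, \pi_{-i})$ at $\sigma = \pi_i(\cdot \mid I)$ coincide with the block-$I$ entries of $\nabla F$ at $\pi_i$, this yields: $\pi_i$ is a KKT point of $F$ over $S_i$ if and only if, for every $I \in \calI_i$, the block $\pi_i(\cdot \mid I)$ is a KKT point of the single-simplex problem $\max_{\sigma \in \Delta(A_I)} f_I(\sigma)$.

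Given this reduction, let $\pi$ be an EDT equilibrium, so each block $\pi_i(\cdot \mid I)$ globally maximizes $f_I$ over $\Delta(A_I)$. Because the simplex is cut out by affine constraints, the KKT conditions are \emph{necessary} for optimality with no further constraint qualification required; thus each $\pi_i(\cdot \mid I)$ is a KKT point of $\max_{\sigma} f_I(\sigma)$, and by the decoupling above $\pi_i$ is a KKT point of problem~\eqref{eq:player BR}, i.e.\ $\pi$ is a CDT equilibrium.

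The main obstacle—and the reason EDT and CDT are genuinely distinct—is that under absentmindedness $f_I$ is in general neither linear nor concave in $\sigma$, since altering the action at $I$ rescales reach probabilities at several nodes along the same path of play. The argument must therefore use only the one-directional implication ``global maximizer $\Rightarrow$ KKT point,'' which survives for nonconcave $f_I$; the converse fails, and it is precisely this gap that makes the inclusion strict. I would accordingly avoid any appeal to first-order conditions being \emph{sufficient} for optimality, and rely solely on their necessity over a polytope.
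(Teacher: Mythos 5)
Your proof is correct, but there is nothing in the paper to compare it against: the paper imports \Cref{lem:EQ hierarchy} from \citet{Oesterheld22:Can} without proof, so your argument is a self-contained derivation rather than a variant of an internal one. Both halves check out against the paper's definitions. The first inclusion is exactly the observation that single-infoset deviations $\pi_i^{I \mapsto \sigma}$ lie in $S_i$ and include $\pi_i$ itself, so global optimality implies blockwise optimality. For the second inclusion, your decoupling claim is right: the feasible region is a product of simplices, the stationarity and complementarity conditions for coordinate block $I$ involve only the multipliers $(\lambda_I, \mu_{I,a})$ of that block, and $\nabla f_I$ at $\sigma = \pi_i(\cdot \mid I)$ equals the block-$I$ partial gradient of $F$ at $\pi_i$ by definition of partial derivatives, so $\pi_i$ is a KKT point of problem~\eqref{eq:player BR} iff each block is a KKT point of its single-simplex problem. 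You are also right that the only delicate point is invoking necessity (not sufficiency) of KKT at a global maximizer of the possibly nonconcave polynomial $f_I$, and that affine constraints supply the needed constraint qualification; your closing remark that nonlinearity of $f_I$ arises exactly under absentmindedness, making the converse fail there, is consistent with \Cref{rem:edt = cdt w/o abs}, where EDT and CDT coincide absent absentmindedness because $f_I$ is then linear on the simplex. In short: a correct proof, somewhat more than the paper itself offers, and fully compatible with the paper's KKT-based definition of CDT equilibrium.
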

In particular, Nash equilibria are the hardest to compute, but they coincide in the following special cases.

\begin{rem}
\label{rem:edt = cdt w/o abs}
    EDT and CDT equilibria coincide in games without absentmindedness. Nash and EDT equilibria coincide in games with only one infoset per player.
\end{rem}

\section{Value of Recall}

To introduce the novel concept of the value of recall, we first formalize an ordering among infoset partitionings:

\begin{defn}[Game refinements/coarsenings]
    \label{def:ref/coarse}
    Given two extensive-form games $\Gamma$ and $\Gamma'$ with the same game tree and utilities  but potentially different infosets $\{\calI_i\}_{i \in \calN}$ and $\{\calI'_i\}_{i \in \calN}$, and a player $i \in \calN$, we denote $\Gamma \succeq_i \Gamma'$ if for each $I' \in \calI'_i$, there exists $\calJ_i \subseteq\calI_i$ such that $I' = \bigsqcup_{I \in \calJ_i } I$. That is, the infosets in $\calI'$ are (disjointly) partitioned by the infosets in $\calI$. In this case, we say $\Gamma$ (resp. $\Gamma'$) is a \emph{refinement} (\emph{coarsening}) of $\Gamma'$ ($\Gamma$) with respect to player $i$. We denote $\Gamma \succeq \Gamma'$ if $\Gamma \succeq_i \Gamma'$ for all $i \in \calN$ and say $\Gamma$ (resp. $\Gamma'$) is an \emph{all-player} refinement (coarsening) of $\Gamma'$ ($\Gamma$).
\end{defn}

We are now ready to define the perfect recall refinement of an imperfect-recall game.

\begin{defn}[Perfect recall refinements]\label{def:prr}
Given imperfect-recall game $\Gamma$, for all nodes $h \in \calH$ and players $i \in \calN$, define $\obs_i(h)$ as in Definition \ref{def:imperf_recall}. For infoset $I \in \calI_i$ and nodes $h,h' \in I$, define the equivalence relation $h \sim h'$ if $\obs_i(h) = \obs_i(h')$. We say that the \emph{(coarsest) perfect recall refinement} of $\Gamma$ with respect to player $i \in \calN$ is an extensive-form game $\pr_i(\Gamma)$ with the same game tree and utilities as $\Gamma$, but an infoset partition where each $I\in \calI_i$ is partitioned into infosets defined by the equivalence relation $\sim$, and the infosets of all other players are unchanged. The \emph{all-player} (coarsest) perfect recall refinement of $\Gamma$ is an extensive-form game $\pr(\Gamma)$ with the same game tree as $\Gamma$, where the infosets of all players are partitioned as above. 
\end{defn}

An equivalent definition to $\pr(\Gamma)$ was introduced by \citet{Cermak18:Approximating}. Both $\pr_i(\Gamma)$ and $\pr(\Gamma)$ are well-defined, easy to compute, with $\pr_i(\Gamma) \succeq_i \Gamma$ and $\pr(\Gamma)\succeq \Gamma$. As claimed, $\pr_i(\Gamma)$ is the coarsest refinement of $\Gamma$ with respect to $i$ that gives $i$ perfect recall. 
We  formalize this below:

\begin{restatable}{prop}{coarsestrefinement}\label{prop:coarsest}
    Given imperfect-recall game $\Gamma$ and another game $\Gamma'$ that has the same game tree as $\Gamma$ but potentially different infosets, if $\Gamma' \succeq_i \Gamma$ and $i$ has perfect recall in $\Gamma'$, then $\Gamma' \succeq_i \pr_i(\Gamma)$. Moreover, $i$ has perfect recall in $\pr_i(\Gamma)$.\footnote{While intuitive, this last statement is not just definitional: even though nodes $h,h'$ are placed in the same infoset of $\pr_i(\Gamma)$ only if $\obs_i(h)=\obs_i(h')$, the infosets in these sequences are also potentially partitioned, causing the sequences to change too.}
\end{restatable}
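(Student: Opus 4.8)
The plan is to reduce both claims to careful bookkeeping about how the player-$i$ observation sequences $\obs_i(\cdot)$ transform when we subdivide $i$'s infosets while keeping the game tree fixed. To disambiguate, write $\obs_i^{G}(h)$ for the sequence of \Cref{def:imperf_recall} computed in game $G$. The crucial preliminary fact I would establish is a transfer lemma: whenever $G_2 \succeq_i G_1$ (same game tree), there is a well-defined map $\phi$ sending each of $i$'s infosets in $G_2$ to the unique $G_1$-infoset containing it, since $\Gamma_2 \succeq_i \Gamma_1$ means each $G_1$-infoset is a disjoint union of $G_2$-infosets. Because $\obs_i^{G}(h)$ records only the label $i$, the action taken, and the infoset label at each $i$-node on the path to $h$, and because the tree, the $i$-nodes, and the actions are identical in $G_1$ and $G_2$, the sequence $\obs_i^{G_1}(h)$ is precisely the coordinatewise $\phi$-image of $\obs_i^{G_2}(h)$. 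In particular, $\obs_i^{G_2}(h) = \obs_i^{G_2}(h')$ forces $\obs_i^{G_1}(h) = \obs_i^{G_1}(h')$.

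For the first claim, $\Gamma' \succeq_i \pr_i(\Gamma)$, by \Cref{def:ref/coarse} it suffices to show the partition of $\calH_i$ induced by $\Gamma'$ refines that induced by $\pr_i(\Gamma)$; that is, any two nodes $h, h'$ sharing a $\Gamma'$-infoset also share a $\pr_i(\Gamma)$-infoset. So fix such $h, h'$. First, $\Gamma' \succeq_i \Gamma$ makes the $\Gamma'$-partition finer than the $\Gamma$-partition, so $h, h'$ lie in a common $\Gamma$-infoset $I$. Second, perfect recall of $i$ in $\Gamma'$ gives $\obs_i^{\Gamma'}(h) = \obs_i^{\Gamma'}(h')$, and applying the transfer lemma with $G_2 = \Gamma'$, $G_1 = \Gamma$ yields $\obs_i^{\Gamma}(h) = \obs_i^{\Gamma}(h')$. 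These are exactly the two defining conditions of the relation $\sim$ in \Cref{def:prr}, so $h, h'$ share a $\pr_i(\Gamma)$-infoset, as needed.

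The second claim, that $i$ has perfect recall in $\pr_i(\Gamma)$, is the delicate part the footnote warns about, since subdividing $i$'s infosets also rewrites the labels appearing in $\obs_i$, so the transfer lemma alone does not close the loop. I would argue by tracking prefixes. Take $h, h'$ in a common $\pr_i(\Gamma)$-infoset; by construction they lie in a common $\Gamma$-infoset and satisfy $\obs_i^{\Gamma}(h) = \obs_i^{\Gamma}(h')$. List the $i$-nodes on the path to $h$ as $g_0, \dots, g_{m-1}$ and those to $h'$ as $g'_0, \dots, g'_{m-1}$; equality of the $\Gamma$-observation sequences forces equal length $m$, equal actions $a^{(j)}$, and $g_j, g'_j$ sharing a common $\Gamma$-infoset $I^{(j)}$. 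The key step is that $\obs_i^{\Gamma}(g_j)$ is exactly the length-$j$ prefix of $\obs_i^{\Gamma}(h)$, since the $i$-nodes strictly preceding $g_j$ are $g_0, \dots, g_{j-1}$, and likewise $\obs_i^{\Gamma}(g'_j)$ is that prefix of $\obs_i^{\Gamma}(h')$; hence $\obs_i^{\Gamma}(g_j) = \obs_i^{\Gamma}(g'_j)$. Combined with $g_j, g'_j \in I^{(j)}$, \Cref{def:prr} places $g_j$ and $g'_j$ in a common $\pr_i(\Gamma)$-infoset $J^{(j)}$. Consequently the $\pr_i(\Gamma)$-observation sequences of $h$ and $h'$ read off the same labels $J^{(0)}, \dots, J^{(m-1)}$ and actions $a^{(0)}, \dots, a^{(m-1)}$, i.e. $\obs_i^{\pr_i(\Gamma)}(h) = \obs_i^{\pr_i(\Gamma)}(h')$, which is perfect recall.

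I expect the main obstacle to be precisely this second claim: one must resist concluding perfect recall directly from $\obs_i^{\Gamma}(h) = \obs_i^{\Gamma}(h')$ and instead verify that the refinement is \emph{self-consistent} along every prefix, so the freshly created infoset labels $J^{(j)}$ line up between the two paths. The prefix identity above does this cleanly; if preferred, it can equivalently be packaged as an induction on the number $m$ of $i$-nodes, refining agreement of the $\pr_i(\Gamma)$-labels one decision point at a time.
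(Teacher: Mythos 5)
Your proposal is correct and follows essentially the same route as the paper's proof: your ``transfer lemma'' is exactly the paper's observation that, because the infosets of $\Gamma'$ partition those of $\Gamma$, equality of $\obs_i^{\Gamma'}$-sequences forces equality of $\obs_i^{\Gamma}$-sequences, and your prefix identity for the second claim is the paper's argument that $\obs_i(h^{(1)}_k)=\obs_i(h^{(2)}_k)$ because these are equal-length prefixes of the equal sequences $\obs_i(h^{(1)})=\obs_i(h^{(2)})$, whence $h^{(1)}_k \sim h^{(2)}_k$ in $\pr_i(\Gamma)$. The only cosmetic difference is that you verify $\Gamma' \succeq_i \pr_i(\Gamma)$ by showing each $\Gamma'$-infoset is contained in a single $\pr_i(\Gamma)$-infoset, while the paper directly exhibits each $\pr_i(\Gamma)$-infoset as a disjoint union of $\Gamma'$-infosets---equivalent formulations of the same partition-refinement fact.
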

(Most proofs are in the appendix due to space constraints.)
\begin{cor}
    If $\Gamma' \succeq \Gamma$ and $\Gamma'$ is a perfect-recall game, then $\Gamma' \succeq \pr(\Gamma)$. Moreover, $\pr(\Gamma)$ is a perfect-recall game.
\end{cor}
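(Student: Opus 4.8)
The plan is to derive this all-player statement from the single-player \Cref{prop:coarsest} by exploiting the fact that the per-player observation sequences interact only weakly across players. The key observation I would establish first is this: for any player $i$, the sequence $\obs_i(h)$ is by definition the subsequence of $\obs(h)$ consisting only of tuples $(i_k, I_k, a_k)$ with $i_k = i$, so the only infoset labels it records are those of player $i$'s \emph{own} infosets. Consequently, refining the infoset partition of any other player $j \neq i$ leaves $\obs_i(h)$ completely unchanged. Hence the equivalence relation $h \sim h'$ that \Cref{def:prr} uses to split player $i$'s nodes in $\pr(\Gamma)$ depends only on player $i$'s partition in $\Gamma$, which is exactly the relation defining $\pr_i(\Gamma)$ on those same nodes.

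From this I would conclude that $\pr(\Gamma)$ and $\pr_i(\Gamma)$ assign player $i$ precisely the same infosets: both partition each $I \in \calI_i$ according to $\obs_i$ computed in $\Gamma$. Since the relation $\succeq_i$ constrains only player $i$'s partition (per \Cref{def:ref/coarse}), the statement $\Gamma' \succeq_i \pr(\Gamma)$ is literally equivalent to $\Gamma' \succeq_i \pr_i(\Gamma)$. Now $\Gamma' \succeq \Gamma$ gives $\Gamma' \succeq_i \Gamma$, and $\Gamma'$ being a perfect-recall game gives that $i$ has perfect recall in $\Gamma'$; so \Cref{prop:coarsest} yields $\Gamma' \succeq_i \pr_i(\Gamma)$, hence $\Gamma' \succeq_i \pr(\Gamma)$. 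As this holds for every $i \in \calN$, we obtain $\Gamma' \succeq \pr(\Gamma)$.

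For the perfect-recall claim I would again invoke the weak-interaction observation. Because $\obs_i$ ignores other players' labels, evaluating it in $\pr(\Gamma)$ produces the same sequence as evaluating it in $\pr_i(\Gamma)$: both use player $i$'s refined partition and disregard everyone else's. \Cref{prop:coarsest} already guarantees that $i$ has perfect recall in $\pr_i(\Gamma)$, i.e. $\obs_i(h) = \obs_i(h')$ for all $h,h'$ lying in a common $\pr_i(\Gamma)$-infoset of $i$; transporting this equality to $\pr(\Gamma)$, whose player-$i$ infosets coincide with those of $\pr_i(\Gamma)$, shows $i$ has perfect recall in $\pr(\Gamma)$. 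Since $i$ is arbitrary, $\pr(\Gamma)$ is a perfect-recall game.

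The step I expect to be the main obstacle is the first one: justifying that the refinements do not interfere, i.e. that recording $\obs_i$ in the fully refined game $\pr(\Gamma)$ yields the same sequence as in the single-player refinement $\pr_i(\Gamma)$. This is exactly the subtlety flagged in the footnote to \Cref{prop:coarsest}—refining a player's infosets does alter the infoset labels appearing in that player's own observation sequences—and the entire argument hinges on noticing that this alteration is confined to player $i$'s own labels and is therefore identical in $\pr(\Gamma)$ and $\pr_i(\Gamma)$. Once this is pinned down, the remainder is routine bookkeeping against the definitions of $\succeq_i$ and of perfect recall.
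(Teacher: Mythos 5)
Your proof is correct and follows the same route the paper intends: the corollary is stated without a separate proof precisely because it is the per-player application of \Cref{prop:coarsest}, which is what you carry out. Your explicit verification of the non-interaction fact---that $\obs_i$ records only player $i$'s own infoset labels, so player $i$'s infosets in $\pr(\Gamma)$ coincide with those in $\pr_i(\Gamma)$ and the relation $\succeq_i$ transports---is exactly the bookkeeping the paper leaves implicit.
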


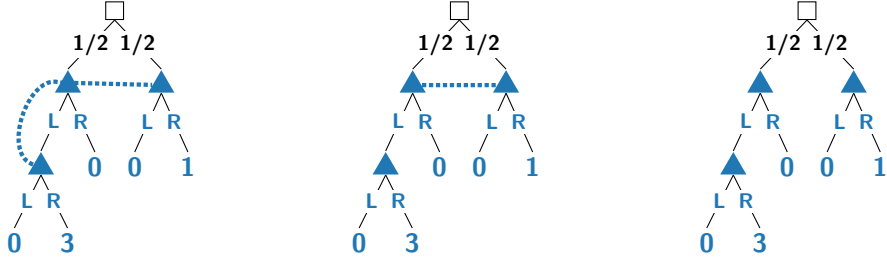
\begin{figure}[t]
\centering
    \tikzset{
        every path/.style={-},
        every node/.style={draw},
    }
    \forestset{
  subgame/.style={regular polygon,
  regular polygon sides=3,anchor=north, inner sep=5pt},
    }
\begin{minipage}{0.1\linewidth}
    \,
\end{minipage}%
\begin{minipage}{.3\linewidth}
     \centering
     \begin{subfigure}{\textwidth}
        
      \begin{forest}
        [,nat,s sep=5pt,el={2}{cont}{}
            [,p1,name=p1a,s sep=10pt,l sep=20pt, el={0}{1/2}{}
                [,p1,name=p1b,s sep=10pt,l sep=20pt, el={1}{L}{},yshift=-2.8pt
                    [\util1{0},terminal,el={1}{L}{},yshift=-6.3pt]
                    [\util1{3},terminal,el={1}{R}{},yshift=-6.3pt]
                ]
                [\util1{0},terminal,el={1}{R}{},yshift=-6.3pt]
            ]
            [,p1,name=p1c,s sep=10pt,l sep=20pt, el={0}{1/2}{}
                [\util1{0},terminal,el={1}{L}{},yshift=-6.3pt]
                [\util1{1},terminal,el={1}{R}{},yshift=-6.3pt]
            ]
        ]   
        \draw[infoset1] (p1b) to [bend left=90] (p1a) -- (p1c);
      \end{forest}
     \label{fig:prr_1}
    \end{subfigure}
    \end{minipage}%
\begin{minipage}{.3\linewidth}
     \centering
     \begin{subfigure}{\textwidth}
        
      \begin{forest}
        [,nat,s sep=5pt,el={2}{cont}{}
            [,p1,name=p1a,s sep=10pt,l sep=20pt, el={0}{1/2}{}
                [,p1,name=p1b,s sep=10pt,l sep=20pt, el={1}{L}{},yshift=-2.8pt
                    [\util1{0},terminal,el={1}{L}{},yshift=-6.3pt]
                    [\util1{3},terminal,el={1}{R}{},yshift=-6.3pt]
                ]
                [\util1{0},terminal,el={1}{R}{},yshift=-6.3pt]
            ]
            [,p1,name=p1c,s sep=10pt,l sep=20pt, el={0}{1/2}{}
                [\util1{0},terminal,el={1}{L}{},yshift=-6.3pt]
                [\util1{1},terminal,el={1}{R}{},yshift=-6.3pt]
            ]
        ]   
        \draw[infoset1] (p1a) -- (p1c);
      \end{forest}
     \label{fig:prr_2}
    \end{subfigure}
    \end{minipage}%
\begin{minipage}{.3\linewidth}
     \centering
     \begin{subfigure}{\textwidth}
        
      \begin{forest}
        [,nat,s sep=5pt,el={2}{cont}{}
            [,p1,name=p1a,s sep=10pt,l sep=20pt, el={0}{1/2}{}
                [,p1,name=p1b,s sep=10pt,l sep=20pt, el={1}{L}{},yshift=-2.8pt
                    [\util1{0},terminal,el={1}{L}{},yshift=-6.3pt]
                    [\util1{3},terminal,el={1}{R}{},yshift=-6.3pt]
                ]
                [\util1{0},terminal,el={1}{R}{},yshift=-6.3pt]
            ]
            [,p1,name=p1c,s sep=10pt,l sep=20pt, el={0}{1/2}{}
                [\util1{0},terminal,el={1}{L}{},yshift=-6.3pt]
                [\util1{1},terminal,el={1}{R}{},yshift=-6.3pt]
            ]
        ]   
      \end{forest}
     \label{fig:prr_3}
    \end{subfigure}
    \end{minipage}%

   \caption{(Left) An imperfect-recall game $\Gamma$. Boxes indicate chance nodes. (Middle) $\pr_1(\Gamma)$, the perfect recall refinement of $\Gamma$ with respect to $\pone$. (Right) $\Gamma$ with perfect information.
   }
   \label{fig:prr}
\end{figure}
By using the coarsest refinement, we seek to isolate the impact of recall on the utility, while filtering out other factors. For instance, the optimal strategy for P1 ($\pone$) in game $\Gamma$ in Figure~\ref{fig:prr}(left) is to play ${\color{p1color}\textbf{\textsf{L}}}$ with probability 1/3, bringing an expected utility of 2/3. If we give the player perfect information, and hence perfect recall in the process, the player can achieve a utility of 2 (\Cref{fig:prr}, right). However, we argue this refinement misrepresents the ``value of recall'' of this game, since P1 now learns the outcome of the chance node, unlike in $\Gamma$. Instead, using the coarsest perfect recall refinement, $\pr(\Gamma)$ per \Cref{def:prr}, leads to utility 3/2 (\Cref{fig:prr}, middle) and properly captures what P1 can gain if its only advantage is to remember everything it once knew. 

The previous example notwithstanding, we should caution that distinguishing perfect recall and perfect information can become blurry: any imperfect information game can be turned into a strategically-equivalent one with only imperfect recall by adding dummy nodes, as we demonstrate in the appendix.

Now, given an imperfect-recall game $\Gamma$, a player of interest (always labelled Player 1), and a solution concept $\SC$, let $u_1(\SC(\Gamma))$  be the utility that Player 1 receives under that solution concept in game $\Gamma$, assuming it exists. In order to ensure that the utility under $\SC$ is uniquely defined (since, for example, there might be multiple Nash equilibria of $\Gamma$ with different utilities for Player 1), we also require $\SC$ to specify whether it is the best or worst possible outcome of that solution concept from Player 1's perspective; this is similar to the definition of solution concepts in the value of commitment~\citep{Letchford14:Value}. In particular, $\bnash, \bedt, \bcdt$ (resp. $\wnash, \wedt, \wcdt$) refer to the best (worst) possible outcome for Player 1 under the corresponding solution concept.

\begin{defn}\label{def:vor}
Given solution concept $\SC$ and $\Gamma$, \emph{the value of recall (VoR) in $\Gamma$ under $\SC$} is
\begin{align*}
\vor^{\SC}(\Gamma)= \frac{u_1(\SC(\pr_1(\Gamma)))}{u_1(\SC(\Gamma))}.
\end{align*}
If we are instead given a game class $\mathscr{C}$, we say that  \emph{the value of recall (VoR) in $\class$ under $\SC$} is
\begin{align*}
\vor^{\SC}(\class)=\sup_{\Gamma \in \class} \frac{u_1(\SC(\pr_1(\Gamma)))}{u_1(\SC(\Gamma))}.
\end{align*}
\end{defn}

We can now formalize the situation that arises in Figure~\ref{fig:recall_bad} and was discussed earlier in the introduction. To do so, we note that strategies $\pi$ and $\pi'$ are \emph{realization-equivalent} if they induce the same reach probability $\Prob(h \mid \pi) = \Prob(h \mid \pi')$ for all $h\in \calH$ (thus achieving the same utility).

\begin{restatable}{prop}{recallbad}
\label{prop:recall_bad}
    For any $\varepsilon>0$, there exists a two-player game $\Gamma$ such that  $\frac{u_i(\SC(\pr_1(\Gamma)))}{u_i(\SC(\Gamma)) } \leq \varepsilon$ for all $i \in \calN$, where $\SC$ is the \emph{only} CDT equilibrium of $\Gamma$, up to realization equivalence. In particular, $\vor^{\SC}(\Gamma)=0$  for  $\SC \in \{\wcdt, 
\bcdt, \wedt, \bedt, \wnash, \bnash\}$.
\end{restatable}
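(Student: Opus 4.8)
The plan is to read the game directly off \Cref{fig:recall_bad} and solve it by hand, in both $\Gamma$ and its refinement $\pr_1(\Gamma)$, under each of the six listed concepts. In $\Gamma$, Bobble has a single (absentminded) infoset joining its two decision nodes, so let $p\in[0,1]$ be the probability it plays \bluec\ there, and let $q\in[0,1]$ be the probability Alice plays \oranget\ at the root. Tracing the tree, Bobble's expected payoff is $q(3p-p^2)$ and Alice's is $q\bigl(p^2+2\varepsilon(1-p)\bigr)+(1-q)\varepsilon$. Throughout I would assume $0<\varepsilon<1$.

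First I would pin down the unique equilibrium of $\Gamma$. The coefficient of $q$ in Alice's payoff is $p^2-2\varepsilon p+\varepsilon$, whose minimum over $p$ is $\varepsilon(1-\varepsilon)>0$; hence Alice strictly prefers \oranget, so $q=1$ at every KKT point, and in particular in every CDT equilibrium. Given $q=1$, Bobble's payoff $3p-p^2$ is strictly increasing on $[0,1]$, forcing $p=1$. Thus $(p,q)=(1,1)$ is the unique CDT equilibrium, with payoffs $2$ for Bobble and $1$ for Alice. I would then verify directly that $(1,1)$ is a Nash equilibrium; combined with \Cref{lem:EQ hierarchy} (Nash $\subseteq$ EDT $\subseteq$ CDT) and the uniqueness just shown, this forces all six concepts $\{\wnash,\bnash,\wedt,\bedt,\wcdt,\bcdt\}$ to select exactly this outcome, giving $u_1(\SC(\Gamma))=2$ and $u_2(\SC(\Gamma))=1$.

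Next I would analyze $\pr_1(\Gamma)$, where Bobble has two infosets (perfect recall, hence no absentmindedness): let $a$ be its \bluec-probability at the first node and $b$ at the second, so the payoffs become $q\,a(3-b)$ for Bobble and $q\bigl(ab+2\varepsilon(1-a)\bigr)+(1-q)\varepsilon$ for Alice. The key step is to show every equilibrium has $q=0$: if $q>0$, then the first-node gradient $q(3-b)>0$ forces $a=1$, the second-node gradient $-qa<0$ forces $b=0$, and then Alice's $q$-gradient equals $ab+\varepsilon(1-2a)=-\varepsilon<0$, which is incompatible with $q>0$. Hence $q=0$, which yields payoffs $0$ for Bobble and $\varepsilon$ for Alice \emph{regardless} of $(a,b)$. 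Since this is precisely the KKT/EDT condition, and EDT coincides with CDT here by \Cref{rem:edt = cdt w/o abs}, while $(q,a,b)=(0,1,0)$ is a valid Nash equilibrium that is also EDT by \Cref{lem:EQ hierarchy}, all six concepts give $u_1(\SC(\pr_1(\Gamma)))=0$ and $u_2(\SC(\pr_1(\Gamma)))=\varepsilon$.

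Finally, the ratios are $0/2=0\le\varepsilon$ for Bobble and $\varepsilon/1=\varepsilon\le\varepsilon$ for Alice, and $\vor^{\SC}(\Gamma)=u_1(\SC(\pr_1(\Gamma)))/u_1(\SC(\Gamma))=0$ per \Cref{def:vor}. I expect the main obstacle to be the bookkeeping across concepts rather than any single calculation: one must guarantee the stated payoffs simultaneously for all of $\{\wnash,\ldots,\bcdt\}$. In $\Gamma$ this is clean because the CDT equilibrium is \emph{literally} unique and pure (so ``up to realization equivalence'' is automatic, even though the absentmindedness would normally separate EDT from CDT). In $\pr_1(\Gamma)$ the subtlety is the reverse: the equilibrium \emph{strategies} are far from unique (many $(a,b)$ survive once $q=0$), so the argument must instead show the equilibrium \emph{payoffs} are pinned down—which holds because $q=0$ collapses Bobble's payoff to $0$ and Alice's to $\varepsilon$ independently of $(a,b)$.
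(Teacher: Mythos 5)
Your proposal is correct and follows essentially the same route as the paper's proof: the same KKT analysis showing the unique CDT equilibrium of $\Gamma$ is the pure cooperative profile (payoffs $2$ and $1$), and the same contradiction argument in $\pr_1(\Gamma)$ (assuming $q>0$ forces $a=1$, $b=0$, making Alice's deviation gradient $-\varepsilon<0$), yielding payoffs $0$ and $\varepsilon$, with \Cref{lem:EQ hierarchy} and \Cref{rem:edt = cdt w/o abs} transferring the conclusion to all six solution concepts. Your explicit checks that the relevant profiles are genuine Nash equilibria (so each $\SC$ is nonempty) and that the payoffs in $\pr_1(\Gamma)$ are pinned down despite non-unique equilibrium strategies are welcome additions that the paper leaves implicit.
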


\subsection{Computational complexity of value of recall}

We now show that computing the value of recall is hard. For this theorem alone, we assume (WLOG) for all $z\in \calZ$ that $u_1(z) \geq \eta$ for some $\eta>0$, to ensure VoR is bounded.

\begin{restatable}{thm}{vorhardness}
    \label{thm:hardness}
    Given a game $\Gamma$, computing $\vor^{\SC}(\Gamma)$ is \NP-hard for $\SC \in \{\wcdt, 
\bcdt, \wedt, \bedt,$ $ \wnash, \bnash \}$. Moreover,
    \begin{enumerate}
        \item Unless $\NP = \ZPP$, none of them admits an \FPTAS. In particular, 
        if $\SC \in \{\wcdt, \wedt\}$, then approximation to any multiplicative factor is \NP-hard.
        \item \NP-hardness and conditional inapproximabiltiy hold even if $\Gamma$ is a single-player game. 
    \end{enumerate}
\end{restatable}

A fully polynomial-time approximation scheme (\FPTAS) takes as input a game $\Gamma$, a solution concept $\SC$, and an $\varepsilon>0$ and outputs a number in the interval $(1\pm \varepsilon) \vor^{\SC}(\Gamma)$. Further, $\ZPP$ contains the class of problems solvable by randomized algorithms that always return the correct answer, and whose expected running time is polynomial~\citep{Gill77:Computational}.

Most of the proof of~\Cref{thm:hardness} relies on existing hardness results for equilibrium computation in (single-player) imperfect-recall games~\citep{Koller92:Complexity,Tewolde23:Computational,Gimbert20:Bridge}. The results for $\wcdt$ and $\wedt$ are new, further establishing stronger inapproximability; both proofs proceed by reducing from 3SAT, as we elaborate in the appendix.

\subsection{VoR pathologies and how to fix them}
\label{sec:eqm refine}

While \Cref{prop:recall_bad} shows that getting recall can hurt in general, one would expect this to not be the case in single-player games. Indeed, without any opponents, we would expect giving recall to only benefit the player, since it can always ignore the information it can now recall. This is the case if $\SC$ represents the optimal strategy (\opt), as getting perfect recall expands the strategy set of a player. Further, since the optimal strategy of a game is also its best CDT and EDT equilibrium (\Cref{lem:EQ hierarchy}), we have the following:

\begin{restatable}{prop}{optvorgood}\label{prop:1p_optimal_lowerbound}
    For any single-player game $\Gamma$,
    \begin{align*}
        \vor^{\opt}(\Gamma) =   \vor^{\bedt}(\Gamma) = \vor^{\bcdt}(\Gamma) \geq 1.
    \end{align*}
\end{restatable}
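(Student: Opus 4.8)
The plan is to collapse all three quantities onto a single number—the optimal utility—and then invoke monotonicity of optimal play under refinement. The text preceding the statement already signals the two ingredients: optimal play tops the EDT/CDT hierarchy, and refinement expands the strategy set.

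First I would prove an auxiliary claim that holds for \emph{any} single-player game $\Gamma'$: the best EDT and best CDT equilibria both attain the optimal utility, i.e. $u_1(\opt(\Gamma')) = u_1(\bedt(\Gamma')) = u_1(\bcdt(\Gamma'))$. The argument is a sandwich. By the convention following \Cref{def:nash}, in a single-player game the optimal strategy \emph{is} the Nash equilibrium concept, so $\opt(\Gamma')$ is a maximizer of $\U_1$; such a maximizer exists because the behavioral strategy space is a product of simplices (compact) and $\U_1$ is a polynomial, hence continuous. By \Cref{lem:EQ hierarchy} this maximizer is in particular an EDT equilibrium and a CDT equilibrium, so the EDT and CDT equilibrium sets are nonempty and contain a strategy of utility $u_1(\opt(\Gamma'))$; this gives the lower bounds $u_1(\bedt(\Gamma')) \geq u_1(\opt(\Gamma'))$ and $u_1(\bcdt(\Gamma')) \geq u_1(\opt(\Gamma'))$. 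Conversely, every EDT and every CDT equilibrium is a feasible strategy, so its utility is at most the optimum, giving the matching upper bounds. Combining yields the claimed equalities.

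Next I would apply this claim to both $\Gamma$ and $\pr_1(\Gamma)$, observing that $\pr_1(\Gamma)$ is again single-player since refining Player~1's infosets introduces no new strategic player. Substituting into \Cref{def:vor}, the numerators and denominators of $\vor^{\opt}$, $\vor^{\bedt}$, and $\vor^{\bcdt}$ agree term by term, establishing $\vor^{\opt}(\Gamma) = \vor^{\bedt}(\Gamma) = \vor^{\bcdt}(\Gamma)$. Finally, for $\vor^{\opt}(\Gamma) \geq 1$ I would show $u_1(\opt(\pr_1(\Gamma))) \geq u_1(\opt(\Gamma))$ by a strategy embedding: given any strategy $\pi$ in $\Gamma$, define $\tilde\pi$ in $\pr_1(\Gamma)$ that plays $\pi(\cdot \mid I)$ at every sub-infoset $J \subseteq I$ into which $I \in \calI_1$ was split. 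Since the two games share a game tree and utilities and $\tilde\pi$ prescribes the same randomized action at each node as $\pi$, they are realization-equivalent, so $\U_1(\tilde\pi) = \U_1(\pi)$. Hence every utility achievable in $\Gamma$ is achievable in $\pr_1(\Gamma)$, the maximum cannot decrease, and $\vor^{\opt}(\Gamma) \geq 1$ follows.

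I expect no deep obstacle; the work is largely organizational. The one point demanding care is the sandwich step's reliance on the optimizer genuinely sitting atop the EDT/CDT hierarchy—so that each ``best'' concept is pinned exactly to the optimum rather than merely bounded—and checking that the embedding above is truly realization-equivalent (equal reach probabilities at all nodes) rather than only matching actions syntactically.
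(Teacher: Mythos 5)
Your proposal is correct and takes essentially the same route as the paper: the paper likewise embeds an optimal strategy $\pi$ of $\Gamma$ into $\pr_1(\Gamma)$ by playing $\pi(\cdot \mid I)$ at every sub-infoset into which $I$ is split, concluding $u_1(\opt(\pr_1(\Gamma))) \geq u_1(\opt(\Gamma))$, and then identifies $\opt$ with $\bedt$ and $\bcdt$ via \Cref{lem:EQ hierarchy}. Your explicit sandwich argument (optimizer exists by compactness and sits atop the hierarchy; every equilibrium is feasible) simply spells out the step the paper dispatches in one sentence.
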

Surprisingly, it turns out that this result in fact does not hold for worst EDT and CDT equilibria of the game:

\begin{ex}\label{ex:refinement_needed}
    Consider the game $\Gamma$ in \Cref{fig:need_refinment_a}. The only CDT/EDT equilibrium of $\Gamma$ is the optimal strategy: always play ${\color{p1color}\textbf{\textsf{L}}}$, bringing a utility of 1. In $\pr_1(\Gamma)$, however, while the same is still the optimal strategy (and hence a CDT and EDT equilibrium), there is now a second CDT and EDT equilibrium: always play ${\color{p1color}\textbf{\textsf{R}}}$ 
on $I_1$ and $I_{21}$, and always play ${\color{p1color}\textbf{\textsf{L}}}$ on $I_{22}$, bringing a utility of $\varepsilon$. Hence, $\vor^{\wedt}(\Gamma)=\vor^{\wcdt}(\Gamma)=\varepsilon$, which can be arbitrarily close to 0.
\end{ex}

\begin{figure} 
    \centering
    \tikzset{
        every path/.style={-},
        every node/.style={draw},
    }
    \forestset{
    subgame/.style={regular polygon,
    regular polygon sides=3,anchor=north, inner sep=1pt},
    }

     \begin{subfigure}{.4\linewidth}
          \centering

        \begin{forest}
            [,p1,name=p0
            [,p1,name=p1a,el={1}{L}{},s sep=15pt
                [\util1{1},el={1}{L}{},terminal]
                [\util1{0},el={1}{R}{},terminal]
            ]
            [,p1,name=p1b,el={1}{R}{},s sep=15pt
                [\util1{$\boldsymbol{\varepsilon}$},el={1}{L}{},terminal]
                [\util1{0},,el={1}{R}{},terminal]
            ]
            ]
            \node[above=0pt of p0,draw=none,p1color]{$I_1$};
            \draw[infoset1] (p1b) to node[below,draw=none,p1color,]{$I_{2}$} (p1a);
        \end{forest}
     \caption{A game $\Gamma$}\label{fig:need_refinment_a}
    \end{subfigure}
      \begin{subfigure}{.55\linewidth}
            \centering
        \begin{forest}
            [,p1,name=p0
            [,p1,name=p1a,el={1}{L}{},s sep=15pt
                [\util1{1},el={1}{L}{},terminal]
                [\util1{0},el={1}{R}{},terminal]
            ]
            [,p1,name=p1b,el={1}{R}{},s sep=15pt
                [\util1{$\boldsymbol{\varepsilon}$},el={1}{L}{},terminal]
                [\util1{0},,el={1}{R}{},terminal]
            ]
            ]
            \node[above=0pt of p0,draw=none,p1color]{$I_1$};
            \node[left=0pt of p1a,draw=none,p1color]{$I_{21}$};
            \node[right=0pt of p1b,draw=none,p1color]{$I_{22}$};
        \end{forest}         \caption{$\pr_1(\Gamma)$}\label{fig:need_refinment_b}
    \end{subfigure}
    \caption{Perfect recall can lead to worse CDT/EDT eq.}
    \label{fig:need_refinment}
\end{figure}
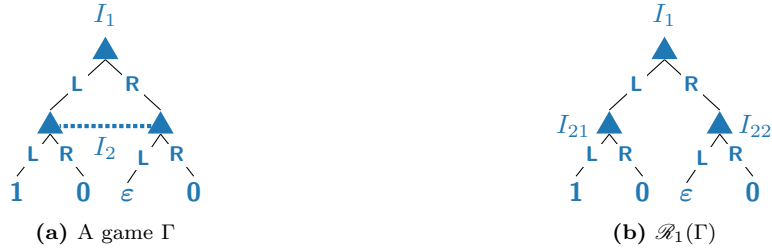

The issue in \Cref{ex:refinement_needed} is that of the chicken or the egg: the unreasonable strategy of playing $\blueR$ at $I_{21}$ cannot violate CDT/EDT conditions if the player never visits $I_{21}$, while if the strategy in $I_{21}$ is unreasonable enough then the decision to not visit $I_{21}$ also does not violate them. This shows that CDT/EDT conditions (which, again, are relaxations of Nash equilibrium) are perhaps \emph{too permissive}, accepting strategies that are not reasonable under perfect recall. To rule out such equilibria, we now introduce \emph{equilibrium refinements} for both solution concepts. (It is important to differentiate between \emph{equilibrium refinements}---which narrows the definition of a solution concept---and \emph{information refinements}, per~\Cref{def:ref/coarse}---which introduces a new game where players have finer infosets.) The refinements of CDT/EDT that we introduce will force the player to consider its behavior in all infosets \emph{it could have} reached, hence preventing pathologies such as \Cref{ex:refinement_needed}.

The appropriate equilibrium refinement for CDT has been introduced by \citet{Lambert19:Equilibria}, which we will refer to as \emph{CDT-Nash}. Due to space constraints, we defer its definition to the appendix. Below, we introduce an analogous, novel refinement called \emph{EDT-Nash}. The relevant properties of both refinements are in \Cref{prop:edt-nash,prop:cdt-nash}.

\begin{defn}
\label{defn:cdt seq rat}
    A strategy $\pi$ in a single-player game $\Gamma$ is \emph{EDT-\rat} if there is a sequence $(\pi^{(k)}, \varepsilon^{(k)})_{k \in \N}$ s.t.
    \begin{enumerate}
        \item each $\pi^{(k)}$ is a strategy in $\Gamma$ such that $\pi^{(k)}(a \mid I) > 0$ for all $I$ and $a$, and $(\pi^{(k)})_{k \in \N}$ converges to $\pi$;
        \item each $\varepsilon^{(k)} > 0$ and $(\varepsilon^{(k)})_{k \in \N}$  converge to $0$; and
        \item for each $k$, for all $I$ with $\Prob(I \mid \pi^{(k)}) >0 $ and $\sigma \in \Delta(A_I)$, 
        \begin{align*}
            \frac{1}{\Prob(I \mid \pi^{(k)})} \cdot \Big( \U_1(\pi^{(k), I \mapsto \sigma}) - \U_1(\pi^{(k)}) \Big) \leq \varepsilon^{(k)}.
        \end{align*}
    \end{enumerate}
\end{defn}
Intuitively, the sequence of fully mixed strategies prevents the player from ignoring any infosets it could have reached.

\begin{defn}
\label{defn:edt nash}
    A profile $\pi$ is an \emph{EDT-\NE{}} of $\Gamma$ if it is an EDT equilibrium and if for all $i \in \calN$, and in the single-player perspective of $\Gamma$ (where every other player plays fixed $\pi_{-i}$), the strategy $\pi_i$ is realization-equivalent to an EDT-\rat{} strategy $\pi$. 
\end{defn}
The key property of our refinement is that it agrees with the optimal strategy under perfect recall.
\begin{restatable}{prop}{edtnash}
    \label{prop:edt-nash}
    EDT-\NEs{} are EDT equilibria. Without absentmindedness, a strategy profile is an EDT-\NE{} iff it is a CDT-\NE{}. Under perfect recall, a strategy profile is an EDT-\NE{} iff it is a \NE{}.
\end{restatable}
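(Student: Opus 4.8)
The statement bundles three assertions, which I would take in increasing order of difficulty.

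The first assertion---that every EDT-\NE{} is an EDT equilibrium---requires no work: \Cref{defn:edt nash} builds ``$\pi$ is an EDT equilibrium'' directly into the definition of EDT-\NE{}. Before the other two assertions, I would record the structural fact that perfect recall precludes absentmindedness: if some infoset $I \in \calI_i$ occurred twice in $\obs(h)$, the two occurrences would be nodes of $I$ whose histories $\obs_i(\cdot)$ are distinct (one a strict prefix of the other), contradicting perfect recall. This lets me route the third assertion through the second.

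For the second assertion (without absentmindedness, EDT-\NE{} $\Leftrightarrow$ CDT-\NE{}), the plan is to first invoke \Cref{rem:edt = cdt w/o abs} to identify EDT equilibria with CDT equilibria, reducing everything to the equivalence of the two \emph{equilibrium refinements}, i.e.\ of EDT-\rat{} (\Cref{defn:cdt seq rat}) with the CDT-\rat{} condition of \citet{Lambert19:Equilibria}. The decisive simplification is that, without absentmindedness, the reach probability $\Prob(I \mid \pi^{(k)})$ is independent of the action played \emph{at} $I$, so the normalized quantity $\frac{1}{\Prob(I \mid \pi^{(k)})}\big(\U_1(\pi^{(k),I\mapsto\sigma})-\U_1(\pi^{(k)})\big)$ collapses to the belief-weighted local deviation gain at $I$, with beliefs $\Prob(h \mid \pi^{(k)})/\Prob(I \mid \pi^{(k)})$. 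Once the causal/evidential distinction vanishes (exactly when absentmindedness is absent), this is term-for-term the quantity controlled by the CDT trembling condition, so the two families of inequalities coincide and the two sets of refined strategies are identical.

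For the third assertion (perfect recall, EDT-\NE{} $\Leftrightarrow$ \NE{}), the short route is to chain the previous steps: perfect recall gives no absentmindedness, the second assertion gives EDT-\NE{} $=$ CDT-\NE{}, and the analogous property of CDT-Nash (\Cref{prop:cdt-nash}, after \citet{Lambert19:Equilibria}) identifies CDT-\NE{} with \NE{} under perfect recall. Should a self-contained argument be wanted, I would prove both directions directly. For \NE{} $\Rightarrow$ EDT-\NE{}: a \NE{} is an EDT equilibrium by \Cref{lem:EQ hierarchy}, so it remains, in each player's single-player perspective, to produce a realization-equivalent EDT-\rat{} strategy. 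I would modify the equilibrium strategy only at infosets of reach probability $0$ (preserving realization equivalence) so that, by backward induction, the resulting $\hat\pi_i$ is locally optimal at every infoset with respect to the within-infoset beliefs---global optimality of the \NE{} forces this modification to be consistent with $\hat\pi_i$'s on-path behavior---and then tremble toward $\hat\pi_i$, letting the vanishing local gains define an admissible $\varepsilon^{(k)} \to 0$. For EDT-\NE{} $\Rightarrow$ \NE{}: passing the EDT-\rat{} inequalities to the limit at each infoset with positive limiting reach shows $\hat\pi_i$ is locally optimal on-path, the one-shot deviation principle upgrades this to global optimality, and realization equivalence transfers it back to $\pi_i$.

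The crux---and the step I expect to demand the most care---is the forward direction of the third assertion (equivalently, the heart of the second): the EDT-\rat{} inequality must hold at \emph{all} infosets simultaneously along the trembling sequence, including those reached only because of the trembles. The argument hinges on a feature special to perfect recall: since all nodes of an infoset share one observation history, the conditional distribution over those nodes is fixed by chance (and the fixed opponent profile) and is \emph{invariant} under the player's own trembles. This invariance is exactly what keeps $\hat\pi_i$ near-optimal with respect to the perturbed beliefs, and it is precisely the property that breaks under imperfect recall, which is why the equivalence is confined to the perfect-recall (hence, via the second assertion, absentmindedness-free) regime.
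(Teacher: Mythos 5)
Your proposal is correct and takes essentially the same route as the paper: the first claim is definitional; the second reduces to showing the EDT-\rat{} and CDT-\rat{} trembling conditions coincide without absentmindedness (your direct argument---that $\Prob(I \mid \pi^{(k)})$ is action-independent at $I$ and the normalized deviation gain matches the CDT quantity exactly when no path meets $I$ twice---is precisely the content the paper delegates to Lemmas 13--14 of \citet{Tewolde23:Computational}, together with $\Prob(I \mid \pi) = \Fr(I \mid \pi)$); and the third chains perfect recall $\Rightarrow$ no absentmindedness through the second claim and \Cref{prop:cdt-nash}, exactly as the paper does. Your optional self-contained proof of the perfect-recall case (via belief invariance under own trembles and the one-shot deviation principle) is sound but goes beyond what the paper needs.
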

An analogous result was shown by \citet{Lambert19:Equilibria} for CDT-Nash equilibria:
\begin{prop}[\citealp{Lambert19:Equilibria}]\label{prop:cdt-nash}
    CDT-\NEs{} are CDT equilibria, and they always exist. Under perfect recall, a strategy profile is a CDT-\NE{} iff it is a \NE{} of $\Gamma$.
\end{prop}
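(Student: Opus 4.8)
The statement bundles three claims, which I would prove separately. For the first---that CDT-\NEs{} are CDT equilibria---I would pass to the limit along the witnessing trembles. By definition, in each player's single-player perspective a CDT-\NE{} is realization-equivalent to a CDT-\rat{} strategy, witnessed by a fully mixed sequence $\pi^{(k)} \to \pi$ with tolerances $\varepsilon^{(k)} \to 0$ satisfying the gradient/KKT analogue of \Cref{defn:cdt seq rat}. Since $\U_1$ is a polynomial in the behavioral probabilities and the reach probabilities $\Prob(I \mid \cdot)$ are continuous, the approximate first-order conditions at every $I$ with $\Prob(I\mid\pi^{(k)})>0$ converge, as $k\to\infty$, to exact KKT conditions at every $I$ with $\Prob(I\mid\pi)>0$; at infosets with $\Prob(I\mid\pi)=0$ the gradient of $\U_1$ with respect to $\pi(\cdot\mid I)$ vanishes, so the KKT condition is automatic. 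Hence $\pi$ is a KKT point of (\ref{eq:player BR}) for each player, i.e.\ a CDT equilibrium, and realization equivalence is harmless as it preserves $\U_1$ and all reach probabilities.

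For existence I would use a trembling-hand construction. Fix $\delta_k \downarrow 0$ and let $S^{\delta_k}$ be the compact convex set of profiles with every action probability $\ge \delta_k$. For each $k$, a simultaneous stationary profile $\pi^{(k)} \in S^{\delta_k}$ (each player a first-order stationary point of $\U_i(\cdot,\pi^{(k)}_{-i})$ over $S_i^{\delta_k}$) exists by a standard variational-inequality argument---the operator $\pi \mapsto (\nabla_{\pi_i}\U_i(\pi))_i$ is continuous, so the projected-gradient map on $S^{\delta_k}$ has a Brouwer fixed point (for a single player this is simply a maximizer of $\U_1$ over $S^{\delta_k}$). Passing to a convergent subsequence $\pi^{(k)}\to\pi$, I would show $(\pi^{(k)},\varepsilon^{(k)})$ witnesses that each player's strategy is CDT-\rat{}, and conclude via the first claim that $\pi$ is a CDT-\NE{}. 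The crucial point is controlling the tolerance: a priori $\Prob(I\mid\pi^{(k)})$ can be as small as a positive power of $\delta_k$, so bounding the unnormalized first-order gap by $O(\delta_k)$ would not survive division by $\Prob(I\mid\pi^{(k)})$. The resolution---and the reason the factor $1/\Prob(I\mid\pi^{(k)})$ appears in the definition---is that the gradient itself carries this factor: since every terminal affected by the action at $I$ is reached only through the first visit of $I$, one has $\|\nabla_{\pi(\cdot\mid I)}\U_1(\pi^{(k)})\| \le C\,\Prob(I\mid\pi^{(k)})$ for a constant $C$ depending only on the utilities and the degree of absentmindedness. Feasible-direction optimality of $\pi^{(k)}$ in $S^{\delta_k}$ (comparing a target $\sigma$ with an interior-perturbed $\sigma'$ at $O(\delta_k)$ distance, and absorbing the error from $\pi^{(k)}_{-i}\to\pi_{-i}$) then bounds the normalized gap by $O(\delta_k)=:\varepsilon^{(k)}\to 0$.

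For the perfect-recall characterization, I note first a short route: under perfect recall there is no absentmindedness, so the ``without absentmindedness'' clause of \Cref{prop:edt-nash} gives CDT-\NE{} $\Leftrightarrow$ EDT-\NE{}, and its ``under perfect recall'' clause gives EDT-\NE{} $\Leftrightarrow$ \NE{}, whence CDT-\NE{} $\Leftrightarrow$ \NE{}. Since this is \citet{Lambert19:Equilibria}'s result I would also give the direct argument, whose essential leverage over a plain CDT equilibrium is that the fully mixed trembles force first-order optimality at \emph{every} infoset---including off-path ones, the source of the pathology in \Cref{ex:refinement_needed}---not only at reached infosets. For ($\Leftarrow$): given a \NE{} $\pi$ (a global maximizer in the single-player view), replace each $\pi_i$ by a realization-equivalent $\pi_i'$ that also best-responds at off-path infosets; this preserves utilities, and since $\pi_i'$ is optimal at every infoset, a slightly trembled sequence placing mass $1-O(\delta_k)$ on optimal actions satisfies the first-order condition with $\varepsilon^{(k)}=O(\delta_k)$, exhibiting $\pi_i'$ as CDT-\rat{}; with $\pi$ a CDT equilibrium (\Cref{lem:EQ hierarchy}), $\pi$ is a CDT-\NE{}. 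For ($\Rightarrow$): the witnessing CDT-\rat{} representatives satisfy, in the limit, first-order (hence, by multilinearity under perfect recall, exact best-response) optimality at every infoset, and backward induction upgrades per-infoset optimality to global optimality, so each player best-responds and $\pi$ is a \NE{}.

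I expect existence to be the main obstacle, for two reasons. First, the simultaneous perturbed stationary profile genuinely requires a fixed-point/variational-inequality argument rather than a single maximization. Second, driving $\varepsilon^{(k)}\to 0$ rests entirely on the reach-probability cancellation in the gradient bound above, which must be verified precisely in the absentminded case, where $\U_1$ is a higher-degree polynomial in a single infoset's action probabilities and the constant $C$ grows with the degree of absentmindedness.
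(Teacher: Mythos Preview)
The paper does not supply its own proof of \Cref{prop:cdt-nash}: it is stated as a result of \citet{Lambert19:Equilibria} and simply cited, with no argument given in the body or the appendix. So there is no in-paper proof to compare your proposal against.

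Two remarks on your proposal are still worth making. First, the claim that CDT-\NEs{} are CDT equilibria is immediate from the definition: \Cref{appdefn:cdt nash} begins ``A profile $\pi$ is a CDT-\NE{} of $\Gamma$ if it is a CDT equilibrium and \dots'', so your limit argument for that part is unnecessary. Second, your ``short route'' for the perfect-recall characterization---deducing it from \Cref{prop:edt-nash}---is circular in this paper's logical structure. The appendix proof of \Cref{prop:edt-nash} explicitly invokes \Cref{prop:cdt-nash} to obtain that EDT-\NE{} coincides with \NE{} under perfect recall; hence \Cref{prop:cdt-nash} is prior to \Cref{prop:edt-nash} here, not the other way around. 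Your direct argument (a trembling-hand fixed point for existence, plus per-infoset first-order optimality at all infosets and a backward-induction upgrade under perfect recall) is a reasonable sketch of how one would establish the result from scratch, and is in the spirit of what \citet{Lambert19:Equilibria} do; but the details and the delicate normalization by $\Fr(I\mid\pi^{(k)})$ (note: the CDT definition uses visit \emph{frequency}, not reach probability $\Prob(I\mid\cdot)$, and these differ under absentmindedness) must be checked against that reference rather than the present paper.
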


The above propositions imply that in a single-player game with perfect recall, the only CDT-Nash and EDT-Nash equilibrium is the optimal strategy. Combined with \Cref{prop:1p_optimal_lowerbound}, this shows that the refinements successfully resolve the pathologies that arose with CDT/EDT.
\begin{cor}
    \label{cor:path-fixed}
    Given single-player game $\Gamma$, $\vor^{\SC}(\Gamma) \geq 1$ for $\SC \in \{\wcdtnash, \bcdtnash,$ $\wedtnash, \bedtnash\}$.
\end{cor}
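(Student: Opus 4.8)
The plan is to show that, for each of the four solution concepts, the numerator $u_1(\SC(\pr_1(\Gamma)))$ equals the optimal utility $u_1(\opt(\pr_1(\Gamma)))$, while the denominator $u_1(\SC(\Gamma))$ is at most this same quantity; the ratio is then at least $1$. The whole argument rests on the uniqueness-of-value statement recorded just before the corollary, which is precisely what the refinements buy us and what fails for unrefined CDT/EDT.

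First I would pin down the numerator. Since $\Gamma$ is a single-player game, so is $\pr_1(\Gamma)$, and by construction Player~1 has perfect recall in it. As noted via \Cref{prop:edt-nash,prop:cdt-nash}, in a single-player perfect-recall game the only CDT-Nash and EDT-Nash equilibria are optimal strategies, all attaining $u_1(\opt(\pr_1(\Gamma)))$. Existence is guaranteed so that best and worst are well-defined: CDT-Nash equilibria always exist by \Cref{prop:cdt-nash}, and under perfect recall EDT-Nash coincides with \NE{} by \Cref{prop:edt-nash}, which exists because the optimizer of a finite single-player game exists. Consequently the best and worst CDT-Nash/EDT-Nash values coincide, so the numerator equals $u_1(\opt(\pr_1(\Gamma)))$ for every $\SC \in \{\wcdtnash, \bcdtnash, \wedtnash, \bedtnash\}$.

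Next I would bound the denominator. Any EDT-Nash or CDT-Nash equilibrium of $\Gamma$ is in particular a feasible strategy of the single-player game $\Gamma$; since $\opt(\Gamma)$ is by definition the global utility maximizer over all strategies, every such equilibrium satisfies $u_1(\SC(\Gamma)) \leq u_1(\opt(\Gamma))$. Finally, \Cref{prop:1p_optimal_lowerbound} gives $\vor^{\opt}(\Gamma) \geq 1$, i.e.\ $u_1(\opt(\pr_1(\Gamma))) \geq u_1(\opt(\Gamma))$; the underlying reason is that every strategy of $\Gamma$ lifts to a realization-equivalent strategy of $\pr_1(\Gamma)$ by playing the same randomized action on each of the finer infosets that partition a given $I \in \calI_1$, so refining the information only enlarges the attainable utilities. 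Chaining these, $u_1(\SC(\Gamma)) \leq u_1(\opt(\Gamma)) \leq u_1(\opt(\pr_1(\Gamma))) = u_1(\SC(\pr_1(\Gamma)))$, hence $\vor^{\SC}(\Gamma) \geq 1$.

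The embedding of strategies and the global-optimality bound on the denominator are routine; the only real content, and the place I expect to be careful, is the numerator step, where the uniqueness of the perfect-recall equilibrium value (together with existence, to make the best/worst selectors meaningful) is what rules out the pathology of \Cref{ex:refinement_needed}. There, a suboptimal unrefined CDT/EDT equilibrium persists in $\pr_1(\Gamma)$ and drives the numerator below $u_1(\opt(\Gamma))$, yielding a ratio under $1$; the refinement excludes exactly such equilibria. I would also flag the degenerate case $u_1(\SC(\Gamma)) = 0$, in which the ratio is undefined (or read as $+\infty$), consistent with the positivity assumption used elsewhere; otherwise the inequalities above hold verbatim.
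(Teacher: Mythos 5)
Your proposal is correct and follows essentially the same route as the paper: under perfect recall, \Cref{prop:edt-nash,prop:cdt-nash} force every CDT-Nash/EDT-Nash equilibrium of $\pr_1(\Gamma)$ to be an optimal strategy (pinning the numerator at $u_1(\opt(\pr_1(\Gamma)))$), while the denominator is bounded above by $u_1(\opt(\Gamma))$, and \Cref{prop:1p_optimal_lowerbound} chains the two. Your added remarks on existence of the equilibria and the degenerate zero-denominator case are sensible bookkeeping that the paper leaves implicit.
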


\section{Bounding the Value of Recall}

In this section, we first focus on bounding $\vor^{\opt}$ for single-player games, and show that while it can be arbitrarily large in general, we can still parameterize it using properties of the game tree and the utility functions. Later on, we show how $\vor^{\SC}$ for other solution concepts can be bounded in conjunction with these parametrizations.

A key observation is that in single-player games, there are exactly two factors that can lead to a change in the optimal utility when perfect recall is introduced: (1) absentmindedness, and (2) chance nodes. Indeed, if neither is present, the optimal utility remains unchanged.

\begin{restatable}{prop}{optone}\label{prop:1p_vor1}
    For a single-player game $\Gamma$ with no chance nodes and with no absentmindedness, $\vor^{\opt}(\Gamma)=1$. Further, for both $\Gamma$ and $\pr_1(\Gamma)$, there is a \emph{pure} optimal strategy.
\end{restatable}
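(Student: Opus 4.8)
The plan is to prove the stronger statement that the optimal utility of both $\Gamma$ and $\pr_1(\Gamma)$ equals $u^\star \coloneqq \max_{z \in \calZ} u_1(z)$, attained by a pure strategy in each. The upper bound is immediate and holds in any single-player game: since play from the root reaches some terminal with probability $1$, the expected utility $\U_1(\pi) = \sum_{z \in \calZ} \Prob(z \mid \pi) \, u_1(z)$ is a convex combination of leaf utilities, hence at most $u^\star$. Because $\Gamma$ and $\pr_1(\Gamma)$ share the same game tree and utilities, this bound of $u^\star$ applies to both. So it suffices to exhibit, in each game, a pure strategy attaining $u^\star$. I would observe that both games qualify as single-player games with no chance nodes (same tree) and no absentmindedness: refining the infosets of $\Gamma$ can only split infosets, so any root-to-leaf path that visited each infoset at most once in $\Gamma$ still does so in $\pr_1(\Gamma)$. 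It therefore suffices to prove a single claim: \emph{any} single-player game with no chance nodes and no absentmindedness admits a pure strategy achieving $u^\star$.

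To prove the claim, fix an optimal leaf $z^\star \in \argmax_{z \in \calZ} u_1(z)$ and consider the root-to-$z^\star$ path, with nonterminal nodes $h_0, h_1, \dots, h_{m-1}$ (the root being $h_0$), where $a_k \in A_{h_k}$ is the action leading from $h_k$ toward $z^\star$. I define a pure strategy by setting, at the infoset containing each path node $h_k$, the action $a_k$, and choosing arbitrary pure actions elsewhere. The crucial point is that this is a well-defined strategy: by the absence of absentmindedness, no infoset is visited twice along the path, so $h_0, \dots, h_{m-1}$ lie in pairwise distinct infosets and no infoset is assigned two conflicting actions. Moreover, since there are no chance nodes, every $h_k$ is a decision node of Player $1$, so under this strategy each $a_k$ is played with probability $1$; hence $\Prob(z^\star \mid \pi) = 1$ and the strategy attains $\U_1(\pi) = u_1(z^\star) = u^\star$.

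Combining the two directions, the optimal utility of $\Gamma$ equals $u^\star$, witnessed by the pure strategy just constructed; applying the same construction to $\pr_1(\Gamma)$ (which also qualifies) shows its optimal utility is likewise $u^\star$ via a pure strategy. Therefore $\vor^{\opt}(\Gamma) = u^\star / u^\star = 1$, and both games possess pure optimal strategies, as required.

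The entire substance lies in the construction of the path-following strategy, where the two hypotheses play distinct and essential roles, and this is the step I expect to require the most care. The no-absentmindedness hypothesis is exactly what guarantees that ``follow the path to $z^\star$'' specifies a \emph{consistent} pure behavioral strategy — otherwise a single infoset appearing twice on the path could be forced to prescribe two different actions — while the absence of chance nodes is precisely what lets Player $1$ drive the realized play down that path with probability $1$. The remaining pieces (the convex-combination upper bound, the observation that refinement preserves the no-chance and no-absentmindedness properties, and the final ratio) are routine.
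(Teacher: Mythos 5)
Your proof is correct and follows essentially the same route as the paper: construct the pure strategy that follows the root-to-$z^\star$ path, using no absentmindedness to ensure each infoset on the path receives a single consistent action and the absence of chance nodes to reach $z^\star$ with probability $1$, then note that $\max_{z \in \calZ} u_1(z)$ upper-bounds the value of both $\Gamma$ and $\pr_1(\Gamma)$. You merely spell out two steps the paper leaves implicit (the convex-combination upper bound and the fact that refinement preserves the no-chance and no-absentmindedness hypotheses), which is fine.
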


As we will show, either absentmindedness or chance nodes is sufficient to have a game with $\vor^{\opt}(\Gamma)>1$. We first deal with each of these cases separately, before moving on to games that exhibit both.

\paragraph{VoR due to absentmindedness}

To bound the impact of absentmindedness, we first parameterize the number of times an infoset is visited and an action is taken on the way to a leaf node. Given a single-player game $\Gamma$, for any $z\in \calZ$ with $\obs(z)=(i_k, I_k,a_k)_{k=0}^{\depth(z)-1}$, for each $I \in \calI_1$ and $a \in A_I$ let $n_z(I)=|\{k: I_k=I\}|$, $n_z(a)=|\{k: a_k=a\}|$, and $p_z(a) = \frac{n_z(a)}{n_z(I)}$. Then, we define
\begin{align*} 
    \am(z) = \prod_{\substack{{I \in \calI_1: n_z(I)>1}  \\{a \in A_I: n_z(a)>0}}} p_z(a)^{n_z(a)} \quad \in (0,1]
\end{align*}
to be the \emph{absentmindedness coefficient} of $z$. Intuitively, it describes how easy it is to reach $z$ under absentmindedness:

\begin{restatable}{lemma}{eminne}\label{lemma:am_strategy}
    Given single-player game $\Gamma$ with no chance nodes, for all $z \in \calZ$, there exists a strategy $\pi_z$ that reaches $z$ with probability $\am(z)$, achieving $u_1(\pi_z) \geq \am(z)u_1(z)$. 
\end{restatable}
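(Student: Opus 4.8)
The plan is to build the strategy $\pi_z$ by reverse-engineering the reach probability of $z$, crucially exploiting the absence of chance nodes so that $\Prob(z \mid \pi)$ is determined entirely by Player~1's own action probabilities.

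First I would write the reach probability in factorized form. With no chance nodes, every node on the path to $z$ is a decision node of Player~1, so $\Prob(z \mid \pi) = \prod_{k=0}^{\depth(z)-1} \pi(a_k \mid I_k)$ for any behavioral strategy $\pi$. I would then regroup this product by infoset. The key point --- and the only place absentmindedness enters --- is that Player~1 must use the \emph{same} randomized action $\pi(\cdot \mid I)$ at every one of the $n_z(I)$ visits to an infoset $I$ along the path; since action $a \in A_I$ is the path action at exactly $n_z(a)$ of those visits, infoset $I$ contributes the factor $\prod_{a \in A_I} \pi(a \mid I)^{n_z(a)}$. Hence $\Prob(z \mid \pi) = \prod_{I : n_z(I) > 0} \prod_{a \in A_I} \pi(a \mid I)^{n_z(a)}$.

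Next I would define $\pi_z$ by playing $\pi_z(a \mid I) = p_z(a) = n_z(a)/n_z(I)$ at each infoset $I$ with $n_z(I) > 0$, and arbitrarily at all remaining infosets (these lie off the path to $z$, hence do not affect $\Prob(z \mid \pi_z)$). I would check that this is a genuine distribution on $A_I$ using $\sum_{a \in A_I} n_z(a) = n_z(I)$, which holds because each of the $n_z(I)$ visits to $I$ takes exactly one action. Substituting into the factorized formula gives $\Prob(z \mid \pi_z) = \prod_{I : n_z(I) > 0} \prod_{a \in A_I} p_z(a)^{n_z(a)}$. For any infoset with $n_z(I) = 1$, the unique path action $a$ satisfies $p_z(a) = 1$, so such infosets contribute a factor of $1$ and drop out; the product then collapses to exactly the range appearing in the definition of $\am(z)$ (infosets with $n_z(I) > 1$), giving $\Prob(z \mid \pi_z) = \am(z)$.

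Finally, the utility bound is immediate from the standing assumption that utilities are nonnegative: since $u_1(\pi_z) = \sum_{z' \in \calZ} \Prob(z' \mid \pi_z)\, u_1(z') \geq \Prob(z \mid \pi_z)\, u_1(z)$, we conclude $u_1(\pi_z) \geq \am(z)\, u_1(z)$. I expect no serious obstacle; the only step requiring care is the regrouping of the path probability by infoset, where one must correctly account for absentmindedness forcing a single action distribution across repeated visits --- this is precisely what produces the exponents $n_z(a)$ and the power-product shape of $\am(z)$. (As a remark, one can verify via AM--GM that this $\pi_z$ in fact \emph{maximizes} $\Prob(z \mid \pi)$, so $\am(z)$ is the largest attainable reach probability of $z$, matching the ``how easy it is to reach $z$'' interpretation; but only existence is needed here.)
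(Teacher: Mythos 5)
Your proposal is correct and follows essentially the same route as the paper's proof: define $\pi_z(a \mid I) = p_z(a)$ on the infosets along the path (arbitrary elsewhere), verify it is a valid distribution via $\sum_{a \in A_I} n_z(a) = n_z(I)$, factorize $\Prob(z \mid \pi_z)$ over the path and regroup by infoset to obtain $\am(z)$, and conclude via nonnegativity of utilities. Your added observations---that singly-visited infosets contribute a factor of $1$ and that $\pi_z$ is in fact reach-probability maximizing by AM--GM---are correct refinements the paper leaves implicit or omits, but the core argument is identical.
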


We are now ready to introduce our upper bounds for $\vor$ in terms of the absentmindedness coefficients:  
\begin{restatable}{prop}{ambound}\label{prop:1p_am}
    In a single-player game $\Gamma$ without chance nodes, we have
    \begin{align*}
        \vor^{\opt}(\Gamma) &\leq \frac{\max_{z\in \calZ}u_1(z)}{\max_{z \in \calZ} \am(z)u_1(z)}  \leq \frac{1}{\am(z^*)}
    \end{align*}
    where $z^* = \argmax_{z \in \calZ} u_1(z)$.
\end{restatable}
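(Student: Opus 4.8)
The plan is to compute the numerator $u_1(\opt(\pr_1(\Gamma)))$ exactly and to lower-bound the denominator $u_1(\opt(\Gamma))$ with the reachability guarantee already established, then combine the two and finish with one line of algebra for the second inequality.

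For the numerator, I would argue $u_1(\opt(\pr_1(\Gamma))) = \max_{z \in \calZ} u_1(z)$. The bound ``$\le$'' is immediate, since no strategy can earn more than the most valuable leaf. For the matching ``$\ge$'', observe that $\pr_1(\Gamma)$ shares the (chance-free) game tree of $\Gamma$ and, by \Cref{prop:coarsest}, gives Player~1 perfect recall. Perfect recall precludes absentmindedness: if some infoset $I$ occurred twice along the path to a node, those two occurrences would be nodes of $I$ whose observation histories have different lengths, contradicting the requirement that all nodes of $I$ share the same $\obs_1$. Hence along the path to \emph{any} leaf $z$ each infoset is visited at most once, so the pure strategy that plays the path action at each such infoset (and acts arbitrarily elsewhere) reaches $z$ with probability~$1$ in the absence of chance. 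Taking $z$ to maximize $u_1$ gives $u_1(\opt(\pr_1(\Gamma))) \ge \max_{z \in \calZ} u_1(z)$, hence equality; this is exactly the reachability reasoning behind \Cref{prop:1p_vor1}, now applied to the perfect-recall game $\pr_1(\Gamma)$.

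For the denominator, \Cref{lemma:am_strategy} supplies, for every $z \in \calZ$, a strategy $\pi_z$ in $\Gamma$ with $u_1(\pi_z) \ge \am(z)\, u_1(z)$, so optimizing over these yields $u_1(\opt(\Gamma)) \ge \max_{z \in \calZ} \am(z)\, u_1(z)$. Dividing the numerator by this lower bound gives the first claimed inequality,
\[
\vor^{\opt}(\Gamma) = \frac{\max_{z \in \calZ} u_1(z)}{u_1(\opt(\Gamma))} \le \frac{\max_{z \in \calZ} u_1(z)}{\max_{z \in \calZ} \am(z)\, u_1(z)}.
\]
The second inequality follows by specializing the denominator's maximum to $z = z^*$: since $u_1(z^*) = \max_{z} u_1(z)$, we have $\max_{z} \am(z)\, u_1(z) \ge \am(z^*)\, u_1(z^*) = \am(z^*)\max_{z} u_1(z)$, and substituting cancels the common $\max_z u_1(z)$ factor, leaving $1/\am(z^*)$.

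The only genuinely delicate step is the numerator evaluation: the point that perfect recall forbids absentmindedness on any root-to-leaf path is what lets the refined player deterministically steer to the best leaf. I would take care to verify that this argument does \emph{not} covertly assume perfect information---it does not, since the distinct infosets on the path may remain coarse, but each is visited exactly once, so their action choices never conflict. Everything else reduces to invoking \Cref{lemma:am_strategy} and the specialization to $z^*$.
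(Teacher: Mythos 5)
Your proof is correct and takes essentially the same route as the paper's: evaluate the numerator as $\max_{z \in \calZ} u_1(z)$ because the chance-free perfect-recall refinement lets the player steer deterministically to any leaf, lower-bound the denominator via \Cref{lemma:am_strategy}, and specialize to $z^*$ for the second inequality. The only cosmetic difference is that the paper obtains the reachability step by citing \Cref{prop:1p_vor1} (applied to $\pr_1(\Gamma)$), whereas you re-derive it inline, including the correct observation that perfect recall rules out absentmindedness.
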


As we see next, the inequalities in \Cref{prop:1p_am} are tight.

\begin{figure} 
    \tikzset{
        every path/.style={-},
        every node/.style={draw},
    }
    \forestset{
    subgame/.style={regular polygon,
    regular polygon sides=3,anchor=center, inner sep=1pt},
    }
    \begin{minipage}{.15\linewidth}
    \,
    \end{minipage}%
    \begin{minipage}{.35\linewidth}
     \centering
     \begin{subfigure}{\textwidth}

    \begin{forest}
        [,p1,name=p1a,s sep=25pt,l sep=21pt
            [,p1,name=p1b,el={1}{L}{},s sep=25pt,l sep=21pt
                [,p1,name=p1c,el={1}{L}{},s sep=25pt,l sep=21pt
                    [\util1{0},terminal,el={1}{L}{},yshift=-3.3pt]
                    [,p1,name=p1d,el={1}{R}{},s sep=25pt,l sep=21pt
                        [\util1{0},terminal, el={1}{L}{},yshift=-3.3pt]
                        [\util1{1},terminal, el={1}{R}{},yshift=-3.3pt]
                    ]
                ]
                [\util1{0},terminal,el={1}{R}{},yshift=-3.3pt]
            ]     
            [\util1{0},terminal,el={1}{R}{},yshift=-3.3pt]
        ]
        \draw[infoset1] (p1a) to [bend right=90] (p1b) to[bend right=90] (p1c) to[bend left=90] (p1d);
    \end{forest}
    \end{subfigure}
    \end{minipage}%
    \begin{minipage}{.35\linewidth}
      \centering
      \begin{subfigure}{\textwidth}
        \begin{forest}
        [,nat,s sep=5pt,el={2}{cont}{}
            [,p1,name=p1a,s sep=10pt,l sep=20pt, el={0}{a}{}
                [,p1,name=p1b,s sep=10pt,l sep=20pt, el={1}{a}{},yshift=-2.8pt
                    [\util1{1},terminal,el={1}{a}{},yshift=-6.3pt]
                    [\util1{0},terminal,el={1}{b}{},yshift=-6.3pt]
                ]
                [,p1,name=p1c,s sep=10pt,l sep=20pt, el={1}{b}{},yshift=-2.8pt
                    [\util1{0},terminal,el={1}{a}{},yshift=-6.3pt]
                    [\util1{0},terminal,el={1}{b}{},yshift=-6.3pt]
                ]
            ]
            [,p1,name=p1d,s sep=10pt,l sep=20pt, el={0}{b}{}
                [,p1,name=p1e,s sep=10pt,l sep=20pt, el={1}{a}{},yshift=-2.8pt
                    [\util1{0},terminal,el={1}{a}{},yshift=-6.3pt]
                    [\util1{0},terminal,el={1}{b}{},yshift=-6.3pt]
                ]
                [,p1,name=p1f,s sep=10pt,l sep=20pt, el={1}{b}{},yshift=-2.8pt
                    [\util1{0},terminal,el={1}{a}{},yshift=-6.3pt]
                    [\util1{1},terminal,el={1}{b}{},yshift=-6.3pt]
                ]
            ]
        ]   
        \draw[infoset1] (p1b) -- (p1c) -- (p1e) -- (p1f);
      \end{forest}
    \end{subfigure}
    \end{minipage}%
    \caption{(Left) \Cref{ex:1am_tight}, $n=4$. (Right) Ex.~\ref{ex:1chance_tight}, $n=2$}
    \label{fig:tight games}
\end{figure}

\begin{ex}\label{ex:1am_tight}
    Consider a single-player game $\Gamma$ where Lenny needs to pick between the action $\blueL$ and the action $\blueR$ for $n$ consecutive rounds for some even $n$. He gets utility 1 if he first plays $\blueL$ exactly $n/2$ times followed by $\blueR$ the remaining $n/2$ times. If he does anything else, the game is over and he gets 0 utility. Moreover, his memory is reset each time.
    
    The game tree of $\Gamma$ has $n$ nodes, $n+1$ leaves, and a single infoset $I$. \Cref{fig:tight games}(Left) depicts this for $n=4$. Let $z^*$ be the single leaf node with $u_1(z^*)=1$. The optimal strategy in $\pr_1(\Gamma)$ (where each node is its own infoset) is to arrive at $z^*$, guaranteeing a utility of 1. In $\Gamma$, however, Lenny cannot do anything better than playing uniformly at random, achieving an expected utility of $2^{-n}$. Moreover, $\am(z^*)=2^{-n}$. Hence, for $\Gamma$, all of the inequalities in \Cref{prop:1p_am} are tight.
 \end{ex}

\Cref{ex:1am_tight} is the worst-case scenario with regard to absentmindedness: only one leaf node brings positive utility, and reaching it requires playing each action equally often.

Importantly, $\am(z)$ is independent of the utilities of $\Gamma$. This allows us to interpret \Cref{prop:1p_am} in two parts: a tighter bound on $\Gamma$ using its utilities, and another bound that applies to all games that differ from $\Gamma$ only by their utility functions.

\begin{restatable}{cor}{amclass}\label{cor:1pm_am}
Given a single-player game $\Gamma$ without chance nodes, say $\class$ is the class of games that share the same game tree and infoset partition as $\Gamma$. Then
\begin{align*}
    \vor^{\opt}(\class) = \max_{z \in \calZ} \frac{1}{\am(z)}.
\end{align*}
\end{restatable}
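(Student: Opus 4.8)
The plan is to prove the two inequalities $\vor^{\opt}(\class) \le \max_{z} 1/\am(z)$ and $\vor^{\opt}(\class) \ge \max_z 1/\am(z)$ separately, exploiting the fact---stressed just before the statement---that the coefficient $\am(z)$ depends only on the game tree and infoset partition shared by all members of $\class$, and is therefore a constant across the class. Throughout I write $\hat z \in \argmin_{z \in \calZ} \am(z)$, so that $\max_z 1/\am(z) = 1/\am(\hat z)$.

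For the upper bound I would simply invoke \Cref{prop:1p_am}. Every $\Gamma' \in \class$ has no chance nodes, so the proposition applies and yields $\vor^{\opt}(\Gamma') \le 1/\am(z^*_{\Gamma'})$, where $z^*_{\Gamma'}$ is a utility-maximizing leaf of $\Gamma'$. Since $z^*_{\Gamma'}$ is a particular terminal, $\am(z^*_{\Gamma'}) \ge \am(\hat z)$, hence $\vor^{\opt}(\Gamma') \le 1/\am(\hat z)$; taking the supremum over $\Gamma' \in \class$ gives $\vor^{\opt}(\class) \le 1/\am(\hat z)$.

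For the lower bound I would construct a single witness game $\Gamma^\dagger \in \class$: keep the common tree and infoset partition, and set $u_1(\hat z) = 1$ and $u_1(z) = 0$ for all $z \ne \hat z$. The numerator of the VoR is the optimal utility in $\pr_1(\Gamma^\dagger)$. By \Cref{prop:coarsest} the refined game has perfect recall, which precludes absentmindedness (an infoset repeated on a root-to-leaf path would place two nodes with nested, hence unequal, observation histories in one infoset); as there are also no chance nodes, a pure strategy following the path to $\hat z$ reaches it with probability $1$, so $u_1(\opt(\pr_1(\Gamma^\dagger))) = 1$. The denominator is $u_1(\opt(\Gamma^\dagger)) = \max_\pi \Prob(\hat z \mid \pi)$, since only $\hat z$ carries positive utility.

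The crux is therefore to evaluate the maximum reach probability of a leaf. \Cref{lemma:am_strategy} already gives $\max_\pi \Prob(z \mid \pi) \ge \am(z)$; the remaining and main obstacle is the matching upper bound $\max_\pi \Prob(z \mid \pi) \le \am(z)$. Here I would observe that $\Prob(z \mid \pi)$ factorizes over the infosets on the path to $z$ as $\prod_{I} \prod_{a \in A_I} \pi(a \mid I)^{n_z(a)}$, because a behavioral strategy reuses the same distribution at each visit to $I$. Each factor can be optimized independently over $\pi(\cdot \mid I) \in \Delta(A_I)$; by weighted AM--GM (equivalently, a Lagrange-multiplier computation) the maximum of $\prod_a \delta_a^{n_z(a)}$ is attained at $\delta_a = n_z(a)/n_z(I) = p_z(a)$, with value $\prod_a p_z(a)^{n_z(a)}$---which equals $1$ for any singly-visited infoset ($n_z(I)=1$) and thus recovers exactly $\am(z)$ after taking the product. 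Applying this with $z = \hat z$ gives $u_1(\opt(\Gamma^\dagger)) = \am(\hat z)$, whence $\vor^{\opt}(\Gamma^\dagger) = 1/\am(\hat z)$ and $\vor^{\opt}(\class) \ge 1/\am(\hat z)$, closing the two bounds.
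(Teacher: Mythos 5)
Your proposal is correct and follows essentially the same route as the paper: the upper bound via \Cref{prop:1p_am} and the lower bound via a witness game in $\class$ placing utility $1$ on a leaf minimizing $\am(z)$ and $0$ elsewhere. The only difference is that the paper simply asserts $\vor^{\opt}(\Gamma')=1/\am(z^*)$ for the witness, whereas you explicitly verify the implicit step---that the maximal reach probability of a leaf factorizes over infosets and, by weighted AM--GM, equals exactly $\am(z)$---which is a faithful elaboration rather than a different argument.
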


\paragraph{VoR due to chance nodes} We now do a similar analysis for chance nodes. Given a single-player game $\Gamma$, for any $z\in \calZ$ with $\obs(z)=(i_k, I_k,a_k)_{k=0}^{\depth(z)-1}$, say $k_1,k_2, \ldots k_\ell$ are steps that correspond to chance nodes, \emph{i.e.}, $i_{k_j} =c$ for all $j \in [\ell]$. Then, the \emph{chance coefficient} of leaf node $z$ is
\begin{align*} 
    \chance(z) = \prod_{j=1}^\ell \Prob_c(a_{k_j} | h_{k_j})
\end{align*}
if $\ell>0$  and $\chance(z)=1$ otherwise. $\chance (z)$ is the probability of reaching $z$ in $\pr_1(\Gamma)$ (\emph{i.e.}, under perfect recall), given that the player is trying to reach it. For each chance node $h \in \calH_c$, and each $a\in A_h$, say $H_{ha} \subset \calH_c$ are the chance nodes in the subtree rooted at $ha$ (the node reached when chance plays $a$ at $h$). Then the \emph{branching factor} of $h$ is $\branch(h)=\sum_{a \in A_h}b_h(a)$, where
\begin{align*}
b_h(a) = \begin{cases} 1 & \text{if }|H_{ha}|=0\\ \max_{h \in H_{ha}}\beta(h)& \text{otherwise}
\end{cases}.
\end{align*}
One can compute $\branch(h)$ for each $h\in \calH_c$ recursively, starting from the bottom of the tree (that is, the leaf nodes). We now have all the tools we need for characterizing the impact of chance nodes on VoR.
\begin{restatable}{prop}{chancebound}\label{prop:1p_chance}
    In a single-player game $\Gamma$ without absentmindedness, we have\footnote{By convention, we assume $\max_{h \in \calH_c} \beta(h)=1$ if $\calH_c =\emptyset.$}
    \begin{align*}
        \vor^{\opt}(\Gamma) \leq \frac{u_1(\opt(\pr_1(\Gamma)))}{\max_{z \in \calZ} \chance(z)u_1(z)}  \leq \max_{h \in \calH_c} \beta(h).
    \end{align*}
\end{restatable}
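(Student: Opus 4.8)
The bound consists of two inequalities, which I would establish separately. For the left inequality it suffices, after dividing $u_1(\opt(\pr_1(\Gamma)))$ by both sides, to show that the optimum of $\Gamma$ itself satisfies $u_1(\opt(\Gamma)) \ge \max_{z\in\calZ}\chance(z)u_1(z)$; this is the chance-analogue of \Cref{lemma:am_strategy}. Fix a leaf $z$ with $\obs(z)=(i_k,I_k,a_k)_{k=0}^{\depth(z)-1}$ and let $\pi_z$ be the strategy that, at every infoset $I$ containing a node of $\seq(z)$, deterministically plays the action of $z$'s path, and plays arbitrarily elsewhere. Because $\Gamma$ has no absentmindedness, each infoset appears at most once in $\obs(z)$, so these commitments are consistent and well defined. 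Under $\pi_z$ every player action on the path to $z$ is taken with probability $1$, so $\Prob(z\mid\pi_z)$ equals the product of the chance probabilities along the path, i.e. $\chance(z)$. Since all utilities are nonnegative, $u_1(\pi_z)=\sum_{z'}\Prob(z'\mid\pi_z)u_1(z')\ge \chance(z)u_1(z)$, and optimizing over $z$ gives the claim.

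For the right inequality I would first replace perfect recall by the (weakly larger) perfect-information optimum. Since the perfect-information game (every node its own infoset) is an all-player refinement of $\pr_1(\Gamma)$, expanding the strategy set gives $u_1(\opt(\pr_1(\Gamma)))\le V(h_0)$, where $V(h)$ is the backward-induction value of the subtree at $h$: $V(h)=u_1(h)$ at leaves, $V(h)=\max_{a\in A_h}V(ha)$ at player nodes, and $V(h)=\sum_{a\in A_h}\Prob_c(a\mid h)V(ha)$ at chance nodes. In parallel, define $W(h)=\max_{z}\chance_h(z)u_1(z)$, the best chance-weighted leaf utility reachable from $h$ (with $\chance_h(z)$ the product of chance probabilities on the path from $h$ to $z$), so that $W(h_0)=\max_{z}\chance(z)u_1(z)$, and define $\gamma(h)$ by $\gamma(h)=1$ at leaves, $\gamma(h)=\max_{a}\gamma(ha)$ at player nodes, and $\gamma(h)=\sum_a\gamma(ha)$ at chance nodes. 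I would then prove $V(h)\le\gamma(h)W(h)$ by induction up the tree. The leaf and player-node cases are immediate from $W(h)=\max_a W(ha)$; the crux is the chance node, where $W(h)=\max_a\Prob_c(a\mid h)W(ha)$ yields $\Prob_c(a\mid h)W(ha)\le W(h)$, hence $V(h)=\sum_a\Prob_c(a\mid h)\gamma(ha)W(ha)\le\big(\sum_a\gamma(ha)\big)W(h)=\gamma(h)W(h)$.

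It remains to identify $\gamma(h_0)$ with $\max_{h\in\calH_c}\beta(h)$. I would show by a short induction that for every node $x$, $\gamma(x)=\max\big(\{1\}\cup\{\beta(h'):h'\in H_x\}\big)$, where $H_x$ is the set of chance nodes in the subtree rooted at $x$. The leaf and player cases follow since $\max$ distributes over the union of subtrees; for a chance node $x$ one checks $\gamma(xa)=b_x(a)$ for each action $a$ (both equal $\max(\{1\}\cup\{\beta(h'):h'\in H_{xa}\})$, using $\beta(h')\ge 1$ for chance nodes), whence $\gamma(x)=\sum_a\gamma(xa)=\sum_a b_x(a)=\beta(x)$. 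Applying the characterization at the root gives $\gamma(h_0)=\max_{h\in\calH_c}\beta(h)$ (equal to $1$ when $\calH_c=\emptyset$, matching the stated convention). Combining, $u_1(\opt(\pr_1(\Gamma)))\le V(h_0)\le\gamma(h_0)W(h_0)=\max_{h\in\calH_c}\beta(h)\cdot\max_z\chance(z)u_1(z)$, which is the right inequality after dividing (note $\max_z\chance(z)u_1(z)>0$ whenever some utility is positive, as every $\chance(z)>0$).

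The main obstacle is the right inequality, and within it the chance-node step of the induction $V(h)\le\gamma(h)W(h)$: one must convert the \emph{expectation} $\sum_a\Prob_c(a\mid h)V(ha)$ into a \emph{sum} of per-branch quantities each bounded by $W(h)$, which is exactly where the additive branching factor at chance nodes arises and where the bound becomes lossy. Two secondary points require care. First, one should upper-bound the perfect-recall optimum by the perfect-information optimum rather than equate them (the inequality is one-directional and only valid in this direction), since $\pr_1(\Gamma)$ may retain imperfect information. Second, one must match the recursively defined $\gamma$ to the paper's $\beta$; this is clean because $b_h(a)$ is itself defined as a maximum of $\beta$ over \emph{all} chance nodes in the subtree below $ha$, which coincides exactly with the set appearing in the characterization of $\gamma$, so no separate argument about which chance node attains the maximum is needed.
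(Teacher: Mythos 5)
Your proof is correct, and while your first inequality matches the paper's argument exactly (the deterministic path-following strategy $\pi_z$, well-defined thanks to no absentmindedness, reaching $z$ with probability $\chance(z)$), your second inequality takes a genuinely different route. The paper fixes a \emph{pure} optimal strategy $\pi^*$ of $\pr_1(\Gamma)$ (whose existence requires \Cref{lemma:pure_optimal}, due to \citealp{Koller92:Complexity}), shows via \Cref{lemma:pure_chance} that the leaves it reaches have reach probabilities exactly $\chance(z_i)$ summing to $1$, applies an averaging argument to get $u_1(\pi^*) \leq \ell \cdot \max_z \chance(z)u_1(z)$ where $\ell$ is the number of reached leaves, and then invokes the combinatorial \Cref{lemma:betabound} to bound $\ell$ by $\branch(h)$ of the first chance node $\pi^*$ enters. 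You instead relax to the perfect-information backward-induction value $V(h_0) \geq u_1(\opt(\pr_1(\Gamma)))$ and run a structural induction $V(h) \leq \gamma(h)W(h)$, where the additive recursion $\gamma(h) = \sum_a \gamma(ha)$ at chance nodes is precisely what converts the expectation into a sum of per-branch terms each bounded by $W(h)$; your identification $\gamma(h_0) = \max_{h \in \calH_c}\beta(h)$ is also verified correctly (using $\beta(h') \geq 1$ to match $\gamma(xa)$ with $b_x(a)$). Your version is more self-contained—it needs neither the existence of pure optimal strategies nor the realization decomposition of \Cref{lemma:pure_chance}—and handles the $\calH_c = \emptyset$ convention cleanly; it is mildly lossier in passing through $V(h_0)$, which can strictly exceed the perfect-recall optimum, but this does not matter since you still bound the stated middle quantity. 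Conversely, the paper's strategy-adaptive argument yields the finer per-instance fact that $\vor^{\opt}(\Gamma) \leq \ell \leq \beta(h)$ for the \emph{first} chance node actually entered by an optimal pure strategy, which can be tighter than the global maximum over $\calH_c$.
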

\begin{cor}
Given a single-player game $\Gamma$ without absentmindedness, say $\class$ is the class of games that share the same game tree and infoset partition as $\Gamma$. Then
\begin{align*}
    \vor^{\opt}(\class)  \leq \max_{h \in \calH_c}\beta(h).
\end{align*}
\end{cor}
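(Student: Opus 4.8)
The plan is to reduce the class-level bound to a pointwise application of \Cref{prop:1p_chance}, exploiting the fact that the quantity $\max_{h \in \calH_c} \beta(h)$ is an invariant of the game tree and hence takes a single common value across all of $\class$. Recall from \Cref{def:vor} that $\vor^{\opt}(\class) = \sup_{\Gamma' \in \class} \vor^{\opt}(\Gamma')$, so it suffices to establish the bound $\vor^{\opt}(\Gamma') \leq \max_{h \in \calH_c} \beta(h)$ for each individual $\Gamma' \in \class$, with a right-hand side that does not depend on the choice of $\Gamma'$.

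First I would verify that \Cref{prop:1p_chance} applies to every member of the class. By definition of $\class$, each $\Gamma' \in \class$ shares both the game tree and the infoset partition of $\Gamma$. Whether a game has absentmindedness is determined entirely by the game tree and the infoset partition---it asks whether some infoset $I$ appears more than once in $\obs(h)$ for some node $h$, a condition referencing only node-to-infoset assignments and the tree structure, not the utilities. Since $\Gamma$ is absentmindedness-free by hypothesis, so is every $\Gamma' \in \class$, and thus the hypotheses of \Cref{prop:1p_chance} are met for each.

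Next I would observe that the branching factors are tree invariants. The recursion defining $b_h(a)$ and $\beta(h)$ references only the sets $H_{ha}$ of chance nodes in the relevant subtrees and the action sets $A_h$---all part of the game tree---and in no way depends on the utility functions $u_1$. Consequently $\max_{h \in \calH_c} \beta(h)$ evaluates to the same number for every $\Gamma' \in \class$. Applying \Cref{prop:1p_chance} to each $\Gamma'$ then gives $\vor^{\opt}(\Gamma') \leq \max_{h \in \calH_c} \beta(h)$ with a common bound, and taking the supremum over $\Gamma' \in \class$ yields $\vor^{\opt}(\class) \leq \max_{h \in \calH_c} \beta(h)$, as desired.

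I do not expect a genuine obstacle here: the analytic content is entirely contained in \Cref{prop:1p_chance}, and the only work is confirming the two invariance claims above (absentmindedness-freeness and the value of $\max_{h \in \calH_c} \beta(h)$). The sole point requiring a moment's care is that the bound in \Cref{prop:1p_chance} is uniform across the class---i.e., its right-hand side is utility-independent---which is precisely what allows the pointwise bound to pass through the supremum unchanged.
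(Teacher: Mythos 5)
Your proposal is correct and matches the paper's (implicit) argument: the corollary is obtained exactly by applying \Cref{prop:1p_chance} pointwise to each member of $\class$, noting that absentmindedness-freeness and $\max_{h \in \calH_c}\beta(h)$ depend only on the shared game tree and infoset partition (not on utilities), and passing the uniform bound through the supremum. The two invariance checks you flag are precisely the only content needed, so there is nothing to add.
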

The reason we have an inequality for the game class, unlike in \Cref{cor:1pm_am}, is that while absentmindedness does imply imperfect recall, chance nodes alone do not tell us anything about the information structure of the game. We now show that the bounds in \Cref{prop:1p_chance} are also tight.

\begin{ex}\label{ex:1chance_tight}
    Consider a game $\Gamma$ that starts with a single chance node $h_c$ with $|A_{h_c}|=n$, each played with equal probability. Under each outcome, Dory needs to act twice, using the same action set as chance $A_{h_c}$, and gets utility 1 only if she replicates the action of the chance node both times, and 0 otherwise. Each of Dory's nodes immediately following the chance node is in its own information set of size 1, and every other node is in a single information set. \Cref{fig:tight games}(Right) shows the game tree for $n=2$. 

    In $\pr_1(\Gamma)$, Dory has perfect information and can guarantee utility 1. However, with imperfect recall, the best she can do is select the correct action the first time she acts, and then any strategy she will follow on the large information set will bring her expected utility $1/n$. Moreover, $\beta(h_c)=n$ and $\max_{z\in \calZ} \chi(z)u_1(z)=1/n$, showing that for $\Gamma$ all the inequalities in \Cref{prop:1p_chance} are tight.
\end{ex}

We end this section by showing that our results from Propositions \ref{prop:1p_am} and \ref{prop:1p_chance} do in fact compose, hence giving a parameterization of $\vor^\opt$ for any single-player game.

\begin{restatable}{thm}{onethm}\label{thm:1p}
    For a single-player game $\Gamma$, say $\class$ is the class of games that share the same game tree and infosets. Then
    \begin{align*}
       \vor^\opt(\class) \leq \max_{z \in \calZ, h \in \calH_c} \frac{\branch(h)}{\am(z)}.
    \end{align*}
\end{restatable}
\subsection{Smooth imperfect-recall games}
\label{sec:smoothness}

Remaining on single-player games, here we bound the value of recall for a broader set of equilibria. Our approach is driven by a connection with the \emph{price of anarchy (PoA)}. In particular, we introduce the notion of a \emph{smooth} (single-player) imperfect-recall game, which is based on the homonymous class of (multi-player) games by~\citet{Roughgarden15:Intrinsic}. Below, we denote by $(\pi_I)_{ I \in \calI_1} \in S$ the player's strategy, and use the notation $\pi_{-I} \coloneqq (\pi_{I'})_{I' \in \calI_1 \setminus \{ I \}}$.

\begin{defn}
    \label{def:smooth}
    A single-player game $\Gamma$ is \emph{$(\lambda, \mu)$-smooth} if there exists $\pistar \in S$ such that for any $\pi \in S$,
    \begin{equation}
        \label{eq:smooth}
        \frac{1}{|\calI_1|} \sum_{ I \in \calI_1} u_1(\pistar_I, \pi_{-I}) \geq \lambda u_1(\opt(\Gamma)) - \mu u_1(\pi).
    \end{equation}
\end{defn}

The rationale behind this definition is that it enables disentangling the left-hand side of~\eqref{eq:smooth} via a suitable strategy $\pistar$, with the property that if followed by each infoset separately, a non-trivial fraction of the optimal utility can be secured \emph{no matter the strategy in the rest of the infosets}. While this may appear like an overly restrictive property, it manifests itself in many important applications~\citep{Roughgarden17:Price}. In \Cref{def:smooth}, infosets play the role of strategic players in Roughgarden's formalism; we provide a concrete example of a smooth imperfect-recall game in the appendix.

Now, by definition, an EDT equilibrium $\pi$ satisfies $u_1 ( \pi ) \geq \frac{1}{|\calI_1|} \sum_{ I \in \calI_1} u_1 (\pistar_I, \pi_{-I})$ (by applying~\Cref{def:EDT} successively for each infoset). Combining with~\eqref{eq:smooth}, we immediately arrive at the following conclusion.

\begin{prop}
    \label{prop:smooth}
    Let $\Gamma$ be a $(\lambda, \mu)$-smooth, single-player game. For any EDT equilibrium $\pi \in S$,
    \[
        u_1(\pi) \geq \frac{\lambda}{1 + \mu} u_1(\opt(\Gamma)).
    \]
\end{prop}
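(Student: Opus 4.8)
The plan is to run the standard smoothness argument of \citet{Roughgarden15:Intrinsic}, treating each infoset as a ``player'' exactly as \Cref{def:smooth} intends. Fix an EDT equilibrium $\pi \in S$, and let $\pistar \in S$ be the witness strategy guaranteed by $(\lambda,\mu)$-smoothness.

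The first step is to extract, for each infoset, an inequality from the equilibrium condition. Specializing \Cref{def:EDT} to the single player and applying it at an infoset $I \in \calI_1$ with the deviation $\sigma = \pistar_I \in \Delta(A_I)$ (which is a feasible deviation since $\pistar_I$ is just $\pistar$'s randomized action at $I$), we obtain
\[
u_1(\pi) \;\geq\; u_1\bigl(\pi^{I \mapsto \pistar_I}\bigr) \;=\; u_1(\pistar_I, \pi_{-I}),
\]
where the equality records that overriding $\pi$ at $I$ by $\pistar_I$ produces precisely the profile $(\pistar_I, \pi_{-I})$. Crucially, this holds simultaneously for every $I \in \calI_1$: each is a separate unilateral deviation that leaves $\pi$ fixed at all other infosets.

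The second step is to average these $|\calI_1|$ inequalities. Since the left-hand side does not depend on $I$,
\[
u_1(\pi) \;=\; \frac{1}{|\calI_1|}\sum_{I \in \calI_1} u_1(\pi) \;\geq\; \frac{1}{|\calI_1|}\sum_{I \in \calI_1} u_1(\pistar_I, \pi_{-I}).
\]
I then invoke the smoothness inequality~\eqref{eq:smooth} with this same $\pi$ to lower-bound the right-hand side by $\lambda u_1(\opt(\Gamma)) - \mu u_1(\pi)$. Chaining the two bounds yields $u_1(\pi) \geq \lambda u_1(\opt(\Gamma)) - \mu u_1(\pi)$, and rearranging gives $(1+\mu)\,u_1(\pi) \geq \lambda u_1(\opt(\Gamma))$, i.e.\ the claimed inequality.

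I do not anticipate a genuine obstacle: the argument is a direct chain of inequalities, and the surrounding text already sketches its two halves. The only point demanding a little care is ensuring that the EDT deviation at $I$ is exactly ``play $\pistar$'s action at $I$, keep $\pi$ everywhere else,'' so that summing the per-infoset inequalities reproduces the averaged left-hand side of \Cref{def:smooth}; this is what licenses combining the equilibrium condition with the smoothness witness. Because $\pi$ is held fixed outside $I$, no subtlety concerning reach probabilities or absentmindedness enters—each inequality is a single unilateral deviation, and the EDT condition is precisely closed under such deviations.
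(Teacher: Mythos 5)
Your proof is correct and follows essentially the same route as the paper: the paper's (inline) argument likewise applies the EDT condition of \Cref{def:EDT} at each infoset $I$ with the deviation $\pistar_I$, yielding $u_1(\pi) \geq \frac{1}{|\calI_1|}\sum_{I \in \calI_1} u_1(\pistar_I, \pi_{-I})$, and then chains this with~\eqref{eq:smooth} and rearranges. Your added remark that the EDT deviation is exactly $\pi^{I \mapsto \pistar_I} = (\pistar_I, \pi_{-I})$, so the argument is unaffected by absentmindedness, is a correct and worthwhile clarification of why the step is licensed.
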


In words, $\rho \coloneqq \lambda/(1 + \mu)$ measures the degradation incurred in an EDT equilibrium, which is referred to as the \emph{robust price of anarchy} in the parlance of~\citet{Roughgarden15:Intrinsic}. In light of~\Cref{prop:smooth}, bounding $\vor^{\wedt}(\Gamma)$ reduces to relating $u_1(\opt(\Gamma))$ in terms of $u_1(\opt(\pr(\Gamma)))$, which was accomplished earlier in~\Cref{thm:1p}.

\begin{cor}
    \label{cor:smoothness}
    Let $\Gamma$ be a $(\lambda, \mu)$-smooth, single-player game. Then
    \begin{equation*}
        \vor^{\wedt} (\Gamma) \leq \frac{1 + \mu}{\lambda} \max_{ z \in \calZ, h \in \calH_c} \frac{\beta(h)}{\am(z)}.
    \end{equation*}
\end{cor}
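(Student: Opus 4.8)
The goal is to prove Corollary~\ref{cor:smoothness}, which bounds $\vor^{\wedt}(\Gamma)$ for a $(\lambda,\mu)$-smooth single-player game. The plan is to chain together two ingredients that are both already established in the excerpt: Proposition~\ref{prop:smooth}, which lower-bounds the utility of \emph{any} EDT equilibrium $\pi$ of $\Gamma$ relative to the optimal utility $u_1(\opt(\Gamma))$, and Theorem~\ref{thm:1p}, which bounds the ratio $u_1(\opt(\pr_1(\Gamma)))/u_1(\opt(\Gamma))$ by $\max_{z\in\calZ,h\in\calH_c}\beta(h)/\am(z)$. The target quantity $\vor^{\wedt}(\Gamma)$ is a ratio whose numerator is $u_1(\wedt(\pr_1(\Gamma)))$ and whose denominator is $u_1(\wedt(\Gamma))$, so the core of the argument is to bound the numerator from above and the denominator from below in a way that lets these two ingredients slot in.

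The first step is to handle the denominator. By definition, $u_1(\wedt(\Gamma))$ is the utility of the \emph{worst} EDT equilibrium for Player~1, so it is the infimum over all EDT equilibria $\pi$ of $u_1(\pi)$. Proposition~\ref{prop:smooth} applies to every EDT equilibrium, giving $u_1(\pi)\geq \frac{\lambda}{1+\mu}u_1(\opt(\Gamma))$ uniformly; taking the infimum over EDT equilibria preserves this bound, so $u_1(\wedt(\Gamma))\geq \frac{\lambda}{1+\mu}u_1(\opt(\Gamma))$. The second step is to handle the numerator. Since $\pr_1(\Gamma)$ is a perfect-recall game, Proposition~\ref{prop:edt-nash} (and the discussion following Proposition~\ref{prop:cdt-nash}) tells us that EDT-equilibrium behavior under perfect recall coincides with optimal play; more directly, by Proposition~\ref{prop:1p_optimal_lowerbound} the optimal strategy is the best EDT equilibrium, and in a perfect-recall single-player game the EDT notion collapses, so $u_1(\wedt(\pr_1(\Gamma)))\leq u_1(\opt(\pr_1(\Gamma)))$. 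This gives the chain
\[
    \vor^{\wedt}(\Gamma)=\frac{u_1(\wedt(\pr_1(\Gamma)))}{u_1(\wedt(\Gamma))}\leq \frac{u_1(\opt(\pr_1(\Gamma)))}{\frac{\lambda}{1+\mu}u_1(\opt(\Gamma))}=\frac{1+\mu}{\lambda}\cdot\frac{u_1(\opt(\pr_1(\Gamma)))}{u_1(\opt(\Gamma))}.
\]

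The final step is to apply Theorem~\ref{thm:1p}, identifying the residual ratio $u_1(\opt(\pr_1(\Gamma)))/u_1(\opt(\Gamma))$ with $\vor^{\opt}(\Gamma)$ and bounding it by $\max_{z\in\calZ,h\in\calH_c}\beta(h)/\am(z)$, which yields exactly the claimed inequality. The main subtlety I would watch for is precisely this last identification: Theorem~\ref{thm:1p} is stated for the game \emph{class} $\mathscr{C}$ sharing the game tree and infosets, so its bound on $\vor^{\opt}(\class)$ dominates $\vor^{\opt}(\Gamma)$ for the specific game $\Gamma$, and I should note that $\vor^{\opt}(\Gamma)\leq \vor^{\opt}(\class)$ by definition of the supremum. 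A secondary point requiring care is confirming that $\pr_1(\Gamma)$ rather than $\pr(\Gamma)$ is the correct refinement in the single-player setting (they coincide since there is only one strategic player), and that all utilities are positive so the ratios are well-defined—guaranteed here since the denominator is bounded below by a positive multiple of $u_1(\opt(\Gamma))>0$. These are bookkeeping matters rather than genuine obstacles; the proof is essentially a two-line composition once the numerator and denominator are correctly sandwiched.
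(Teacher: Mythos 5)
Your proof is correct and takes essentially the same route as the paper, which establishes the corollary exactly by combining \Cref{prop:smooth} (applied uniformly to every EDT equilibrium of $\Gamma$, hence to the worst one) with \Cref{thm:1p} via the observation that $\vor^{\opt}(\Gamma) \leq \vor^{\opt}(\class)$. One caveat on your numerator step: the claim that ``EDT-equilibrium behavior under perfect recall coincides with optimal play'' is false---the paper's own \Cref{ex:refinement_needed} exhibits a perfect-recall single-player game $\pr_1(\Gamma)$ with a strictly suboptimal EDT equilibrium (that coincidence holds only for the EDT-\emph{Nash} refinement, \Cref{prop:edt-nash})---but the inequality you actually need, $u_1(\wedt(\pr_1(\Gamma))) \leq u_1(\opt(\pr_1(\Gamma)))$, is trivially true, since in a single-player game no strategy, a fortiori no EDT equilibrium, can exceed the optimal utility.
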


This also applies to $\wedtnash$ always and (by~\Cref{rem:edt = cdt w/o abs}) to $\wcdt$ and $\wcdtnash$ when $\Gamma$ has no absentmindedness.

\subsection{Further connections} In addition, we note that the value of recall (\Cref{def:vor}) encompasses several notions from prior literature. First, the \emph{price of uncorrelation} in adversarial team games \citep{Celli18:Computational}, which measures how much of a team of players facing a single adversary can gain from communicating, corresponds to $\vor^{\bnash}(\class^{2p0s})$, where $\class^{2p0s}$ is the class of two-player zero-sum games (based on their construction, one of the players---corresponding to the adversary---has perfect recall).

Second, the \emph{price of miscoordination} in security games \citep{Jiang13:Defender}, which measures the utility loss due to having multiple defenders rather than a single one, corresponds to the VoR in this game class; here, $\SC$ corresponds to \emph{Stackelberg equilibria}, which involves Player 1 committing to a strategy and its opponent best responding.  We expand on the above connections in the appendix.
\section{Partial Recall Refinements}

So far, we have defined the value of recall based on the (coarsest) perfect-recall refinement. It is also natural to consider the change in utility due to obtaining \emph{partial} recall.

\begin{defn}[Partial recall refinements]\label{def:partial_rr}
 Given games $\Gamma$ and $\Gamma'$ with the same game tree but possibly different info sets, $\Gamma'$ is a \emph{partial recall refinement} of $\Gamma$ with respect to a player $i \in \calN$ if $\pr_i(\Gamma) \succeq_i \Gamma' \succeq_i \Gamma$. Further, $\Gamma'$ is an \emph{all-player} partial recall refinement of $\Gamma$ if $\pr(\Gamma) \succeq \Gamma' \succeq \Gamma$.
\end{defn}

\begin{figure} 
    \centering
    \tikzset{
        every path/.style={-},
        every node/.style={draw},
    }
    \forestset{
    subgame/.style={regular polygon,
    regular polygon sides=3,anchor=north, inner sep=1pt},
    }

     \begin{subfigure}{.4\linewidth}
          \centering

        \begin{forest}
            [,p1,name=p0
            [,p1,name=p1a,el={1}{L}{},s sep=15pt
                [\util1{2},el={1}{L}{},terminal]
                [\util1{0},el={1}{R}{},terminal]
            ]
            [,p1,name=p1b,el={1}{R}{},s sep=15pt
                [\util1{0},el={1}{L}{},terminal]
                [\util1{1},,el={1}{R}{},terminal]
            ]
            ]
            \draw[infoset1] (p1b)--(p1a) to [bend left=70] (p0);
        \end{forest}
     \caption{A game $\Gamma$}
    \end{subfigure}
      \begin{subfigure}{.55\linewidth}
            \centering
        \begin{forest}
            [,p1,name=p0
            [,p1,name=p1a,el={1}{L}{},s sep=15pt
                [\util1{2},el={1}{L}{},terminal]
                [\util1{0},el={1}{R}{},terminal]
            ]
            [,p1,name=p1b,el={1}{R}{},s sep=15pt
                [\util1{0},el={1}{L}{},terminal]
                [\util1{1},,el={1}{R}{},terminal]
            ]
            ]
            \draw[infoset1] (p1b)--(p1a);
        \end{forest}
          \caption{A partial recall refinement of $\Gamma$}
    \end{subfigure}
    \caption{Partial recall gives a worse EDT-Nash equilibrium: In (a), the only EDT-Nash equilibrium is playing $\blueL$; in (b), playing $\blueR$ in both infosets is also an EDT-Nash equilibrium.}
    \label{fig:partial_recall}
\end{figure}
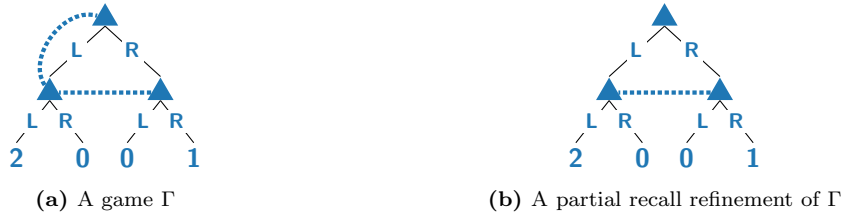

Partial recall refinements introduces further interesting properties. For example, \Cref{fig:partial_recall} shows that partial recall can lead to a worse EDT-Nash equilibrium in a single-player game; this stands in contrast to perfect recall refinements (\Cref{cor:path-fixed}).

In what follows, we study the complexity of perhaps the most natural problem arising from~\Cref{def:partial_rr}: how should one refine an imperfect-recall game so as to maximize the utility gain, subject to constraining the amount of new recall. This problem is well-motivated from the literature on abstraction (\emph{e.g.},~\citealt{Kroer16:Imperfect}), but to our knowledge, it has not been studied in this form. To formalize constraints on recall, we first introduce the following notation: consider games $\Gamma, \Gamma'$ that differ solely on infosets for $i \in \calN$ ($\calI_i$ and $\calI'_i$, respectively). We write $\Gamma' \vdash_i \Gamma$ if there is $I \in \calI_i$ such that $\calI_i'= \calI_i \setminus \{I\} \cup \{I_1,I_2\}$, where $I =I_1 \sqcup I_2$; \emph{i.e.}, $\Gamma'$ results from splitting a single infoset of $i$ in $\Gamma$.

\begin{defn}
   Fix a player $i\in \calN$. We say $\Gamma$ is its own \emph{$0$-partial recall refinement}. $ \Gamma'$ is a \emph{$k$-partial recall refinement} of $\Gamma$ if it is a partial recall refinement of $\Gamma$ and $\Gamma' \vdash_i \Gamma''$, where $\Gamma''$ is some $(k-1)$-partial recall refinement of $\Gamma$.
\end{defn}

This restriction is motivated by the fact that many practical algorithms scale with the number of infosets, and so one naturally strives to minimize that when abstracting a game~\citep{Kroer14:Extensive,Kroer16:Imperfect}.

Then, the computational problem $k$-$\bestpartial(\Gamma)$ asks: given a parameter $k \in \mathbb{N}$ and a single-player game $\Gamma$, compute its $k$-partial recall refinement $\Gamma'$ that maximizes $u_1(\opt(\Gamma'))$. For this task, we assume to be given access to an oracle $\mathcal{O}$ that outputs the optimal utility of any single-player game; even though such an oracle can only make the problem easier, we show the following hardness result.

\begin{restatable}{thm}{partialhard}
    \label{thm:partial}
     $k$-$\bestpartial(\Gamma)$ is \NP-hard.
\end{restatable}

Our proof relies on a reduction from \emph{exact cover by 3-sets} \cite{Garey79:Computers}, which asks to exactly cover a set of items using a given family of subsets of size three. Our construction consists of a chance node with an action per item, followed by player nodes with an action per subset.

\section{Further Related Work}

Starting from the seminal work of~\citet{Koller92:Complexity}, there has been much interest in characterizing the computational complexity of natural equilibrium concepts in single- and multi-player imperfect-recall games~\citep{Tewolde23:Computational,Tewolde24:Imperfect,Gimbert20:Bridge,Lambert19:Equilibria}; this undertaking is driven by several key applications, many of which were discussed in the prequel. In particular, for many problems of interest, perfect recall is known to be necessary for efficient computation, although that is no longer the case for more structured classes of games~\citep{Lanctot12:No}. Such hardness results pertaining to imperfect-recall games translate to intractability for computing the value of recall in our context (\Cref{thm:hardness}).

\paragraph{Game abstraction} The main theme of our work, which revolves around quantifying the value of recall, bears resemblance to certain considerations in the line of work on abstraction; as we explained earlier, imperfect-recall abstractions arise naturally when attempting to compress the description of the game by way of merging information sets. This is very much related to ``computational games,'' in which agents are charged for computation~\citep{Halpern13:Sequential,Sandholm00:Issues,Larson01:Bargaining,Larson01:Costly}; in such settings, choosing to forget information may indeed be rational. For the theoretical foundation of abstraction techniques based on imperfect recall, we refer to~\citet{Kroer18:Unified,Kroer16:Imperfect,Kroer14:Extensive}. Those papers examine the problem of computing different solution concepts in imperfect-recall games, and then mapping back to the perfect-recall refinement. In particular, \citet{Kroer16:Imperfect} consider a certain class of imperfect-recall games, which they refer to as \emph{chance-relaxed skew well-formed (CRSWF)} games---extending an earlier work by~\citet{Lanctot12:No}. This is a particular, somewhat benign class of imperfect-recall games in which there exists an information-set refinement (akin to~\Cref{def:ref/coarse}) that satisfies certain natural properties. We should mention that~\Cref{thm:partial}, which concerns identifying a partial recall refinement maximizing the utility gain, relates to a hardness result of~\citet[Theorem 4.1]{Kroer16:Arbitrage-Free} for computing a CRSWF abstraction (starting from a perfect-recall game) with minimal error bounds; their proof proceeds by a reduction from clustering in the plane. A class of games related to CRSWF is defined through \emph{A-loss recall}, which requires that each loss of a player's memory can be traced back to loss of memory of his own action~\citep{Kaneko95:Behavior,Kline02:Minimum,Cermak18:Approximating}. Instead, here we treat general extensive-form games with imperfect recall.

\paragraph{Related concepts to VoR} Quantifying the value of recall is conceptually related to the \emph{price of anarchy}~\citep{Roughgarden15:Intrinsic,Koutsoupias99:Worst,Roughgarden00:How}, another influential line of work in algorithmic game theory. The price of anarchy measures the welfare loss caused by players confined to a worst-case (Nash) equilibrium compared to the welfare-optimal state---the related notion of the \emph{price of stability}~\citep{Anshelevich08:Price} instead deals with the best-case equilibrium. Our work makes a connection with that line of work by leveraging the so-called \emph{smoothness} framework of~\citet{Roughgarden15:Intrinsic}.

Similar but distinct notions studied in the literature include the \emph{price of malice}~\citep{Babaioff09:Congestion,Moscibroda06:When}, which incorporates in price of anarchy a small number of Byzantine agents; the \emph{mediation value}~\citep{Ashlagi08:Value}, which quantifies the improvement in welfare when expanding the set of outcomes from Nash equilibria to the more permissive set of correlated equilibria; the \emph{defender miscoordination} in security games~\citep{Jiang13:Defender}, which captures the loss incurred by miscoordination on the defenders' part; the closely related \emph{price of uncorrelation} in adversarial team games~\citep{Celli18:Computational}; the \emph{value of commitment}~\citep{Letchford14:Value}, which measures a player's benefit derived from the power to commit; and the \emph{value of information}, a concept with a long history in economics---known to be potentially hurtful in the presence of multiple players~\citep{Bassan03:Positive,Hirshleifer71:Private}---\emph{cf.}~\Cref{prop:recall_bad}. The connection between the value of recall and some of the above concepts was made explicit in our paper.

Those similarities notwithstanding, the value of recall presents several differences compared to the mediation value (MV) and the price of anarchy (PoA). Both of those latter concepts study the change in \emph{social welfare} (the sum over all players' utilities), with MV enriching the set of outcomes from Nash to correlated equilibria, and PoA positing a benevolent central authority that imposes the welfare-maximizing state. In our setting, on the other hand, Player 1 is selfishly trying to maximize its own own utility (just like all other players), and VoR measures that player's benefit---or, indeed, the cost (\Cref{prop:recall_bad})---under perfect recall. In this sense, VoR is perhaps more similar to the value of commitment (VoC)~\citep{Letchford14:Value}, which quantifies the impact of another strategic device---namely, commitments---again on the utility of a selfish player. Furthermore, \Cref{prop:recall_bad} marks another distinction of VoR from PoA and MV, as the latter two must always be at least 1 (by definition). VoC must also be at least 1 if \emph{mixed commitments} are allowed, but can be smaller than 1 if the player is constrained to \emph{pure commitments}---for instance, in rock-paper-scissors.

Before we proceed, it is also worth connecting the value of recall to \emph{situational awareness}, a concept that has engendered a significant body of work~\citep{Endsley00:Theoretical,Stanton01:Siutational}. In particular, recall constitutes an important component shaping the situational awareness of an agent, albeit not the only one.

\paragraph{Nonmonotonicities} We highlighted earlier the counterintuitive fact that perfect recall can hurt players' utilities (\Cref{fig:recall_bad}). This type of nonmonotonicity is not without precedent; besides Braess paradox, which was cited earlier, we should mention two other similar phenomena: \citet{Waugh09:Abstractiona} showed that a more fine-grained abstraction of a game can result in worse equilibria for zero-sum games, while~\citet{Jagadeesan23:Improved} observed that improving Bayes risk can decrease the overall predictive accuracy across users for a marketplace consisting of competing model-providers.

\paragraph{POMDPs and repeated games} Finally, while our main focus here is on extensive-form games, the role of memory in policy optimization is also central in the context of partially observable Markov decision processes (POMDPs)~\citep{Astrom65:Optimal,Kaelbling98:Planning,Bonet09}. This undertaking often manifests itself in characterizing the gap between non-Markovian and Markovian policies; for example, we refer to~\citet{Mutti22:Importance}, and references therein. Relatedly, that discrepancy has been the subject of much work in the theory of repeated games in economics~\citep{Barlo09:Repeated,Aumann89:Cooperation,Bhaskar02:Asynchronous,Cole05:Finite,Foster18:Smooth,Papadimitriou94:Complexity}. An interesting direction for future work is whether such results can be cast in the language of VoR introduced in our paper.

\section{Conclusions and Future Research}

We introduced the value of recall, which measures the utility gain by granting a player perfect recall. Our work opens many interesting avenues for future research. First, the value of recall could be used to guide abstraction techniques. We also observed the interesting phenomenon that perfect recall can be hurtful to all players. It would be interesting to provide a broader characterization of games where this is so---a natural candidate being \emph{simulation games}~\citep{Kovarik24:Recursive}, and quantify the \emph{price} of recall therein. Furthermore, we have focused on the value of recall from the perspective of a single player, but understanding the impact on \emph{social welfare} is a natural next step.

\section*{Acknowledgments}

We are grateful to the anonymous AAAI reviewers for many helpful comments that improved the exposition of this paper. We also thank Brian Hu Zhang for many discussions. Ratip Emin Berker, Emanuel Tewolde, and Vincent Conitzer thank the Cooperative AI Foundation, Polaris Ventures (formerly the Center for Emerging Risk Research) and Jaan Tallinn’s donor-advised fund at Founders Pledge for financial support. Tuomas Sandholm is supported by the Vannevar Bush Faculty Fellowship ONR N00014-23-1-2876, National Science Foundation grants RI-2312342 and RI-1901403, ARO award W911NF2210266, and NIH award A240108S001.

\bibliography{dairefs}

\begin{thebibliography}{89}
\providecommand{\natexlab}[1]{#1}
\providecommand{\url}[1]{\texttt{#1}}
\expandafter\ifx\csname urlstyle\endcsname\relax
  \providecommand{\doi}[1]{doi: #1}\else
  \providecommand{\doi}{doi: \begingroup \urlstyle{rm}\Url}\fi

\bibitem[Anshelevich et~al.(2008)Anshelevich, Dasgupta, Kleinberg, Tardos,
  Wexler, and Roughgarden]{Anshelevich08:Price}
Elliot Anshelevich, Anirban Dasgupta, Jon~M. Kleinberg, {\'{E}}va Tardos, Tom
  Wexler, and Tim Roughgarden.
\newblock The price of stability for network design with fair cost allocation.
\newblock \emph{SIAM Journal on Computing}, 38\penalty0 (4):\penalty0
  1602--1623, 2008.

\bibitem[Ashlagi et~al.(2008)Ashlagi, Monderer, and
  Tennenholtz]{Ashlagi08:Value}
Itai Ashlagi, Dov Monderer, and Moshe Tennenholtz.
\newblock On the value of correlation.
\newblock \emph{Journal of Artificial Intelligence Research}, 33:\penalty0
  575--613, 2008.

\bibitem[Aumann and Sorin(1989)]{Aumann89:Cooperation}
Robert~J Aumann and Sylvain Sorin.
\newblock Cooperation and bounded recall.
\newblock \emph{Games and Economic Behavior}, 1\penalty0 (1):\penalty0 5--39,
  1989.

\bibitem[Babaioff et~al.(2009)Babaioff, Kleinberg, and
  Papadimitriou]{Babaioff09:Congestion}
Moshe Babaioff, Robert Kleinberg, and Christos~H. Papadimitriou.
\newblock Congestion games with malicious players.
\newblock \emph{Games and Economic Behavior}, 67\penalty0 (1):\penalty0 22--35,
  2009.

\bibitem[Bakhtin et~al.(2022)Bakhtin, Brown, Dinan, Farina, Flaherty, Fried,
  Goff, Gray, Hu, Jacob, Komeili, Konath, Kwon, Lerer, Lewis, Miller, Mitts,
  Renduchintala, Roller, Rowe, Shi, Spisak, Wei, Wu, Zhang, and
  Zijlstra]{Bakhtin22:Human}
Anton Bakhtin, Noam Brown, Emily Dinan, Gabriele Farina, Colin Flaherty, Daniel
  Fried, Andrew Goff, Jonathan Gray, Hengyuan Hu, Athul~Paul Jacob, Mojtaba
  Komeili, Karthik Konath, Minae Kwon, Adam Lerer, Mike Lewis, Alexander~H.
  Miller, Sasha Mitts, Adithya Renduchintala, Stephen Roller, Dirk Rowe, Weiyan
  Shi, Joe Spisak, Alexander Wei, David Wu, Hugh Zhang, and Markus Zijlstra.
\newblock Human-level play in the game of diplomacy by combining language
  models with strategic reasoning.
\newblock \emph{Science}, 378\penalty0 (6624):\penalty0 1067--1074, 2022.

\bibitem[Barlo et~al.(2009)Barlo, Carmona, and Sabourian]{Barlo09:Repeated}
Mehmet Barlo, Guilherme Carmona, and Hamid Sabourian.
\newblock Repeated games with one-memory.
\newblock \emph{Journal of Economic Theory}, 144\penalty0 (1):\penalty0
  312--336, 2009.

\bibitem[Basilico et~al.(2017)Basilico, Celli, Nittis, and
  Gatti]{Basilico17:Team}
Nicola Basilico, Andrea Celli, Giuseppe~De Nittis, and Nicola Gatti.
\newblock Team-maxmin equilibrium: Efficiency bounds and algorithms.
\newblock In \emph{Conference on Artificial Intelligence (AAAI)}, 2017.

\bibitem[Bassan et~al.(2003)Bassan, Gossner, Scarsini, and
  Zamir]{Bassan03:Positive}
Bruno Bassan, Olivier Gossner, Marco Scarsini, and Shmuel Zamir.
\newblock Positive value of information in games.
\newblock \emph{International Journal of Game Theory}, 32\penalty0
  (1):\penalty0 17--31, 2003.

\bibitem[Bhaskar and Vega-Redondo(2002)]{Bhaskar02:Asynchronous}
Venkataraman Bhaskar and Fernando Vega-Redondo.
\newblock Asynchronous choice and markov equilibria.
\newblock \emph{Journal of Economic Theory}, 103\penalty0 (2):\penalty0
  334--350, 2002.

\bibitem[Bonet(2009)]{Bonet09}
Blai Bonet.
\newblock Deterministic pomdps revisited.
\newblock In \emph{Proceedings of the Conference on Uncertainty in Artificial
  Intelligence (UAI)}, pages 59--66, 2009.

\bibitem[Bowling et~al.(2015)Bowling, Burch, Johanson, and
  Tammelin]{Bowling15:Heads}
Michael Bowling, Neil Burch, Michael Johanson, and Oskari Tammelin.
\newblock Heads-up limit hold'em poker is solved.
\newblock \emph{Science}, 347\penalty0 (6218):\penalty0 145--149, 2015.

\bibitem[Boyd and Vandenberghe(2004)]{Boyd04:Convex}
Stephen Boyd and Lieven Vandenberghe.
\newblock \emph{Convex Optimization}.
\newblock Cambridge University Press, 2004.

\bibitem[Braess(1968)]{Braess68:Paradox}
Dietrich Braess.
\newblock {\"U}ber ein paradoxon aus der {V}erkehrsplanung.
\newblock \emph{Unternehmensforschung}, 12:\penalty0 258--268, 1968.

\bibitem[Briggs(2010)]{Briggs10:Putting}
Rachael Briggs.
\newblock Putting a value on beauty.
\newblock In \emph{Oxford Studies in Epistemology: Volume 3}, pages 3--34.
  Oxford University Press, 2010.

\bibitem[Brown and Sandholm(2018)]{Brown17:Superhuman}
Noam Brown and Tuomas Sandholm.
\newblock Superhuman {AI} for heads-up no-limit poker: {Libratus} beats top
  professionals.
\newblock \emph{Science}, 359\penalty0 (6374):\penalty0 418--424, 2018.

\bibitem[Brown et~al.(2015)Brown, Ganzfried, and
  Sandholm]{Brown15:Hierarchical}
Noam Brown, Sam Ganzfried, and Tuomas Sandholm.
\newblock Hierarchical abstraction, distributed equilibrium computation, and
  post-processing, with application to a champion no-limit {T}exas {H}old'em
  agent.
\newblock In \emph{International Conference on Autonomous Agents and
  Multi-Agent Systems (AAMAS)}, 2015.

\bibitem[Celli and Gatti(2018)]{Celli18:Computational}
Andrea Celli and Nicola Gatti.
\newblock Computational results for extensive-form adversarial team games.
\newblock In \emph{Conference on Artificial Intelligence (AAAI)}, 2018.

\bibitem[{\v{C}}erm{\'a}k et~al.(2017){\v{C}}erm{\'a}k, Bosansk{\'{y}}, and
  Lis{\'{y}}]{Cermak17:Algorithm}
Ji{\v{r}}{\'\i} {\v{C}}erm{\'a}k, Branislav Bosansk{\'{y}}, and Viliam
  Lis{\'{y}}.
\newblock An algorithm for constructing and solving imperfect recall
  abstractions of large extensive-form games.
\newblock In \emph{Proceedings of the International Joint Conference on
  Artificial Intelligence (IJCAI)}, 2017.

\bibitem[{\v{C}}erm{\'a}k et~al.(2018){\v{C}}erm{\'a}k, Bo{\v{s}}ansk{\`y},
  Hor{\'a}k, Lis{\`y}, and P{\v{e}}chou{\v{c}}ek]{Cermak18:Approximating}
Ji{\v{r}}{\'\i} {\v{C}}erm{\'a}k, Branislav Bo{\v{s}}ansk{\`y}, Karel
  Hor{\'a}k, Viliam Lis{\`y}, and Michal P{\v{e}}chou{\v{c}}ek.
\newblock Approximating maxmin strategies in imperfect recall games using
  a-loss recall property.
\newblock \emph{International Journal of Approximate Reasoning}, 93:\penalty0
  290--326, 2018.

\bibitem[Cesa-Bianchi and Lugosi(2006)]{Cesa-Bianchi06:Predictiona}
Nicolo Cesa-Bianchi and G{\'a}bor Lugosi.
\newblock \emph{Prediction, learning, and games}.
\newblock Cambridge university press, 2006.

\bibitem[Chen et~al.(2024)Chen, Ghersengorin, and Petersen]{Chen2024:Imperfect}
Eric~O. Chen, Alexis Ghersengorin, and Sami Petersen.
\newblock Imperfect recall and {AI} delegation, 2024.

\bibitem[Cole and Kocherlakota(2005)]{Cole05:Finite}
Harold~L. Cole and Narayana~R. Kocherlakota.
\newblock Finite memory and imperfect monitoring.
\newblock \emph{Games and Economic Behavior}, 53\penalty0 (1):\penalty0 59--72,
  2005.

\bibitem[Conitzer(2019)]{Conitzer19:Designing}
Vincent Conitzer.
\newblock Designing preferences, beliefs, and identities for artificial
  intelligence.
\newblock In \emph{Conference on Artificial Intelligence (AAAI)}, 2019.

\bibitem[Conitzer and Oesterheld(2023)]{Conitzer23:Foundations}
Vincent Conitzer and Caspar Oesterheld.
\newblock Foundations of cooperative {AI}.
\newblock In \emph{Conference on Artificial Intelligence (AAAI)}, 2023.

\bibitem[Emmons et~al.(2022)Emmons, Oesterheld, Critch, Conitzer, and
  Russell]{Emmons22:Learning}
Scott Emmons, Caspar Oesterheld, Andrew Critch, Vincent Conitzer, and Stuart
  Russell.
\newblock For learning in symmetric teams, local optima are global nash
  equilibria.
\newblock In \emph{International Conference on Machine Learning (ICML)}, 2022.

\bibitem[Endsley and Garland(2000)]{Endsley00:Theoretical}
Mica~R Endsley and Daniel~J Garland.
\newblock Theoretical underpinnings of situation awareness: A critical review.
\newblock \emph{Situation awareness analysis and measurement}, 1\penalty0
  (1):\penalty0 3--21, 2000.

\bibitem[Foerster and Wattenhofer(2013)]{Foerster13:Solitaire}
Klaus-Tycho Foerster and Roger Wattenhofer.
\newblock The solitaire memory game.
\newblock Technical report, ETH Zurich, 2013.

\bibitem[Foster and Hart(2018)]{Foster18:Smooth}
Dean~P. Foster and Sergiu Hart.
\newblock Smooth calibration, leaky forecasts, finite recall, and nash
  dynamics.
\newblock \emph{Games and Economic Behavior}, 109:\penalty0 271--293, 2018.

\bibitem[Fudenberg and Tirole(1991)]{Fudenberg91:Game_theory}
Drew Fudenberg and Jean Tirole.
\newblock \emph{Game Theory}.
\newblock MIT Press, October 1991.

\bibitem[Ganzfried and Sandholm(2014)]{Ganzfried14:Potential}
Sam Ganzfried and Tuomas Sandholm.
\newblock Potential-aware imperfect-recall abstraction with earth mover's
  distance in imperfect-information games.
\newblock In \emph{Conference on Artificial Intelligence (AAAI)}, 2014.

\bibitem[Garey and Johnson(1979)]{Garey79:Computers}
Michael Garey and David Johnson.
\newblock \emph{Computers and Intractability}.
\newblock W. H. Freeman and Company, 1979.

\bibitem[Gill(1977)]{Gill77:Computational}
John Gill.
\newblock Computational complexity of probabilistic turing machines.
\newblock \emph{SIAM Journal on Computing}, 6\penalty0 (4):\penalty0 675--695,
  1977.

\bibitem[Gimbert et~al.(2020)Gimbert, Paul, and Srivathsan]{Gimbert20:Bridge}
Hugo Gimbert, Soumyajit Paul, and B.~Srivathsan.
\newblock A bridge between polynomial optimization and games with imperfect
  recall.
\newblock In \emph{Autonomous Agents and Multi-Agent Systems}, 2020.

\bibitem[Goemans et~al.(2004)Goemans, Li, Mirrokni, and
  Thottan]{Goemans04:Market}
Michel~X. Goemans, Li~(Erran) Li, Vahab~S. Mirrokni, and Marina Thottan.
\newblock Market sharing games applied to content distribution in ad-hoc
  networks.
\newblock In \emph{Interational Symposium on Mobile Ad Hoc Networking and
  Computing}, 2004.

\bibitem[Halpern and Pass(2013)]{Halpern13:Sequential}
Joseph~Y. Halpern and Rafael Pass.
\newblock Sequential equilibrium in computational games.
\newblock In \emph{Proceedings of the Twenty-Third international joint
  conference on Artificial Intelligence}, pages 171--176. AAAI Press, 2013.

\bibitem[Hirshleifer(1971)]{Hirshleifer71:Private}
Jack Hirshleifer.
\newblock The private and social value of information and the reward to
  inventive activity.
\newblock \emph{The American Economic Review}, 61\penalty0 (4):\penalty0
  561--574, 1971.

\bibitem[Jagadeesan et~al.(2023)Jagadeesan, Jordan, Steinhardt, and
  Haghtalab]{Jagadeesan23:Improved}
Meena Jagadeesan, Michael~I. Jordan, Jacob Steinhardt, and Nika Haghtalab.
\newblock Improved bayes risk can yield reduced social welfare under
  competition.
\newblock In \emph{Proceedings of the Annual Conference on Neural Information
  Processing Systems (NeurIPS)}, 2023.

\bibitem[Jiang and Leyton-Brown(2011)]{Jiang11:Polynomial}
Albert Jiang and Kevin Leyton-Brown.
\newblock Polynomial-time computation of exact correlated equilibrium in
  compact games.
\newblock In \emph{Proceedings of the ACM Conference on Electronic Commerce
  (EC)}, 2011.

\bibitem[Jiang et~al.(2013)Jiang, Procaccia, Qian, Shah, and
  Tambe]{Jiang13:Defender}
Albert~Xin Jiang, Ariel~D. Procaccia, Yundi Qian, Nisarg Shah, and Milind
  Tambe.
\newblock Defender (mis)coordination in security games.
\newblock In \emph{Proceedings of the International Joint Conference on
  Artificial Intelligence (IJCAI)}, 2013.

\bibitem[Johanson et~al.(2013)Johanson, Burch, Valenzano, and
  Bowling]{Johanson13:Evaluating}
Michael Johanson, Neil Burch, Richard Valenzano, and Michael Bowling.
\newblock Evaluating state-space abstractions in extensive-form games.
\newblock In \emph{International Conference on Autonomous Agents and
  Multi-Agent Systems (AAMAS)}, 2013.

\bibitem[Kaelbling et~al.(1998)Kaelbling, Littman, and
  Cassandra]{Kaelbling98:Planning}
Leslie~Pack Kaelbling, Michael~L. Littman, and Anthony~R. Cassandra.
\newblock Planning and acting in partially observable stochastic domains.
\newblock \emph{Artificial Intelligence}, 101\penalty0 (1-2):\penalty0 99--134,
  1998.

\bibitem[Kaneko and Kline(1995)]{Kaneko95:Behavior}
Mamoru Kaneko and J~Jude Kline.
\newblock Behavior strategies, mixed strategies and perfect recall.
\newblock \emph{International Journal of Game Theory}, 24:\penalty0 127--145,
  1995.

\bibitem[Kirkpatrick(1954)]{KIRKPATRICK54:Probability}
Paul Kirkpatrick.
\newblock Probability theory of a simple card game.
\newblock \emph{The Mathematics Teacher}, 47\penalty0 (4):\penalty0 245--248,
  1954.

\bibitem[Kline(2002)]{Kline02:Minimum}
J~Jude Kline.
\newblock Minimum memory for equivalence between ex ante optimality and
  time-consistency.
\newblock \emph{Games and Economic Behavior}, 38\penalty0 (2):\penalty0
  278--305, 2002.

\bibitem[Koller and Megiddo(1992)]{Koller92:Complexity}
Daphne Koller and Nimrod Megiddo.
\newblock The complexity of two-person zero-sum games in extensive form.
\newblock \emph{Games and Economic Behavior}, 4\penalty0 (4):\penalty0
  528--552, October 1992.

\bibitem[Korzhyk et~al.(2011)Korzhyk, Yin, Kiekintveld, Conitzer, and
  Tambe]{Korzhyk11::Stackelberg}
Dmytro Korzhyk, Zhengyu Yin, Christopher Kiekintveld, Vincent Conitzer, and
  Milind Tambe.
\newblock Stackelberg vs. nash in security games: An extended investigation of
  interchangeability, equivalence, and uniqueness.
\newblock \emph{Journal of Artificial Intelligence Research}, 41:\penalty0
  297--327, 2011.

\bibitem[Koutsoupias and Papadimitriou(1999)]{Koutsoupias99:Worst}
Elias Koutsoupias and Christos Papadimitriou.
\newblock Worst-case equilibria.
\newblock In \emph{Symposium on Theoretical Aspects in Computer Science}, 1999.

\bibitem[Kovar{\'{\i}}k et~al.(2023)Kovar{\'{\i}}k, Oesterheld, and
  Conitzer]{Kovarik23:Game}
Vojtech Kovar{\'{\i}}k, Caspar Oesterheld, and Vincent Conitzer.
\newblock Game theory with simulation of other players.
\newblock In \emph{Proceedings of the International Joint Conference on
  Artificial Intelligence (IJCAI)}, 2023.

\bibitem[Kovar{\'{\i}}k et~al.(2024)Kovar{\'{\i}}k, Oesterheld, and
  Conitzer]{Kovarik24:Recursive}
Vojtech Kovar{\'{\i}}k, Caspar Oesterheld, and Vincent Conitzer.
\newblock Recursive joint simulation in games, 2024.

\bibitem[Kovar{\'{\i}}k et~al.(2025)Kovar{\'{\i}}k, Sauerberg, Hammond, and
  Conitzer]{Kovarik25:Game}
Vojtech Kovar{\'{\i}}k, Nathaniel Sauerberg, Lewis Hammond, and Vincent
  Conitzer.
\newblock Game theory with simulation in the presence of unpredictable
  randomisation.
\newblock In \emph{International Conference on Autonomous Agents and
  Multi-Agent Systems (AAMAS)}, 2025.

\bibitem[Kreps and Wilson(1982)]{Kreps82:Sequential}
David~M. Kreps and Robert Wilson.
\newblock Sequential equilibria.
\newblock \emph{Econometrica}, 50\penalty0 (4):\penalty0 863--894, 1982.

\bibitem[Kroer and Sandholm(2014)]{Kroer14:Extensive}
Christian Kroer and Tuomas Sandholm.
\newblock Extensive-form game abstraction with bounds.
\newblock In \emph{Proceedings of the ACM Conference on Economics and
  Computation (EC)}, 2014.

\bibitem[Kroer and Sandholm(2016)]{Kroer16:Imperfect}
Christian Kroer and Tuomas Sandholm.
\newblock Imperfect-recall abstractions with bounds in games.
\newblock In \emph{Proceedings of the ACM Conference on Economics and
  Computation (EC)}, 2016.

\bibitem[Kroer and Sandholm(2018)]{Kroer18:Unified}
Christian Kroer and Tuomas Sandholm.
\newblock A unified framework for extensive-form game abstraction with bounds.
\newblock In \emph{Proceedings of the Annual Conference on Neural Information
  Processing Systems (NeurIPS)}, 2018.

\bibitem[Kroer et~al.(2016)Kroer, Dud{\'{\i}}k, Lahaie, and
  Balakrishnan]{Kroer16:Arbitrage-Free}
Christian Kroer, Miroslav Dud{\'{\i}}k, S{\'{e}}bastien Lahaie, and Sivaraman
  Balakrishnan.
\newblock Arbitrage-free combinatorial market making via integer programming.
\newblock In \emph{Proceedings of the 2016 {ACM} Conference on Economics and
  Computation, {EC} '16, Maastricht, The Netherlands, July 24-28, 2016}, pages
  161--178, 2016.

\bibitem[Kuhn(1953)]{Kuhn53:Extensive}
H.~W. Kuhn.
\newblock Extensive games and the problem of information.
\newblock In \emph{Contributions to the Theory of Games}, volume~2 of
  \emph{Annals of Mathematics Studies, 28}, pages 193--216. Princeton
  University Press, 1953.

\bibitem[Lambert et~al.(2019)Lambert, Marple, and Shoham]{Lambert19:Equilibria}
Nicolas~S Lambert, Adrian Marple, and Yoav Shoham.
\newblock On equilibria in games with imperfect recall.
\newblock \emph{Games and Economic Behavior}, 113:\penalty0 164--185, 2019.

\bibitem[Lanctot et~al.(2012)Lanctot, Gibson, Burch, Zinkevich, and
  Bowling]{Lanctot12:No}
Marc Lanctot, Richard Gibson, Neil Burch, Martin Zinkevich, and Michael
  Bowling.
\newblock No-regret learning in extensive-form games with imperfect recall.
\newblock In \emph{International Conference on Machine Learning (ICML)}, 2012.

\bibitem[Larson and Sandholm(2001{\natexlab{a}})]{Larson01:Bargaining}
Kate Larson and Tuomas Sandholm.
\newblock Bargaining with limited computation: Deliberation equilibrium.
\newblock \emph{Artificial Intelligence}, 132\penalty0 (2):\penalty0 183--217,
  2001{\natexlab{a}}.

\bibitem[Larson and Sandholm(2001{\natexlab{b}})]{Larson01:Costly}
Kate Larson and Tuomas Sandholm.
\newblock Costly valuation computation in auctions.
\newblock In \emph{Theoretical Aspects of Rationality and Knowledge (TARK
  VIII)}, 2001{\natexlab{b}}.

\bibitem[Letchford et~al.(2014)Letchford, Korzhyk, and
  Conitzer]{Letchford14:Value}
Joshua Letchford, Dmytro Korzhyk, and Vincent Conitzer.
\newblock On the value of commitment.
\newblock \emph{Autonomous Agents and Multi-Agent Systems}, 28:\penalty0
  986--1016, 2014.

\bibitem[Meisheri and Khadilkar(2020)]{Meisheri20:Sample}
Hardik Meisheri and Harshad Khadilkar.
\newblock Sample efficient training in multi-agent adversarial games with
  limited teammate communication, 2020.

\bibitem[Morav{\v c}{\'i}k et~al.(2017)Morav{\v c}{\'i}k, Schmid, Burch,
  Lis{\'y}, Morrill, Bard, Davis, Waugh, Johanson, and
  Bowling]{Moravvcik17:DeepStack}
Matej Morav{\v c}{\'i}k, Martin Schmid, Neil Burch, Viliam Lis{\'y}, Dustin
  Morrill, Nolan Bard, Trevor Davis, Kevin Waugh, Michael Johanson, and Michael
  Bowling.
\newblock Deepstack: Expert-level artificial intelligence in heads-up no-limit
  poker.
\newblock \emph{Science}, 356\penalty0 (6337):\penalty0 508--513, 2017.

\bibitem[Moscibroda et~al.(2006)Moscibroda, Schmid, and
  Wattenhofer]{Moscibroda06:When}
Thomas Moscibroda, Stefan Schmid, and Roger Wattenhofer.
\newblock When selfish meets evil: byzantine players in a virus inoculation
  game.
\newblock In \emph{Proceedings of the ACM Symposium on Principles of
  Distributed Computing}, 2006.

\bibitem[Mutti et~al.(2022)Mutti, Santi, and Restelli]{Mutti22:Importance}
Mirco Mutti, Riccardo~De Santi, and Marcello Restelli.
\newblock The importance of non-markovianity in maximum state entropy
  exploration.
\newblock In \emph{International Conference on Machine Learning (ICML)}, 2022.

\bibitem[Nash(1950)]{Nash50:Non}
John Nash.
\newblock \emph{Non-cooperative games}.
\newblock PhD thesis, Priceton University, 1950.

\bibitem[Oesterheld and Conitzer(2024)]{Oesterheld22:Can}
Caspar Oesterheld and Vincent Conitzer.
\newblock Can {\em de se} choice be {\em ex ante} reasonable in games of
  imperfect recall? a complete analysis, 2024.

\bibitem[Papadimitriou and Roughgarden(2008)]{Papadimitriou08:Computing}
Christos~H. Papadimitriou and Tim Roughgarden.
\newblock Computing correlated equilibria in multi-player games.
\newblock \emph{Journal of the ACM}, 55\penalty0 (3):\penalty0 14:1--14:29,
  2008.

\bibitem[Papadimitriou and Yannakakis(1994)]{Papadimitriou94:Complexity}
Christos~H. Papadimitriou and Mihalis Yannakakis.
\newblock On complexity as bounded rationality (extended abstract).
\newblock In \emph{Proceedings of the Annual Symposium on Theory of Computing
  (STOC)}, 1994.

\bibitem[Piccione and Rubinstein(1997)]{PiccioneR73}
Michele Piccione and Ariel Rubinstein.
\newblock On the interpretation of decision problems with imperfect recall.
\newblock \emph{Games and Economic Behavior}, 20:\penalty0 3--24, 1997.

\bibitem[Resnick et~al.(2020)Resnick, Gao, M{\'a}rton, Osogami, Pang, and
  Takahashi]{Resnick2020:Pommerman}
Cinjon Resnick, Chao Gao, G{\"o}r{\"o}g M{\'a}rton, Takayuki Osogami, Liang
  Pang, and Toshihiro Takahashi.
\newblock Pommerman {\&} {NeurIPS} 2018.
\newblock In \emph{The NeurIPS '18 Competition}, pages 11--36. Springer
  International Publishing, 2020.

\bibitem[Roughgarden(2015)]{Roughgarden15:Intrinsic}
Tim Roughgarden.
\newblock Intrinsic robustness of the price of anarchy.
\newblock \emph{Journal of the ACM}, 62\penalty0 (5):\penalty0 32:1--32:42,
  2015.

\bibitem[Roughgarden and Tardos(2000)]{Roughgarden00:How}
Tim Roughgarden and {\'{E}}va Tardos.
\newblock How bad is selfish routing?
\newblock In \emph{Proceedings of the Annual Symposium on Foundations of
  Computer Science (FOCS)}, November 2000.

\bibitem[Roughgarden et~al.(2017)Roughgarden, Syrgkanis, and
  Tardos]{Roughgarden17:Price}
Tim Roughgarden, Vasilis Syrgkanis, and {\'{E}}va Tardos.
\newblock The price of anarchy in auctions.
\newblock \emph{Journal of Artificial Intelligence Research}, 59:\penalty0
  59--101, 2017.

\bibitem[Sandholm(2000)]{Sandholm00:Issues}
Tuomas Sandholm.
\newblock Issues in computational {V}ickrey auctions.
\newblock \emph{International Journal of Electronic Commerce}, 4\penalty0
  (3):\penalty0 107--129, 2000.
\newblock Early version in ICMAS-96.

\bibitem[Stanton et~al.(2001)Stanton, Chambers, and
  Piggott]{Stanton01:Siutational}
N.A Stanton, P.R.G Chambers, and J~Piggott.
\newblock Situational awareness and safety.
\newblock \emph{Safety Science}, 39\penalty0 (3):\penalty0 189--204, 2001.

\bibitem[Tewolde et~al.(2023)Tewolde, Oesterheld, Conitzer, and
  Goldberg]{Tewolde23:Computational}
Emanuel Tewolde, Caspar Oesterheld, Vincent Conitzer, and Paul~W. Goldberg.
\newblock The computational complexity of single-player imperfect-recall games.
\newblock In \emph{Proceedings of the International Joint Conference on
  Artificial Intelligence (IJCAI)}, 2023.

\bibitem[Tewolde et~al.(2024)Tewolde, Zhang, Oesterheld, Zampetakis, Sandholm,
  Goldberg, and Conitzer]{Tewolde24:Imperfect}
Emanuel Tewolde, Brian~Hu Zhang, Caspar Oesterheld, Manolis Zampetakis, Tuomas
  Sandholm, Paul~W. Goldberg, and Vincent Conitzer.
\newblock Imperfect-recall games: Equilibrium concepts and their complexity.
\newblock In \emph{Proceedings of the International Joint Conference on
  Artificial Intelligence (IJCAI)}, 2024.

\bibitem[Thorp(2016)]{Thorp16:Beat}
Edward~O Thorp.
\newblock \emph{Beat the dealer: A winning strategy for the game of
  twenty-one}.
\newblock Vintage, 2016.

\bibitem[Vetta(2002)]{Vetta02:Nash}
Adrian Vetta.
\newblock Nash equilibria in competitive societies, with applications to
  facility location, traffic routing and auctions.
\newblock In \emph{Proceedings of the Annual Symposium on Foundations of
  Computer Science (FOCS)}, 2002.

\bibitem[{von Stengel}(1996)]{Stengel96:Efficient}
Bernhard {von Stengel}.
\newblock Efficient computation of behavior strategies.
\newblock \emph{Games and Economic Behavior}, 14\penalty0 (2):\penalty0
  220--246, 1996.

\bibitem[{von Stengel} and Koller(1997)]{VonStengel97:Team}
Bernhard {von Stengel} and Daphne Koller.
\newblock Team-maxmin equilibria.
\newblock \emph{Games and Economic Behavior}, 21\penalty0 (1):\penalty0
  309--321, 1997.

\bibitem[Waugh et~al.(2009{\natexlab{a}})Waugh, Schnizlein, Bowling, and
  Szafron]{Waugh09:Abstractiona}
Kevin Waugh, David Schnizlein, Michael Bowling, and Duane Szafron.
\newblock Abstraction pathologies in extensive games.
\newblock In \emph{International Conference on Autonomous Agents and
  Multi-Agent Systems (AAMAS)}, 2009{\natexlab{a}}.

\bibitem[Waugh et~al.(2009{\natexlab{b}})Waugh, Zinkevich, Johanson, Kan,
  Schnizlein, and Bowling]{Waugh09:Practical}
Kevin Waugh, Martin Zinkevich, Michael Johanson, Morgan Kan, David Schnizlein,
  and Michael Bowling.
\newblock A practical use of imperfect recall.
\newblock In \emph{Symposium on Abstraction, Reformulation and Approximation
  (SARA)}, 2009{\natexlab{b}}.

\bibitem[Wichardt(2008)]{Wichardt08:Existence}
Philipp~C. Wichardt.
\newblock Existence of nash equilibria in finite extensive form games with
  imperfect recall: A counterexample.
\newblock \emph{Games and Economic Behavior}, 63\penalty0 (1):\penalty0
  366--369, 2008.

\bibitem[Zhang and Sandholm(2022)]{Zhang22:Polynomial}
Brian~Hu Zhang and Tuomas Sandholm.
\newblock Polynomial-time optimal equilibria with a mediator in extensive-form
  games.
\newblock In \emph{Proceedings of the Annual Conference on Neural Information
  Processing Systems (NeurIPS)}, 2022.

\bibitem[Zhang et~al.(2023)Zhang, Farina, and Sandholm]{Zhang23:Team}
Brian~Hu Zhang, Gabriele Farina, and Tuomas Sandholm.
\newblock Team belief {DAG:} generalizing the sequence form to team games for
  fast computation of correlated team max-min equilibria via regret
  minimization.
\newblock In \emph{International Conference on Machine Learning (ICML)}, 2023.

\bibitem[Zhang et~al.(2022)Zhang, An, and Subrahmanian]{Zhang22:Correlation}
Youzhi Zhang, Bo~An, and V.~S. Subrahmanian.
\newblock Correlation-based algorithm for team-maxmin equilibrium in
  multiplayer extensive-form games.
\newblock In \emph{Proceedings of the International Joint Conference on
  Artificial Intelligence (IJCAI)}, 2022.

\bibitem[Åström({1965})]{Astrom65:Optimal}
Karl~Johan Åström.
\newblock {Optimal Control of Markov Processes with Incomplete State
  Information I}.
\newblock \emph{{Journal of Mathematical Analysis and Applications}},
  {10}:\penalty0 {174--205}, {1965}.

\end{thebibliography}

\appendix
\section{Omitted Proofs}

We dedicate this section to the proofs omitted from the main body.

\subsection{Proof of \Cref{prop:coarsest}}
We first prove that $\pr_i(\Gamma)$ in fact corresponds the \emph{coarsest} perfect recall refinement of $\Gamma$ from player $i$'s perspective.
\coarsestrefinement*
\begin{proof}
    We first prove that $i$ has perfect recall in $\pr_i(\Gamma)$. Say that $\calI_i$ and $\calI_i^*$ are the infosets of player $i$ in $\Gamma$ and $\pr_i(\Gamma)$, respectively. For any node $h \in \calH$, say $\obs_i(h)$ and $\obs^*_i(h)$ are as defined in \Cref{def:imperf_recall} for $\Gamma$ and $\pr_i(\Gamma)$, respectively.
    Take any $I \in \calI^*_i$ and $h^{(1)},h^{(2)} \in I$ with $\seq_i(h^{(1)})= (h^{(1)}_k)_{k=1}^{\ell^{(1)}}$ and $\seq_i(h^{(2)})= (h^{(2)}_k)_{k=1}^{\ell^{(2)}}$ ($\seq_i$ is the same for $\Gamma$ and $\pr_i(\Gamma)$ as they share the same game tree). We would like to show that $\obs_i^*(h^{(1)})=\obs_i^*(h^{(2)})$. By the infoset partition defined in \Cref{def:prr}, $h^{(1)},h^{(2)} \in I$ implies $\obs_i(h^{(1)})=\obs_i(h^{(2)}) \equiv ( i, I_k, a_k)_{k=1}^\ell$; thus, we must have $\ell^{(1)}=\ell^{(2)}=\ell$. Fix any $k \in [\ell]$ and consider $h^{(1)}_k \in \seq_i(h^{(1)})$ and $h^{(2)}_k \in \seq_i(h^{(2)})$. Since $\obs_i(h^{(1)})=\obs_i(h^{(2)})$, we must have $h^{(1)}_k,h^{(2)}_k \in I_k$ in $\Gamma$. Moreover,  $\obs_i(h^{(1)}_k)=\obs_i(h^{(2)}_k)$ since these are subsequences of $\obs_i(h^{(1)})$ and $\obs_i(h^{(2)})$, respectively, which are equal. Hence, by \Cref{def:prr}, $h^{(1)}_k\sim h^{(2)}_k$ and they must be in the same infoset in $\pr_i(\Gamma)$. Since this is true for all $k \in [\ell]$, this implies $\obs_i^*(h^{(1)})=\obs_i^*(h^{(2)})$, as desired. This proves $i$ has perfect recall in $\pr_i(\Gamma)$

    Next, take any game $\Gamma'$ such that $\Gamma' \succeq_i \Gamma$ and $i$ has perfect recall in $\Gamma'$. We would like to show that $\Gamma' \succeq_i \pr_i(\Gamma)$. For $\Gamma$, $\Gamma'$, $\pr_i(\Gamma)$, say $\calI_i$,$\calI_i'$, and $\calI_i^*$ are the infosets of $i$, respectively, and $\obs_i(h),\obs'_i(h)$, and $\obs^*_i(h)$ are as defined in \Cref{def:imperf_recall}, respectively. Take any $I^* \in \calI^*_i$. By \Cref{def:prr}, there exists a $I \in \calI_i$ and a $h\in I$ such that $I^*= \{h^*\in I: \obs_i(h^*) =\obs_i(h)\}$. Since $\Gamma' \succeq_i \Gamma$, by \Cref{def:ref/coarse}, there exists a $\mathcal{J}'_i \subseteq \calI'_i$ such that $I= \bigsqcup_{I'\in \mathcal{J}'_i} I'$. In particular, given any $h^* \in I^*$, there is a $I'\in \mathcal{J}'_i$ such that $h^* \in I'$. Since $i$ has perfect recall in $\Gamma'$, for any $h' \in I'$, we must have $\obs_i'(h')= \obs_i'(h^*)$, moreover, since infosets in $\calI'_i$ parition those in $\calI_i$, this implies  $\obs_i(h')= \obs_i(h^*)$, and hence $h' \in I^*$. This implies $I' \subseteq I^*$. Since $I' \in \mathcal{J}'_i$ was chosen as the infoset containing an arbitrary element $h^* \in I^*$, this implies there exists a $\mathcal{K}'_i \subseteq \mathcal{J}'_i$ such that $I^* = \bigsqcup_{I' \in \mathcal{K}_i}I'$, hence proving that $\Gamma' \succeq_i  \pr_i(\Gamma)$.
\end{proof}
\subsection{On the relationship between imperfect recall and imperfect information}
In the main body of the paper, we stated that any imperfect information game can be turned into a strategically-equivalent one with only imperfect recall by adding dummy nodes. Here, we formalize this:
\begin{prop}
    Given any game $\Gamma$ and  a player $i \in N$, there exists an imperfect-recall game $\Gamma'$ such that $u_1(\SC(\Gamma))=u_1(\SC(\Gamma'))$ for  $\SC \in \{\wcdt, 
\bcdt,\wcdtnash, 
\bcdtnash,$ $\wedt, \bedt,\wedtnash,$ $\bedtnash, \wnash, \bnash\}$, and $i$ has perfect information in $\pr_i(\Gamma')$ (\emph{i.e.}, each node is in a infoset of size 1).
\end{prop}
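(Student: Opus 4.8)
The plan is to realize every infoset of player $i$ purely through imperfect recall, so that the coarsest perfect-recall refinement $\pr_i(\Gamma')$ ends up distinguishing all of $i$'s nodes. The construction inserts \emph{dummy} player-$i$ decision nodes. Concretely, I would build $\Gamma'$ from $\Gamma$ by keeping the same players, chance distributions, and utilities, and, for every node $h \in \calH_i$, subdividing the edge entering $h$: I insert a new player-$i$ node $d_h$ immediately above $h$ (if $h$ is the root, $d_h$ becomes the new root), give $d_h$ a single forced action leading to $h$, and place $d_h$ in its own \emph{singleton} infoset $D_h = \{d_h\}$. The infoset partition of the original nodes is left exactly as in $\Gamma$: nodes $h,h'$ that shared an infoset $I$ in $\Gamma$ still share the infoset $I' = I$ in $\Gamma'$ (with $A_{I'} = A_I$). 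Because the inserted nodes belong to player $i$ and carry a single action, they appear in $\obs_i$ yet leave no genuine choice.

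Next I would verify the perfect-information claim for $\pr_i(\Gamma')$. By \Cref{def:prr}, each infoset of $\Gamma'$ is split according to $\obs_i$. The dummy infosets $D_h$ are already singletons and remain so. For an original infoset $I' = \{h_1,\dots,h_m\}$, each $h_j$ is now immediately preceded by its own distinct singleton infoset $D_{h_j}$, so the final tuple of $\obs_i(h_j)$ records the distinct infoset $D_{h_j}$; hence $\obs_i(h_j) \neq \obs_i(h_k)$ for $j \neq k$. Thus every original infoset splits into singletons under $\pr_i$, so every player-$i$ node of $\pr_i(\Gamma')$ is its own infoset, i.e.\ $i$ has perfect information. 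The same distinctness shows that $i$ has imperfect recall in $\Gamma'$ whenever $\Gamma$ has a non-singleton infoset (the relevant case, since a genuinely perfect-information $\Gamma$ cannot be made imperfect-recall while preserving all solution concepts).

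It remains to establish the strategic equivalence $u_1(\SC(\Gamma)) = u_1(\SC(\Gamma'))$ for every listed $\SC$. Here I would exhibit the natural bijection between strategies: a strategy of $\Gamma'$ must play the forced action at each $d_h$ with probability $1$ and is otherwise free on the infosets $I' = I$, which are in one-to-one correspondence with those of $\Gamma$ and carry identical action sets; realization probabilities, and hence utilities, agree on both sides because the dummy nodes are traversed with probability $1$. Under this bijection the equilibrium conditions transfer verbatim: at each singleton dummy infoset the simplex $\Delta(A_{d_h})$ is a single point, so the Nash condition of \Cref{def:nash}, the EDT/CDT deviation conditions, and the perturbation requirements underlying the EDT-Nash refinement (\Cref{defn:edt nash}) and its CDT analogue are all satisfied vacuously, while on the original infosets they coincide with the conditions in $\Gamma$. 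Consequently the sets of Nash, EDT, and CDT equilibria, together with their refinements, correspond exactly and yield identical Player-$1$ utilities, so the best- and worst-case selections agree across all ten concepts.

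The main obstacle I anticipate is not the construction but the uniform bookkeeping in this last step: one must check the correspondence simultaneously for all ten solution concepts, and in particular confirm that the limit-of-fully-mixed definitions underlying EDT-Nash and CDT-Nash are unaffected by the singleton dummy infosets (there the fully-mixed requirement is forced and the associated deviation inequalities are trivial). Some care is also needed when $\Gamma$ has absentmindedness: inserting a distinct dummy before \emph{each} occurrence of a repeated infoset preserves the repetition of $I'$ along a play, so absentmindedness — and hence the EDT/CDT distinction — is retained in $\Gamma'$, while the refinement still separates the occurrences into singletons.
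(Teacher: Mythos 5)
Your construction and argument coincide with the paper's own proof: insert a dummy player-$i$ node $d_h$ with a single forced action and a singleton infoset immediately before each $h \in \calH_i$, keep the original infosets and utilities intact, observe that the distinct dummy infosets make $\obs_i(h) \neq \obs_i(h')$ for all distinct player-$i$ nodes so that $\pr_i(\Gamma')$ separates every node into its own infoset, and note that the forced actions leave strategies, realization probabilities, and hence all ten solution concepts unchanged. Your additional bookkeeping (the EDT-Nash/CDT-Nash limit sequences, absentmindedness) is sound and in fact more careful than the paper's brief treatment; only your parenthetical claim that a perfect-information $\Gamma$ admits no utility-preserving imperfect-recall version is questionable (one could place two single-action dummy nodes in a shared infoset), but this aside is immaterial to the argument.
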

\begin{proof}
Given $\Gamma$ with nodes $\calH$, we define a new game $\Gamma'$ with nodes $\calH'=\calH \cup \mathcal{D}_i$ where $\mathcal{D}_i= \{d_h\}_{h \in \calH_i}$ is a set of new internal nodes, each belonging to player $i$ (\emph{i.e.}, $\calH_i' = \calH_i \cup \mathcal{D}_i$ and $\calH_j' = \calH_j$ for all $j \neq i$), with $|A_{d_h}|=1$ for each $d_h \in \mathcal{D}_i$. The utilities and game tree for $\Gamma'$ is identical to that of $\Gamma$, except each $h\in \calH_i$ is preceded by $d_h \in \mathcal{D}$, which has a single action leading to $h$. The infosets of all (original) nodes in $\Gamma'$ are the same as their infosets in $\Gamma$, and each node $d_h \in \mathcal{D}_i$ is in a infoset of size 1, say $I'_{d_h}$. Since $i$ has no choice in the nodes in $\mathcal{D}_i$, and since each player (including $i$) has the same information in both games in each of its decision nodes (with $|A_h|>1$), it is clear that $u_1(\SC(\Gamma))=u_1(\SC(\Gamma'))$ for any $\SC$ listed in the proposition statement. Moreover, for any distinct $h,h' \in \calH$, $\seq'_i(h) \neq \seq'(h')$ in $\Gamma'$, since one contains $I_{d_h}$ and the other contains $I_{d_{h'}}$. Therefore, each $h \in \calH'_i$ will be placed in a different infoset in $\pr_i(\Gamma')$, giving $i$ perfect information. 
\end{proof}

\subsection{Proof of \Cref{prop:recall_bad}}
We now prove that one player getting perfect recall can arbitrarily hurt every player.

\recallbad*
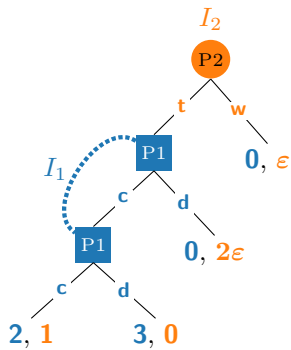
\begin{figure}[t]
    \tikzset{
        every path/.style={-},
        every node/.style={draw},
    }
    \forestset{
    subgame/.style={regular polygon,
    regular polygon sides=3,anchor=north, inner sep=1pt},
    }
    \centering
        \begin{forest}
            [\scriptsize{P2},p2gs,name=p20,s sep=25 pt,l sep=21pt 
                [\scriptsize{P1},p1gs,name=p0,el={2}{t}{},s sep=25pt,l sep=21pt
                    [\scriptsize{P1},p1gs,name=p1b,el={1}{c}{},s sep=25pt,l sep=21pt
                        [\util1{2}\text{, }\util2{1},terminal,el={1}{c}{},yshift=-3.3pt]
                        [\util1{3}\text{, }\util2{0},terminal, el={1}{d}{},yshift=-3.3pt]
                    ]
                    [\util1{0}\text{, }\util2{$\boldsymbol{2\varepsilon}$},terminal,el={1}{d}{},yshift=-3.3pt]
                ]
                [\util1{0}\text{, }\util2{$\boldsymbol{\varepsilon} $},terminal,el={2}{w}{},yshift=-3.3pt]
            ]\node[above=0pt of p20,draw=none,p2color]{$I_2$};
            \draw[infoset1] (p0) to [bend right=90,]node[left,draw=none,p1color,]{$I_{1}$} (p1b);
        \end{forest}
    \caption{A game with imperfect recall. Giving Player 1 (\ponegs) perfect recall hurts both players. Terminals show utilities for Player 1 and Player 2 (\ptwogs). Infosets are joined by dotted lines (repeated from \Cref{fig:recall_bad}).
    \label{fig:recall_only_bad}}
\end{figure}

\begin{proof} Say $\Gamma$ is the two-player game from \Cref{fig:recall_bad} (repeated as \Cref{fig:recall_only_bad}) with $\varepsilon  \in (0,1)$. Since Player 2 has a single decision node (and thereby a single infoset), the CDT/EDT/Nash equilibria of the single-player game from its perspective (for any fixed strategy of Player 1) coincide. Say $\pi$ is a CDT equilibrium of $\Gamma$ with $\pi_1(\bluec\mid I_1)=p$. Then playing $\oranget$ would bring P2 a utility of $p^2+(1-p)2\varepsilon$, whereas playing $\orangew$ bring a utility of $\varepsilon$. Since $p^2+(1-p)2\varepsilon > \varepsilon$ for all $p \in [0,1]$ and $\varepsilon \in (0,1)$, in order for $\pi$ to be a KKT point, we must have $\pi_2(\oranget \mid  I_2)=1$. Then, $u_1(\pi) = 2p^2 +3p(1-p)$. The only KKT point of this is $p=1$. Hence, the only CDT equilibrium of $\Gamma$ is for P1 and P2 to always play $\bluec$ and $\oranget$, bringing them utility $2$ and $1$, respectively.

Now consider $\pr_1(\Gamma)$, and say $\pi^*$ is a CDT equilibrium. Assume for the sake of contradiction that  $\pi^*_2(\oranget \mid  I_2)=q>0$. In that case, $u_1(\pi^*)= q(2p_1p_2+3p_1(1-p_2))$ where $p_1,p_2$ are the probabilities that $\pi^*_1$ places to $\bluec$ in the first and second decision node of P1, respectively. For any $q>0$, the only KKT point of this is $p_1=1$ and $p_2=0$. However, this implies $u_2(\pi^*)= (1-q)\varepsilon$, the only KKT point of which is $q=0$, which is a contradiction. Hence, we must have $\pi^*_2(\oranget \mid  I_2)=0$. This implies that the only CDT equilibrium of $\pr_1(\Gamma)$ (up to realization equivalence) is P2 always playing $\orangew$, bringing P1 and P2 utilities 0 and $\varepsilon$, respectively.

By \Cref{lem:EQ hierarchy}, this shows that  $\vor^{\SC}(\Gamma)=\frac{u_1(\SC(\pr_1(\Gamma)))}{u_1(\SC(\Gamma))}=\frac{0}{2}=0$ and $\frac{u_2(\SC(\pr_1(\Gamma)))}{u_2(\SC(\Gamma))}=\frac{\varepsilon}{1}=\varepsilon$ for any $\SC \in \{\wcdt, 
\bcdt, \wedt, \bedt, \wnash, \bnash\}$. In particular, since $\varepsilon$ can be arbitrarily close to 0, this proves the proposition.
\end{proof}

\subsection{Proof of \Cref{thm:hardness}}
We now prove that VoR is hard to compute. As stated in the main body of the paper, for this section alone, we assume that $u_i(z)\geq \eta$ for all $i \in \calN$ and $z \in \calZ$ for some $\eta >0$, to ensure VoR is bounded.

\vorhardness*

The proof of the theorem mostly relies on existing hardness results for computing some of these solution concepts in single-player imperfect-recall games. We fill first prove a novel hardness result for computing $\wcdt$ and $\wedt$. For $\wcdt$, even \NP-hardness is novel; for $\wedt$, while \NP-hardness was known~\citep{Tewolde23:Computational}, we prove a stronger inapproximability result.

\begin{lemma}\label{lemma:wedt-hard}
        For a single-player game $\Gamma$, both $u_1(\wcdt(\Gamma))$ and $u_1(\wedt(\Gamma))$ are \NP-hard to compute and \NP-hard to approximate to any multiplicative factor.
\end{lemma}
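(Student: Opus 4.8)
The plan is to reduce from 3SAT, constructing from a formula $\phi$ a single-player imperfect-recall game $\Gamma_\phi$ whose worst EDT (equivalently CDT) equilibrium value is $0$ when $\phi$ is satisfiable and bounded below by a fixed constant $c>0$ otherwise. I would build $\Gamma_\phi$ \emph{without absentmindedness}, so that by \Cref{rem:edt = cdt w/o abs} the sets of EDT and CDT equilibria coincide; this lets a single construction and a single argument settle both $\wedt$ and $\wcdt$ at once. The game always has at least one equilibrium (its optimal strategy is a CDT/KKT point), and I would arrange payoffs so that this ``honest'' equilibrium collects a positive reward (say utility $1$), while other terminals may pay $0$. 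Hence the best equilibrium always has value $\geq 1$, and the only question is whether a value-$0$ equilibrium \emph{also} exists.

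The mechanism I would exploit is the permissiveness of EDT/CDT already on display in \Cref{ex:refinement_needed}: because these concepts only check deviations at a \emph{single} infoset, a low-value profile can be an equilibrium even though a globally coordinated re-plan would earn the reward, as long as every single-infoset change is unhelpful (the chicken-and-egg effect, where threats placed at unreached infosets justify avoiding them). Concretely, I would introduce one infoset $V_i$ per variable $x_i$ with actions $\mathsf{T}/\mathsf{F}$, repeated across the tree so that imperfect recall forces a single consistent truth value, together with clause gadgets whose escape routes toward the reward are ``blocked'' exactly when the values read off the $V_i$ satisfy the corresponding clause. For completeness, given a satisfying assignment $\alpha$ I would exhibit a value-$0$ equilibrium that plays $\alpha$ and places blocking threats at the unreached infosets, verifying the $\argmax$/KKT condition infoset by infoset (free at unreached infosets by the reach-probability-$0$ argument, locally optimal on the support). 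For soundness, if $\phi$ is unsatisfiable then any candidate value-$0$ profile induces an assignment violating some clause, which opens a single-infoset deviation strictly increasing $\U_1$ toward the reward; hence no value-$0$ equilibrium survives and every equilibrium has value $\geq c$.

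Given this $0$-versus-$c$ gap, computing $u_1(\wedt(\Gamma_\phi))$ or $u_1(\wcdt(\Gamma_\phi))$ decides satisfiability, which already yields \NP-hardness of exact computation. Inapproximability to \emph{any} multiplicative factor then comes for free: any algorithm returning a value within a fixed multiplicative factor of the truth must output $0$ when the true value is $0$ and a strictly positive number when it is $\geq c$, so it would decide 3SAT. The hard part will be the soundness direction together with the simultaneous bookkeeping of the equilibrium conditions: I must design the clause gadgets so that (i) in the satisfiable case the blocking threats are genuinely stable against every single-infoset EDT/CDT deviation, and (ii) in the unsatisfiable case some profitable single-infoset deviation always exists in any purported value-$0$ profile, all while keeping $\Gamma_\phi$ free of absentmindedness so that the EDT and CDT analyses coincide.
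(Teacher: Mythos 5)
Your skeleton matches the paper's proof: a reduction from 3SAT to a single-player game \emph{without} absentmindedness (so EDT and CDT equilibria coincide by \Cref{rem:edt = cdt w/o abs}), in which a low-value worst equilibrium exists precisely when $\phi$ is satisfiable, exploiting exactly the unreached-infoset permissiveness you cite from \Cref{ex:refinement_needed}. Your ``blocked exactly when the clause is satisfied'' gadget is realized in the paper by inverting the reward: the player earns the large payoff by \emph{violating} a clause (one clause subtree per chance outcome, with the three layers of each subtree lying in the variable infosets), a root action $N$ yields the minimum payoff, and a satisfying assignment placed at the (unreached) variable infosets is itself the blocking threat that makes $N$ a best response. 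So your gadget intuition is on target and your overall route is the paper's.

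There are, however, two concrete defects. First, your $0$-versus-$c$ gap is incompatible with the setting: the lemma lives under the standing assumption $u_1(z) \geq \eta > 0$ (imposed for \Cref{thm:hardness} precisely so that the ratio $u_1(\SC(\pr_1(\Gamma)))/u_1(\SC(\Gamma))$ is bounded), and with zero-utility terminals the downstream VoR in your satisfiable case has a zero denominator. Moreover, once value $0$ is allowed, ``inapproximable to any multiplicative factor'' degenerates into ``detecting zero is hard,'' which is really hardness of exact computation in disguise. The paper instead fixes an arbitrary $M \geq 1$, pays $\eta$ at satisfied-clause leaves and at $N$, pays $M' + \eta$ with $M' = 8 M \eta n$ at violated-clause leaves, and obtains a genuine factor-$(1+M)$ gap ($\eta$ versus at least $(1+M)\eta$) on instances whose values are bounded away from zero. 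Second, your soundness step---``any candidate value-$0$ profile induces an assignment violating some clause''---is false as stated for mixed strategies, and this is exactly the technical core you flagged but did not supply. The paper's fix is a rounding argument: round the mixed strategy at each variable infoset to its majority action; the resulting pure assignment violates some clause $C$, and the original mixed strategy plays the violating outcome on $C$'s three variables with probability at least $(1/2)^3 = 1/8$, so \emph{every} strategy earns at least $\eta + M'/(8n) = (1+M)\eta$ from the chance node. Hence at any equilibrium the root must play $Y$ with probability $1$ (else deviating at the root infoset alone strictly improves), giving the uniform lower bound on all equilibria---note the profitable single-infoset deviation is at the root, not at a variable infoset as your sketch suggests. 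With these two repairs your plan becomes the paper's proof.
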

\begin{proof}
    Fix any $M\geq 1$. We will prove that the worst CDT and EDT equilibria utilities of a single-player game is \NP-hard to approximate to a multiplicative factor of $M$. We will be reducing from 3SAT. Let $x_1, \ldots , x_\ell$ be the variables of a 3CNF formula $\phi$ with $n$ clauses. We construct a game instance $\Gamma$ as follows. Each variable $x_i$ has a corresponding info set $I_i$ with $A_{I_i}=\{T,F\}$ (two actions for each info set). The root of the tree is a player node $h_0 \in \calH_1$, which is in an infoset of size 1 (say $I_0 \coloneqq\{h_0\}$) and $A_{h_0}=\{Y,N\}$. Playing $N$ at $h_0$ leads to a leaf node that brings utility $\eta$. Playing $Y$ at $h_0$, on the other hand, leads to chance node that uniformly at random selects one of $n$ subtrees, each corresponding a clause $C$ in $\phi$. The subtree associated with clause $C$ starts at a node $h_C$ with $|A_{h_C}|=1$ and $\{h_C\} \in \calI_1$ (\emph{i.e.}, $h_C$ is in its own infoset). The single action at $h_C$ leads to a binary tree of depth 3. Say the clause $C$ contains variables $x_i , x_j , x_k$; then, the nodes in the first, second, and third layer of the binary tree belong to infosets $I_i,I_j$, and $I_k$, respectively. Accordingly, each leaf follows a sequence of 3 actions that can be interpreted as a truth assignment to $x_i , x_j , x_k$. If this assignment satisfies the clause $C$, then the leaf node brings utility $\eta$. Otherwise, the leaf node brings utility $M' + \eta$, where $M'=8 \cdot M \cdot \eta \cdot n$. Since $\Gamma$ has no absentmindedness (no infoset is entered multiple times on the path to a leaf), its CDT and EDT equilibria coincide (\Cref{rem:edt = cdt w/o abs}).

    First, we argue that if $\phi$ is satisfiable, then the worst EDT equilibrium of $\Gamma$ yields utility $\eta$. Assume $\phi$ is satisfiable for truth assignments $\{x^{*}_i\}_{i \in [\ell]}$. Consider the strategy $\pi$ that at infoset $I_i$ plays the action $x^*_i$ with probability 1 for each $i \in [\ell]$, and plays $N$ with probability 1 at $h_0$ (all other nodes have a single action and hence a single strategy), bringing utility $\eta$. To show that this is an EDT equilibrium, we must show that the player cannot increase its utility by deviating from $\pi$ at a single infoset. Notice that since each $I \in \calI_1 \setminus \{I_0\}$ is not on the path of play (\emph{i.e.}, is visited with probability 0), any devation at $I$ cannot possibly increase the expected utility. A deviation at $I_0$ cannot increase the utility either: since all clauses are satisfied, $\pi$ achieves $\eta$ on each subtree under the chance node, so playing $Y$ at $h_0$ (or any mix between $Y$ and $N$) would also bring utility $\eta$. Since $\eta$ is the smallest utility in the game, $\pi$ is also a worst EDT equilibrium.

We next claim that if $\phi$ is not satisfiable, then for any strategy $\pi \in S_1$, we have $u_1(\pi\mid h_c)\geq (1+M)\eta$, where $h_c$ is the chance node in $\Gamma$. Fix any strategy $\pi$. If $\pi$ is pure, then it corresponds to a truth assignment for the variables of $\phi$. Since $\phi$ is not satisfiable, at least one clause needs to be not satisfied, and hence the corresponding subtree has expected utility $M'+\eta$. Starting from $h_c$, this tree is entered $\frac{1}{n}$ of the time, so $u_1(\pi\mid h_c)\geq \eta+ \frac{M'}{n}>(1+M)\eta$. Otherwise, say $\pi$ is mixed. Consider the pure strategy profile $\pi'$ constructed by rounding the probabilities in $\pi$, \emph{i.e.}, at an info set $I_i$, $\pi'$ always plays $T$ if $\pi$ plays $T$ with probability $\geq 0.5$, otherwise $\pi'$ always plays $F$. Since $\pi'$ is pure, it corresponds to a truth assignment, which needs to leave at least one clause $C$ unsatisfied. Say $x_i,x_j,x_k$ are the variables in $C$ and say $\pi$ plays the action $\pi'$ plays at $I_\alpha$ with $1/2 + \varepsilon_\alpha$ probability ($\varepsilon_\alpha\geq 0$) for $\alpha \in \{i,j,k\}$. If $h$ is the root of the subtree corresponding to $C$,
    \begin{align*}
        u_1(\pi\mid h) =& \eta+ M' \cdot \Prob[\pi\text{ plays an assignment not satisfying }C]\\
        \geq & \eta+  M'\cdot\Prob[\pi\text{ plays the outcome of } \pi']\\
        =& \eta+M' \cdot \left(\frac{1}{2}+ \varepsilon_i\right)\cdot  \left(\frac{1}{2}+ \varepsilon_j\right)\cdot  \left(\frac{1}{2}+ \varepsilon_k\right)\geq \eta + \frac{M'}{8}.
    \end{align*}

Starting from $h_c$, since the subtree corresponding to $C$ is entered with probability $\frac{1}{n}$ and since all leaves bring you at least $\eta$ utility,  $u_1(\pi\mid h_c) \geq \eta+ \frac{M'}{8n}= (1+M)\eta>\eta$. This implies that for $\pi$ to be an EDT equilibrium, it must play $Y$ with probability 1 at $I_0$, otherwise the player can increase its utility by deviating to always playing $Y$. Since $\pi$ was arbitrarily chosen, this implies that the worst EDT equilibrium utility is at least $(1+M)\eta$. 

Since the ratio between $u_1(\wedt(\Gamma))$ when $\phi$ is satisfiable and when it is not satisfiable is at least $\frac{(1+M)\eta}{\eta}=1+M$, approximating $u_1(\wedt(\Gamma))$ to a multiplicative factor of $M$ would allow distinguishing between these two cases, and hence determining whether $\phi$ is satisfiable, solving the 3SAT instance. Since 3SAT is \NP-hard and $M$ was arbitrarily chosen, this proves that it is  \NP-hard to approximate $u_1(\wedt(\Gamma))$ to any multiplicative factor. Since $\wedt$ and $\wcdt$ of $\Gamma$ coincide, the same is true for $u_1(\wcdt(\Gamma))$. Naturally, this also proves $u_1(\wedt(\Gamma))$ and $u_1(\wcdt(\Gamma))$ are \NP-hard to compute exactly.
\end{proof}

We now present the proof of the theorem.

\begin{proof}[Proof of \Cref{thm:hardness}]
    
    Consider a single-player game $\Gamma$.  Recall that  $\vor^{\SC}(\Gamma)=\frac{u_1(\SC(\pr_1(\Gamma)))}{u_1(\SC(\Gamma))}$. We will argue that the nominator is easy to compute for each $\SC$ in the theorem statement. For $\SC \in \{\bcdt,\bedt,\wnash,\bnash\}$, this is true since each of these solution concepts corresponds to the optimal play in a single-player perfect-recall game, which can be computed in polynomial time~\citep{Stengel96:Efficient}. For $\SC\in \{\wcdt, \wedt\}$, we have to be more careful, since these do not necessarily coincide with the optimal strategy, even with perfect recall, as seen in \Cref{ex:refinement_needed}. To sidestep this issue, we observe that in the class of games described in \Cref{lemma:wedt-hard}, it is straightforward to compute $u_1(\pr_1(\SC(\Gamma)))$ for $\SC \in \{ \wcdt, \wedt \}$. Indeed, say $\Gamma$ is the game from the construction of the proof of~\Cref{lemma:wedt-hard}. Because of the dummy nodes $h_C$ that observe the outcome of the chance node in $\Gamma$, in $\pr_1(\Gamma)$ the player will have separate information sets for each subtree under the chance node $h_c$. Consider then the strategy $\pi$ that at each subtree acts specifically to satisfy the clause associated with that subtree, obtaining $u_1(\pi\mid h_c)=\eta$. Moreover, say $\pi(N\mid I_{0})=1$, so the player never reaches the chance node. Since no infoset $I \in \calI_1 \setminus \{I_0\}$ is on the path of play, the player cannot increase its expected utility by deviating at $I$. Similarly, deviating at $I_0$ cannot bring positive utility, as both actions bring $\eta$ utility in expectation. Also, there is no absentmindedness, so CDT and EDT equilibria coincide. Hence, $u_1(\wedt(\Gamma))=u_1(\wcdt(\Gamma))=\eta$ for this class of games.
    
    Moreover, \citet{Tewolde23:Computational} show that $u_1(\SC(\Gamma))$ is \NP-hard to compute for single-player imperfect-recall games for $\SC \in \{\bcdt,\bedt,\wnash,\bnash\}$, and that the problem admits no $\FPTAS$ unless \NP=\ZPP. Since $u_1(\SC(\Gamma))$ can be computed by first computing $u_1(\SC(\pr_1(\Gamma)))$ and $\vor^{\SC}(\Gamma)$ (both of which are bounded and nonzero by the assumption on utilities) and getting their ratio, these same hardness results translate to $\vor^{\SC}(\Gamma)$. Similar reasoning yields the claimed hardness results with respect to $\SC \in \{\wcdt, \wedt \}$ by~\Cref{lemma:wedt-hard}. In these cases, the inapproximability results are stronger: it is \NP-hard to approximate $\vor^{\SC}(\Gamma)$ to any multiplicative factor.
\end{proof}

\subsection{Proof of \Cref{prop:1p_optimal_lowerbound}}
We show that recall can only help in single-player games in terms of optimal strategies and equivalent solution concepts.
\optvorgood*
\begin{proof}
Take any single-player game $\Gamma$ and say $\pi$ is an optimal strategy, \emph{i.e.} $u_1(\pi)=u_1(\opt(\Gamma))$. By \Cref{def:prr}, any infoset $I \in \calI_1$ of $\Gamma$ is partitioned into (possibly multiple) infosets $\mathcal{J}_I$ in $\pr_1(\Gamma)$. Consider a strategy $\pi'$ in $\pr_1(\Gamma)$ such for each $I \in \calI_1$ and each $I' \in \mathcal{J}_I$, we have $\pi'(\cdot \mid I')=\pi(\cdot \mid I)$, \emph{i.e.}, $\pi'$ acts at each $I' \in \mathcal{J}_I$ as $\pi$ acts in $I$. Clearly, $\pi'$ achieves the same utility as $\pi$, implying $u_1(\opt(\pr_1(\Gamma)))\geq u_1(\opt(\Gamma))$. Since optimal strategies coincide with the best CDT and best EDT equilibria of single-player games (see \Cref{lem:EQ hierarchy}), this proves the proposition statement.
\end{proof}
\subsection{Proof of \Cref{prop:1p_vor1}}

We next proceed with the proof of~\Cref{prop:1p_vor1}. (The proof of~\Cref{prop:edt-nash} is deferred to \Cref{sec:cdt-nash}, where we introduce some further background on CDT-Nash equilibria.)

\optone*

\begin{proof}
    For any $z\in \calZ$, $\seq(z)$ only contains nodes where the player acts, and each $I \in \calI_1$ appears at most once in $\obs(z)$. Playing $a_k$ with probability 1 in each $I_k$ guarantees reaching $z$ with probability 1. Therefore, the player can achieve utility $\max_{z \in \calZ} u_1(z)$ with a pure strategy, which is the max possible utility in both $\Gamma$ and $\pr_1(\Gamma)$, hence an optimal strategy. 
\end{proof}

\subsection{Proof of \Cref{lemma:am_strategy}}
We now prove a connection between the absentmindedness coefficient of a leaf and the probability of reaching that leaf in $\Gamma$.
\eminne*

\begin{proof}
    Given $\obs(z)=(i_k, I_k,a_k)_{k=0}^{\depth(z)-1}$, for each $I \in \calI_1$ and $a \in I$, define $\pi _z(a\mid I) = p_z(a)$ if $\exists k$ such that $I_k = I$, and pick an arbitrary mixed action for all other $I$. Since $\sum_{a \in A_I}n_z(a)=n_z(I)$ and hence $\sum_{a \in A_I}p_z(a)=1$, this is a valid strategy. Since there are no chance nodes, we have
    \begin{align*}
        \Prob(z \mid \pi _z) &= \prod_{k=0}^{\depth(z)-1} \pi _z(a_k\mid I_k) 
        = \prod_{k=0}^{\depth(z)-1} p_z(a_k)\\&= \prod_{\substack{{I \in \calI_1: n_z(I)>1}  \\{a \in A_I: n_z(a)>0}}} p_z(a)^{n_z(a)}= \am(z).
    \end{align*}
    $u_1(\pi _z) \geq \am(z)u(z)$ follows since the utilities are nonnegative. 
\end{proof}

\subsection{Proof of \Cref{prop:1p_am} and bounding $\am(z)$}

We prove the bound on single-player games with absentmindedness but no chance nodes.
\ambound*
\begin{proof}
    By \Cref{prop:1p_vor1}, there exists a pure optimal strategy in $\pr_1(\Gamma)$. Since there are no chance nodes in $\Gamma$, P1 is able to reach any $z \in \calZ$ in $\pr(\Gamma)$, so $u_1(\opt(\pr(\Gamma)))=\max_{z \in \calZ} u_1(z)=u_1(z^*)$. For $\Gamma$, by Lemma \ref{lemma:am_strategy}, we have $u_1(\opt(\Gamma)) \geq \max_{z \in \calZ} u_1(\pi _z) \geq \max_{z \in \calZ} \am(z)u_1(z) \geq \am(z^*)u(z^*)$. Plugging these into the definition for $\vor^{\opt}(\Gamma)$ gives us the first and second inequalities from the proposition statement. 
\end{proof}

To give intuition on how small $\am(z)$ can get, we prove a separate proposition that formalizes the worst-case scenario in terms of absentmindedness: only one leaf node brings positive utility, and reaching it requires playing each action equally often.
\begin{prop}\label{prop:am_degree}
    For each $z \in \calZ$, we have
    \begin{align*}
        \frac{1}{\am(z)} \leq  \prod_{I \in \calI_1: n_{z}(I) >1 } \min(n_{z}(I), |A_I|)^{n_{z}(I)}
    \end{align*}
\end{prop}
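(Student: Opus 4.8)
The plan is to first regroup the product defining $\am(z)$ by infoset, which reduces the statement to a single-infoset inequality, and then to establish that inequality by a convexity (entropy) argument.

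First I would rewrite $1/\am(z)$. In the definition of $\am(z)$ each action $a$ ranges over the action set $A_I$ of the infoset $I$ it is paired with, and $p_z(a) = n_z(a)/n_z(I)$, so grouping the product by infoset gives
\[
\frac{1}{\am(z)} \;=\; \prod_{I \in \calI_1 : n_z(I) > 1} \;\prod_{a \in A_I : n_z(a) > 0} \left(\frac{n_z(I)}{n_z(a)}\right)^{n_z(a)}.
\]
Hence it suffices to show, for every fixed infoset $I$ with $m \coloneqq n_z(I) > 1$, that the inner product is at most $\min(m, |A_I|)^m$; taking the product of these bounds over all relevant $I$ then yields the claim.

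Fix such an $I$ and let $a_1,\dots,a_t$ be the actions of $A_I$ with $n_j \coloneqq n_z(a_j) > 0$, where each $n_j$ is a positive integer. Because every one of the $n_z(I)$ visits to $I$ on the path to $z$ selects exactly one action of $A_I$, we have $\sum_{j=1}^{t} n_j = n_z(I) = m$ (this is the same bookkeeping identity already used in the proof of \Cref{lemma:am_strategy}); in particular $t \le m$, and trivially $t \le |A_I|$, so $t \le \min(m,|A_I|)$. The inner product equals $m^m / \prod_{j=1}^{t} n_j^{n_j}$, so it remains to prove $m^m / \prod_{j} n_j^{n_j} \le t^m$, i.e. $\prod_{j} n_j^{n_j} \ge (m/t)^m$. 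This is exactly a convexity statement: the map $x \mapsto x\ln x$ is convex, so Jensen's inequality applied to $n_1,\dots,n_t$ (whose mean is $m/t$) gives $\sum_j n_j \ln n_j \ge m \ln(m/t)$, and exponentiating yields the desired bound. Combining $m^m/\prod_j n_j^{n_j} \le t^m \le \min(m,|A_I|)^m$ and multiplying over all infosets with $n_z(I) > 1$ completes the argument.

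The only substantive step is the convexity bound $\prod_j n_j^{n_j} \ge (m/t)^m$ (equivalently, that for a fixed support of size $t$ the inner product is at most $t^m$, with the uniform split as the extremizer of the continuous relaxation); everything else is bookkeeping about how the occurrence counts sum to $n_z(I)$. Intuitively, this says the worst case for absentmindedness at an infoset is to spread the required plays as evenly as possible over as many distinct actions as possible, which is precisely the scenario realized in \Cref{ex:1am_tight}.
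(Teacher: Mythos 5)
Your proof is correct and is essentially the paper's own argument: the paper takes logarithms, writes $\log \am(z) = -\sum_{I \in \calI_1 : n_z(I)>1} n_z(I)\, H(\pi_z(\cdot \mid I))$, and bounds each Shannon entropy by the logarithm of the support size $\min(n_z(I), |A_I|)$, and your Jensen step for $x \mapsto x \ln x$ is precisely the standard proof of that entropy bound, phrased multiplicatively on the integer counts. The per-infoset decomposition, the support-size bound $t \le \min(n_z(I), |A_I|)$, and the uniform split as extremizer all coincide with the paper's proof.
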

\begin{proof}
    For any $z\in \calZ$, we have
    \begin{align*}
        \log \am(z) &= \sum_{\substack{{I \in \calI_1: n_z(I)>1}  \\{a \in A_I: n_z(a)>0}}} n_z(I) p_z(a)\log(p_z(a))= \sum_{I \in \calI_1: n_z(I)>1}  -n_z(I)H(\pi _z(\cdot \mid I))
    \end{align*}
    where $H(\pi _z(\cdot \mid I))$ is the Shannon entropy of $\pi _z(\cdot \mid I)$. By construction, the size of the support of $\pi_z$ is bound by $\min(n_{z}(I), |A_I|)$. Since the entropy-maximizing (discrete) distribution is the uniform distribution, we have $H(\pi _z(\cdot \mid I)) \leq \log(\min(n_{z}(I), |A_I|))$ and hence
    \begin{align*}
         \log \am(z) &\geq \sum_{I \in \calI_1: n_z(I)>1}  -n_z(I)\log(\min(n_{z}(I), |A_I|))\\
         &= \log \left(\prod_{I \in \calI_1: n_{z^*}(I) >1 } \min(n_{z^*}(I), |A_I|)^{-n_{z^*}(I)}\right)\\
         \Rightarrow  \am(z) &\geq  \left(\prod_{I \in \calI_1: n_{z^*}(I) >1 } \min(n_{z^*}(I), |A_I|)^{n_{z^*}(I)}\right)^{-1},
    \end{align*}
    giving us the inequality of the lemma.
\end{proof}
The game from \Cref{ex:1am_tight} meets this upper bound, since $\frac{1}{\am(z^*)}= 2^n = |A|^{n_{z^*}(I)}$, showing that it is indeed tight.
\subsection{Proof of \Cref{cor:1pm_am}}
In the main paper of the body, we gave a corollary of \Cref{prop:1p_am} that relates to game classes.
\amclass*
\begin{proof}
        $\vor^{\opt}(\class) \leq \max_{z \in \calZ} \frac{1}{\am(z)}$ follows from \Cref{prop:1p_am}. To prove the lower bound, consider a game $\Gamma'$ that has the same game tree and infoset partition as $\Gamma$, but with different utilities: $u_1(z^*)=1$ for some $z^* \in \argmin_{z \in \calZ} \am(z)$, and $u_1(z)=0$ for all $z \in \calZ \setminus \{z^*\}$. We indeed have $\vor^\opt(\Gamma')=\frac{1}{\alpha(z^*)}$. Since $\Gamma' \in \class$, this proves that $\vor^{\opt}(\class) \geq \max_{z \in \calZ} \frac{1}{\am(z)}$.
\end{proof}

\subsection{Proof of \Cref{prop:1p_chance}}
Before proving \Cref{prop:1p_chance}, we first present an existing result about games without absentmindedness.
\begin{lemma}[\citealp{Koller92:Complexity}]\label{lemma:pure_optimal}
    For a single-player game without absentmindedness, there exists a pure optimal strategy.
\end{lemma}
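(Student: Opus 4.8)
The plan is to exploit the fact that, without absentmindedness, the expected utility $\U_1(\pi)$ is \emph{affine} in the randomized action $\pi(\cdot \mid I)$ at any single infoset $I$ once the actions at all other infosets are held fixed. Since a linear objective over a simplex attains its maximum at a vertex, this lets us purify the strategy one infoset at a time without ever decreasing the utility. First I would note that an optimal behavioral strategy exists at all: the strategy set $S = \bigtimes_{I \in \calI_1} \Delta(A_I)$ is compact and $\U_1$ is a polynomial (hence continuous) in the strategy coordinates, so the value $u_1(\opt(\Gamma))$ is attained by some $\pi^* \in S$.

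Next I would establish the affinity claim. Writing $\U_1(\pi) = \sum_{z \in \calZ} \Prob(z \mid \pi)\, u_1(z)$, each reach probability $\Prob(z \mid \pi)$ is a product of action probabilities along the path to $z$, contributing one factor $\pi(a_k \mid I_k)$ per player step and one chance factor per chance step. Fix an infoset $I$ and freeze the actions $\pi_{-I}$ at all other infosets. Because there is no absentmindedness, $I$ occurs at most once in $\obs(z)$ for every leaf $z$; hence each summand contains the variables $(\pi(a \mid I))_{a \in A_I}$ to degree at most one. Consequently $\U_1(\pi_1^{I \mapsto \sigma})$ is affine (indeed linear) in $\sigma \in \Delta(A_I)$ and is therefore maximized over the simplex at one of its vertices, i.e.\ at a pure action.

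Finally I would run the purification. Enumerate the infosets $I_1, \dots, I_m$ and process them in turn: starting from $\pi^{(0)} = \pi^*$, at step $j$ replace the action at $I_j$ by a pure best response against the current profile, yielding $\pi^{(j)}$ with $\U_1(\pi^{(j)}) \geq \U_1(\pi^{(j-1)})$ by the affinity claim. After $m$ steps, $\pi^{(m)}$ is pure and satisfies $\U_1(\pi^{(m)}) \geq \U_1(\pi^*) = u_1(\opt(\Gamma))$, so it is a pure optimal strategy. The main obstacle — and the only place where the hypothesis is really used — is verifying that purifying earlier infosets cannot destroy the affinity of the utility in a later infoset's action. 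This holds precisely because no-absentmindedness forces the degree-one property in $\pi_{I_j}$ to persist \emph{regardless} of the (now possibly pure) actions chosen elsewhere, keeping the vertex-maximization step valid throughout. With absentmindedness this fails: a single $\pi(\cdot \mid I)$ can appear to a power greater than one in a path product, turning the objective into a genuinely nonlinear function of $\pi_I$ whose maximum need not lie at a vertex, as the uniform-randomization optimum in \Cref{ex:1am_tight} illustrates.
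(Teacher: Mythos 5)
Your proof is correct: without absentmindedness, each infoset appears at most once along any root-to-leaf path, so every reach probability $\Prob(z \mid \pi)$ contains the variables $(\pi(a \mid I))_{a \in A_I}$ to degree at most one, making $\U_1$ affine in each infoset's randomized action with the others fixed, and your infoset-by-infoset purification at simplex vertices (valid throughout, since affinity is a structural property of the game tree rather than of the current strategy) then yields a pure strategy matching the attained optimum. The paper itself gives no proof of this lemma---it is stated with a citation to \citet{Koller92:Complexity}---and your multilinearity-plus-vertex argument is precisely the standard reasoning behind that cited result, so the proposal needs no repair.
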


We also prove two additional preliminary lemmas:
\begin{lemma}\label{lemma:pure_chance}
Given a single-player game $\Gamma$ and a pure strategy $\pi$, say $\{z_1,z_2,\ldots,z_\ell\} \subseteq \calZ$ are the leaves that the player reaches with nonzero probability under $\pi$. Then $u_1(\pi)= \sum_{i=1}^\ell \chance(z_i)u_1(z_i)$ and $ \sum_{i=1}^\ell \chance(z_i)=1$.
\end{lemma}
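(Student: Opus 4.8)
The plan is to show that for a pure strategy the reach probability of each terminal node coincides exactly with its chance coefficient, after which both claims fall out of the elementary fact that reach probabilities form a distribution over the leaves.

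First I would unpack $\Prob(z \mid \pi)$ as the product, over the steps $k = 0, \ldots, \depth(z)-1$ on the path to $z$ (read off from $\obs(z)$), of the probability with which action $a_k$ is taken at $h_k$. At a chance step ($i_k = c$) this factor is $\Prob_c(a_k \mid h_k)$; at a player step it is $\pi(a_k \mid I_k)$, which is either $0$ or $1$ because $\pi$ is pure. Hence $\Prob(z \mid \pi) > 0$ precisely when every player step contributes a factor of $1$ -- that is, exactly when $z$ is one of the reached leaves $z_1, \ldots, z_\ell$ -- and in that case the only surviving factors are the chance probabilities along the path, so $\Prob(z_i \mid \pi) = \prod_{j=1}^{\ell} \Prob_c(a_{k_j} \mid h_{k_j}) = \chance(z_i)$. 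For every other leaf, $\Prob(z \mid \pi) = 0$.

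Granting this identity, the first claim is immediate: expanding $u_1(\pi) = \sum_{z \in \calZ} \Prob(z \mid \pi) u_1(z)$ and discarding the vanishing terms leaves exactly $\sum_{i=1}^{\ell} \chance(z_i) u_1(z_i)$. For the second claim I would invoke the standard fact that $\sum_{z \in \calZ} \Prob(z \mid \pi) = 1$, proved by a downward induction on the tree (at each node the action probabilities of $\pi$ and of chance sum to $1$, so the reach mass of a node is distributed in full among its children, and the root has mass $1$); restricting this sum to the nonzero terms yields $\sum_{i=1}^{\ell} \chance(z_i) = 1$.

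The one point that warrants care is the interaction with absentmindedness: if an infoset $I$ is revisited on the path to $z$, then $\pi$, being pure, prescribes the same action at each occurrence, so consistency of the path with $\pi$ remains an all-or-nothing condition and every player-step factor is still $1$ for a reached leaf. This is what guarantees that no probability mass leaks and that the reach probability factorizes cleanly into chance probabilities alone. I do not expect a genuine obstacle here -- only the bookkeeping of separating the chance steps from the player steps in $\obs(z)$ and confirming that absentmindedness does not disturb the $\{0,1\}$-valued player factors.
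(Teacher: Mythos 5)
Your proposal is correct and matches the paper's proof in all essentials: both establish that pure play forces every player-step factor of $\Prob(z_i \mid \pi)$ to equal $1$, so the reach probability of each reached leaf collapses to its chance coefficient $\chance(z_i)$, after which both claims follow from the definition of $\U_1$ and the fact that reach probabilities sum to $1$ over the leaves. Your added remarks (the inductive justification that the reach probabilities form a distribution, and the observation that absentmindedness is harmless since a pure strategy plays the same action at every revisit of an infoset) merely spell out steps the paper leaves implicit; the only blemish is a notational slip where you reuse $\ell$ (the number of reached leaves) as the index bound for the chance steps, for which the paper uses a fresh symbol $m$.
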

\begin{proof}
    Fix any $i \in [\ell]$ and consider $\obs(z_i)=(i_k,I_k,a_k)_{k=0}^{\text{depth}(z_i)-1}$. Since $\pi$ is pure, for all $k$ such that $i_k=1$, we must have $\pi(a_k\mid I_k)=1$, otherwise $z_i$ would have 0 reach probability. Hence, if $k_1,k_2, \ldots k_m$ are the steps that correspond to chance nodes, \emph{i.e.}, $i_{k_j} =c$ for all $j \in [m]$, we have
    \begin{align*}
        \Prob(z_i | \pi)  = \prod_{j=1}^m \Prob_c(a_{k_j} | h_{k_j}) = \chance(z_i).
    \end{align*}
    Hence, $u_1(\pi)= \sum_{i=1}^\ell  \Prob(z_i | \pi) u_1(z_\ell)= \sum_{i=1}^\ell   \chance(z_i) u_1(z_\ell)$ and
    $\sum_{i=1}^\ell \chance(z_i) = \sum_{i=1}^\ell  \Prob(z_i | \pi)=1$.
\end{proof}
\begin{lemma}\label{lemma:betabound}
    Given a single-player game $\Gamma$ and a pure strategy $\pi$, pick any chance node $h \in \calH_c$. If $\Prob(h|\pi)>0$, then $|\{z \in \calZ: h \in \seq(z), \Prob(z|\pi)>0\}| \leq \branch(h)$. In words, the number leaf nodes in the subtree rooted at $h$ with nonzero reach probability is at most $\branch(h)$.
\end{lemma}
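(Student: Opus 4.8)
The plan is to count directly the number of leaves reachable from $h$ under the pure strategy $\pi$ and to match that count against the recursive definition of $\branch(h)=\beta(h)$. Since $\Prob(h\mid\pi)>0$, for a leaf $z$ with $h\in\seq(z)$ we have $\Prob(z\mid\pi)>0$ if and only if $\Prob(z\mid\pi,h)>0$, so the quantity to bound is $L(h)$, where for any node $v$ I write $L(v)$ for the number of leaves in the subtree rooted at $v$ that are reached with positive probability from $v$ under $\pi$ (and chance). The facts that drive everything are structural consequences of $\pi$ being pure, and they hold whether or not $\Gamma$ has absentmindedness: at a player node $v$, exactly one action carries probability $1$, so $L(v)=L(w)$ for the unique child $w$ it selects; at a chance node $v$, $L(v)=\sum_{a:\Prob_c(a\mid v)>0}L(va)\le\sum_{a\in A_v}L(va)$; and at a leaf $L(v)=1$.

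The heart of the argument is an auxiliary claim proved by induction: writing $H_v$ for the set of chance nodes in the subtree rooted at $v$, I claim $L(v)\le b^\star(v)$, where $b^\star(v)=1$ if $H_v=\emptyset$ and $b^\star(v)=\max_{h'\in H_v}\beta(h')$ otherwise. I would induct on the height of $v$. The base case (a leaf) is immediate. For a player node $v$ with selected child $w$, the structural fact gives $L(v)=L(w)\le b^\star(w)$ by induction, and it remains to observe the monotonicity $b^\star(w)\le b^\star(v)$, which follows because $H_w\subseteq H_v$ and $b^\star(\cdot)\ge 1$ always. For a chance node $v$, summing the inductive bound over its children yields $L(v)\le\sum_{a\in A_v}b^\star(va)=\sum_{a\in A_v}b_v(a)=\beta(v)$, using that $b^\star(va)$ is exactly $b_v(a)$ by the definition of $H_{va}$; finally $\beta(v)\le b^\star(v)$ since $v$ itself lies in $H_v$.

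With the claim in hand, the lemma follows by decomposing at the immediate children of $h$ rather than applying the claim to $h$ directly: $L(h)\le\sum_{a\in A_h}L(ha)\le\sum_{a\in A_h}b^\star(ha)=\sum_{a\in A_h}b_h(a)=\beta(h)=\branch(h)$.

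The main obstacle I anticipate is choosing the induction measure so that player nodes are handled correctly. Branching-type quantities do not decrease when passing through a player node, since all of the chance structure can sit below the single selected child; hence inducting on ``number of chance nodes below $v$'' fails, and inducting on height (or subtree size) is needed, at the cost of proving the monotonicity $b^\star(w)\le b^\star(v)$. The second subtlety is that applying the auxiliary claim to $h$ itself only gives the weaker bound $L(h)\le\max_{h'\in H_h}\beta(h')$, which can exceed $\beta(h)$; this is precisely why the final step must peel off $h$'s own branching explicitly and invoke the claim on the children $ha$, where the identity $b^\star(ha)=b_h(a)$ reconstructs $\beta(h)$ exactly.
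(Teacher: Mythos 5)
Your proposal is correct, and it reorganizes the argument in a genuinely different way from the paper. The paper inducts on the number of chance nodes in the subtree, with the inductive statement quantified over chance nodes only; its key step is a funneling argument: since $\pi$ is pure, for each child $ha$ there is at most one chance node $h' \in H_{ha}$ reached with positive probability having $h$ as the latest chance node in $\seq(h')$, every positively reached leaf under $ha$ must lie below that $h'$, and the inductive hypothesis applied to $h'$ (which has strictly fewer chance nodes below it) yields $\branch(h') \leq \max_{h'' \in H_{ha}} \branch(h'') = b_h(a)$; summing over $a \in A_h$ gives $\beta(h)$. You instead prove a pointwise claim $L(v) \leq b^\star(v)$ for \emph{every} node by induction on height, handling player nodes via $L(v) = L(w)$ together with the monotonicity $b^\star(w) \leq b^\star(v)$, chance nodes via the identity $b^\star(va) = b_v(a)$, and then peeling off the root chance node at the end---your player-node case is, in effect, the local form of the paper's funneling step, traced one edge at a time instead of jumping directly to the next chance node. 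Your version is more systematic: it treats all node types uniformly, makes the needed monotonicity explicit, and avoids the slightly delicate uniqueness claim about the ``first'' chance node below $ha$; the paper's version is terser because its induction never has to speak about player nodes at all. One nuance in your closing remarks: your observation that inducting on the chance-node count ``fails'' is accurate only for your all-nodes formulation of the claim---the paper does successfully induct on exactly that measure, precisely because it restricts the inductive statement to chance nodes and uses the funneling step to bridge across intervening player nodes. Your anticipated subtlety about not applying the auxiliary claim to $h$ itself (which would only give $\max_{h' \in H_h} \beta(h')$, potentially exceeding $\beta(h)$) is real and correctly resolved by decomposing at the children of $h$.
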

\begin{proof}
Recall that $\branch(h)=\sum_{a \in A_h}b_h(a)$, where
\begin{align*}
b_h(a) = \begin{cases} 1 & \text{if }|H_{ha}|=0\\ \max_{h \in H_{ha}}\branch(h)& \text{otherwise}
\end{cases},
\end{align*}
where $H_{ha} \subset \calH_c$ are the chance nodes in the subtree rooted at $ha$. We prove the lemma by induction on the number of chance nodes under $h$. For the base case, assume that the subtree rooted at $h$ contains no chance nodes. Then for each $a \in A_h$ such that $\Prob_c(a \mid h)>0$, the subtree rooted at $ha$ will have exactly one leaf node with nonzero reach probability, since $\pi$ is pure. As such, the number of such leaf  nodes in the subtree rooted at $h$ is at most $|A_h|$ (which is acheived if it puts nonzero probability in all of its actions), which is exactly $\branch(h)$ since $b_h(a)=1$ for all $a \in A_h$. For the inductive step, assume that the lemma statement is true for all chance nodes that contain at most $k-1$ chance nodes in their subtree. Say $h \in \calH_c$ contains $k$ chance nodes in its subtree. Fix any $a \in A_h$. Notice that since $\pi$ is pure, there can be at most one chance node in $h' \in H_{ha}$ such that $\Prob(h'\mid \pi)>0$ and $h$ is the latest chance node in $\seq(h')$. If there is no such node, then there can be at most one leaf in the subtree of $ha$ with nonzero reach probability. If there exists such a $h'$, all of the leaves under $ha$ with nonzero reach probability needs to be in the subtree rooted at $h'$ (if there exists a $z$ with nonzero reach probability under $ha$ but not $h'$, this would imply $\pi$ played a mixed strategy at the earliest split between $\seq(z)$ and $\seq(h')$, which is a contradiction). Since $h'$ has at most $k-1$ chance nodes under it, by the inductive hypothesis this implies that there are at most $\branch(h') \leq \max_{h'' \in H_{ha}} \branch(h'')$ leaves under $ha$. Summing over all $a\in A_h$, this implies that the total number of leaves under $h$ with nonzero reach probability is upper bounded by $\sum_{a \in A_h} \max(1, \max_{h' \in H_{ha}} \branch(h')) =\branch(h)$, as desired.
\end{proof}
We now turn to proving the proposition.

\chancebound*

\begin{proof}
For any $z \in \calZ$ with $\obs(z)=(i_k,I_k,a_k)_{k=0}^{\depth(z)-1}$, the strategy $\pi _z$, which plays $a_k$ with probability 1 for each non-chance node, achieves an expected utility of at least $\chance(z)u_1(z)$ since the utilities are nonnegative; hence $u_1(\opt(\Gamma))\geq \max_{z \in \calZ} \chance(z)u_1(z)$, giving the first inequality. For the second inequality, say $\pi ^*$ is a pure optimal strategy in $\pr_1(\Gamma)$, which exists by \Cref{lemma:pure_optimal}. Say $\{z_1,z_2,\ldots,z_\ell\} \subseteq \calZ$ are the leaves that the player reaches with nonzero probability under $\pi^*$. By \Cref{lemma:pure_chance}, we have $u_1(\pi^*)= \sum_{i=1}^\ell \chance(z_i)u_1(z_i)$ and $ \sum_{i=1}^\ell \chance(z_i)=1$. With imperfect recall, the player can follow $\pi_{z^\dagger}$ for $z^\dagger \in \argmax_{j \in [\ell]} \chance(z_j)u_1(z_j)$. By an averaging argument, this quantity is at least $\frac{u_1(\pi^*)}{\ell}$, ensuring $\vor(\Gamma) \leq \ell$. Say $h \in \calH^c$ is the first chance node that $\pi^*$ enters (otherwise, $\ell=1$, so $\vor^{\opt}(\Gamma)=1$ since $z_1$ can be reached under imperfect recall too). By \Cref{lemma:betabound}, we have $\ell \leq \beta(h) \leq \max_{h' \in \calH^c} \beta(h)'$, completing the proof of the proposition.
\end{proof}

\subsection{Proof of \Cref{thm:1p}}
We now show that our two bounds from \Cref{prop:1p_am} and \Cref{prop:1p_chance} compose in games that have both absentmindedness and chance nodes.
\onethm*

\begin{proof}
Given $\Gamma$, say $\pi$ is a pure optimal strategy of $\pr_1(\Gamma)$, which exists by \Cref{lemma:pure_optimal}. Say $z_1, \ldots z_\ell$ are the leaves in $\Gamma$ with nonzero reach probability under $\pi$. By \Cref{lemma:pure_chance}, we have $u_1(\opt(\Gamma))=u_1(\pi)= \sum_{i=1}^\ell \chance(z_i)u_1(z_i)$ and $ \sum_{i=1}^\ell \chance(z_i)=1$. Moreover, for any $z \in \calZ$ with $\obs(z)=(i_k,I_k,a_k)_{k=0}^{\depth(z)-1}$, the strategy $\pi_z$ (as defined in the proof of \Cref{lemma:am_strategy}) has $\Prob(z\mid \pi_z)=\am(z)\chance(z)$ and achieves an expected utility of at least $\am(z)\chance(z)u_1(z)$ since the utilities are nonnegative; hence $u_1(\opt(\Gamma))\geq \max_{z \in \calZ} \am(z)\chance(z)u_1(z)$. This implies
\begin{align*}
    \vor^{\opt}(\Gamma)& \leq \frac{\sum_{i=1}^\ell \chance(z_i)u_1(z_i)}{ \max_{z \in \calZ} \am(z)\chance(z)u_1(z)} =  \sum_{i=1}^\ell \frac{ \chance(z_i)u_1(z_i)}{ \max_{z \in \calZ} \am(z)\chance(z)u_1(z)} \\&\leq \sum_{i=1}^\ell \frac{ 1}{\am(z_i)} \leq \max_{z \in \calZ} \frac{\ell}{\am(z)} \leq \max_{z \in \calZ, h \in \calH_c} \frac{\branch(h)}{\am(z)},
\end{align*}
where the last inequality follows from \Cref{lemma:betabound}. Since $\branch(h)$ and $\am(z)$ are both independent of utilities, this bound applies for all $\Gamma'$ that has the same game tree and infosets as $\Gamma$. Hence, if $\class$ is the class of games that share the same game tree and infosets as $\Gamma$, then
    \begin{align*}
       \vor^\opt(\class) \leq \max_{z \in \calZ, h \in \calH_c} \frac{\branch(h)}{\am(z)}.
    \end{align*}
\end{proof}

\subsection{An example of a smooth game}

Next, we provide an example of a \emph{smooth} game. In what follows, it will be convenient to work with the following extension of~\Cref{def:smooth}.

\begin{defn}[Extension of~\Cref{def:smooth}]
    \label{def:smooth-ext}
    Suppose that there exist functions $\Pi_I : S \to \mathbb{R}$ and $\Pi'_I : S_{-I} \to \mathbb{R}$ such that $\sum_{I \in \calI_1} \Pi_I(\pi) \leq u_1(\pi)$ and $\Pi_I(\pi) \propto u_1(\pi) + \Pi'_I(\pi_{-I})$ for every $I \in \calI_1$ and $\pi \in S$. A single-player game $\Gamma$ is \emph{$(\lambda, \mu)$-smooth} if there exists $\pistar \in S$ such that for any $\pi \in S$,
    \begin{equation*}
        \sum_{I \in \calI_1} \Pi_I(\pi^*_{I}, \pi_{-I}) \geq \lambda u_1(\opt(\Gamma)) - \mu u_1(\pi).
    \end{equation*}
\end{defn}

In particular, compared to~\Cref{def:smooth}, above we replaced the left-hand side of~\eqref{eq:smooth} with $\sum_{I \in \calI_1} \Pi_I(\pi^*_{I}, \pi_{-I})$; this is a relaxation as one can simply take $\Pi_I(\pi) \coloneqq u_1(\pi)/|\calI_1|$ and $\Pi'_I(\pi) \coloneqq 0$. It is easy to see that all implications of smoothness we saw earlier in the main body, and in particular~\Cref{prop:smooth}, extend under~\Cref{def:smooth-ext}. We are now ready to present an example of a smooth (imperfect-recall) game.

\begin{ex}\label{ex:IR_smooth}
    This example is based on \emph{valid utility games}~\citep{Vetta02:Nash}. Suppose that there is a ground set $E$ and a nonnegative, nondecreasing, submodular function\footnote{ We recall that a function $V: 2^E \to \mathbb{R}$ is submodular if $V( X \cap Y ) + V(X \cup Y) \leq V(X) + V(Y)$ for every $X, Y \subseteq E$.} $V$ defined on subets of $E$. We construct a single-player, imperfect-recall game $\Gamma$ as follows. At every infoset $I \in \calI_1$, the player selects an action $a_I \in A_I \subseteq 2^E$, whereupon the player forgets taking that action. The resulting utility is $V( U(a))$, where we use the notation $U(a) \coloneqq \bigcup_{I \in \calI_1} a_I$. Less abstractly, such games capture facility location problems orchestrated by profit-maximizing monopolies~\citep{Vetta02:Nash}. Further applications are discussed by~\citet{Goemans04:Market}.
    
    We will show that $\Gamma$ is $(1, 1)$-smooth per~\Cref{def:smooth-ext}. Let $a, a^* \in \bigtimes_{I \in \calI_1} A_I$ be two action profiles. By the properties of $V$,
    \begin{align}
        \label{align:ext-smooth}
        \sum_{ I \in \calI_1} [ V (U(a^*_I, a_{-I})) - V(U(\emptyset, a_{-I}))] \geq V(U( a^* )) - V(U(a));
    \end{align}
    this derivation is similar to~\citep[Example 2.6]{Roughgarden15:Intrinsic}. Now, we define $\Pi_I(a) \coloneqq V(U(a)) - V(U(\emptyset, a_{-I}))$ for each $I \in \calI_1$. For convenience, we let $\calI_1 \coloneqq \{I_1, \dots, I_n\}$ and $M_i(a) \coloneqq \bigcup_{j=i}^n a_{I_j}$. Then, we have
    \begin{align*}
        \sum_{ k = 1 }^n V(U( \emptyset, a_{-I_k} )) &= V(U(\emptyset, a_{-I_1})) + V(U(\emptyset, a_{-I_2})) + \sum_{k = 3 }^n V(U(a_{-I_k})) \\
        &\geq  V( U(a)) + V(M_3(a)) + \sum_{k = 3 }^n V(U(a_{-I_k})) \\
        &= V( U(a)) + V(M_3(a)) + V(U(a_{-I_3})) + \sum_{k = 4 }^n V(U(a_{-I_k})) \\
        &\geq 2 V(U(a)) + V(M_4(a)) + \sum_{k = 4}^n V(U(a_{-I_k})) \\
        &\geq \dots \\
        &\geq (n-2) V( U(a)) + V(M_{n}(a)) + V(U(a_{-I_n})) \\
        &\geq (n-1) V(U(a)),
    \end{align*}
    where we used the submodularity of $V$. In turn, this implies that 
    \[
        \sum_{I \in \calI_1} \Pi_I(a) = n V(U(a)) - \sum_{k = 1 }^n V(U(\emptyset, a_{-I_k})) \leq V(U(a)).
    \]
    As a result, the functions $\{ \Pi_I \}_{I \in \calI_1}$ satisfy the preconditions of~\Cref{def:smooth-ext}. Finally, considering $a^*$ to be a welfare-maximizing action profile and taking expectations in~\eqref{align:ext-smooth} yields the smoothness property with $\lambda = 1$ and $\mu = 1$, as claimed.
\end{ex}

\paragraph{Correlated solution concepts} It is worth noting that the primary motivation behind Roughgarden's smoothness was to provide PoA bounds not just for Nash equilibria, but also for more permissive equilibrium concepts that are computationally tractable; namely, \emph{(coarse) correlated equilibria}. This feature of smoothness is readily inherited by~\Cref{def:smooth}. More precisely, let us assume that the underlying game has no absendmindness. EDT equilibria can then be phrased as the Nash equilibria of a certain multi-player (normal-form) game (per~\Cref{def:EDT}), and it can be shown that they are hard to compute~\citep{Tewolde24:Imperfect}. On the other hand, one can define a relaxation thereof, say \emph{correlated EDT equilibrium}, as a correlated equilibrium of the corresponding game, which can be instead computed in polynomial time~\citep{Papadimitriou08:Computing,Jiang11:Polynomial}. A further compelling aspect of that equilibrium is that it arises when each infoset (separately) is consistent with the \emph{no-regret} property, which is satisfied by many natural learning algorithms~\citep{Cesa-Bianchi06:Predictiona}. \Cref{prop:smooth} applies even for that broader set of equilibria.

\subsection{Further connections}
Here, we discuss in more detail some previously introduced concepts that relate to the value of recall.

\paragraph{Price of uncorrelation in adversarial team games}

The \emph{price of uncorrelation} in adversarial team games was introduced by~\citet{Celli18:Computational} (see also~\citealp{Basilico17:Team}). In particular, in their setting, it is assumed that a team of players with identical interests competes against a single adversary---whose utility is opposite from the utility of the team. Among others, \citet{Celli18:Computational} compare the utility of the team in a \emph{team maxmin equilibrium (TME)}---the Nash equilibrium that maximizes the team's expected utility---to that in a \emph{team maxmin equilibrium with communication device (TMECom)}---in which players are able to interact and transmit information to a mediator. We observe that their result can be translated in terms of $\vor^{\bnash}(\class^{2p0s})$, where $\class^{2p0s}$ is the class of two-player zero-sum games (based on their construction, one of the players---corresponding to the adversary---has perfect recall). In particular, we have the following.

\begin{cor}[Consequence of \citealp{Celli18:Computational}]
    Let $\class^{2p0s}$ be the class of two-player zero-sum games with $|\calZ|$ terminal nodes. Then, $\vor^{\bnash}(\class^{2p0s}) \geq |\calZ|/2$.
\end{cor}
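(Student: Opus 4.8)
The plan is to formalize the correspondence between adversarial team games and single-player VoR that was asserted in the main text, and then invoke the quantitative construction of \citet{Celli18:Computational}. First I would set up the reduction. Given an adversarial team game in which a team shares a common utility $u$ and a single adversary has utility $-u$, I view the entire team as one player (Player~1) who acts at the union of all team infosets. Since team members neither observe one another's private signals nor recall each other's actions, Player~1 forgets information across members and thus has imperfect recall; the adversary becomes Player~2 and, by construction, retains perfect recall. Call the resulting two-player zero-sum imperfect-recall game $\Gamma$. The reduction preserves the game tree, so $\Gamma$ has the same terminal set $\calZ$, and refining only Player~1 (as in $\pr_1(\Gamma)$) leaves Player~2 with perfect recall throughout.

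Next I would identify the two relevant values. A team maxmin equilibrium (TME) is the Nash equilibrium maximizing the team's expected utility against a best-responding adversary; under the reduction this is precisely the best Nash equilibrium from Player~1's perspective, so its value is $u_1(\bnash(\Gamma))$. A team maxmin equilibrium with a communication device (TMECom) lets the team pool all private information and coordinate, which is exactly what endowing the combined player with perfect recall achieves; moreover, by Kuhn's theorem \citep{Kuhn53:Extensive} a \emph{perfect-recall} player gains nothing from correlation beyond behavioral strategies, so the TMECom value coincides with the maxmin value of $\pr_1(\Gamma)$, namely $u_1(\bnash(\pr_1(\Gamma)))$. Consequently the price of uncorrelation---the ratio of the TMECom value to the TME value---equals $\vor^{\bnash}(\Gamma)$.

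To conclude, recall that $\vor^{\bnash}(\class^{2p0s}) = \sup_{\Gamma \in \class^{2p0s}} \vor^{\bnash}(\Gamma)$, so it suffices to exhibit a single instance attaining the bound. Here I would import the worst-case family of \citet{Celli18:Computational}, whose adversarial team instance on $|\calZ|$ terminals forces a price of uncorrelation of $|\calZ|/2$; translating it through the correspondence above produces a game in $\class^{2p0s}$ with $\vor^{\bnash}(\Gamma) \geq |\calZ|/2$, and taking the supremum gives the claim.

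The main obstacle is the equivalence between TMECom and $u_1(\bnash(\pr_1(\Gamma)))$: I must verify that the team's communication/correlation device confers no power beyond perfect recall. This rests on Kuhn's theorem applied to the now perfect-recall team player (mixed, correlated, and behavioral strategies are all realization-equivalent) together with the zero-sum structure, which guarantees the maxmin value is well-defined and attained regardless of the order of quantifiers. A secondary, more mechanical check is the bookkeeping of the reduction---confirming that it preserves $|\calZ|$ and that the adversary's perfect recall is untouched by $\pr_1(\cdot)$---so that the imported instance genuinely lies in $\class^{2p0s}$ with the stated number of terminals.
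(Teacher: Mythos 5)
Your proposal is correct and follows essentially the same route as the paper, which likewise treats the team as a single imperfect-recall player facing a perfect-recall adversary, identifies the TME value with $u_1(\bnash(\Gamma))$ and the TMECom value with $u_1(\bnash(\pr_1(\Gamma)))$, and imports the worst-case instance of \citet{Celli18:Computational} to obtain the $|\calZ|/2$ lower bound. Your explicit appeal to Kuhn's theorem to show the communication device adds nothing beyond perfect recall simply makes rigorous a step the paper leaves implicit in its translation.
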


\paragraph{Price of miscoordination in security games}

The \emph{price of miscoordination} in multi-defender security games was introduced by \citet{Jiang13:Defender}. A (single-defender) security game involves two players (a defender and an attacker), a set of \emph{targets} $T$, and a set of \emph{resources} $R$, each of which has a set of feasible subsets of $T$ to which it can be assigned. The strategies of the defender is to pick these assignments. The strategies of the attacker is to pick a target to attack. The utilities to both players depend on the target that was attacked, and whether it was covered by one of the resources assigned by the defender. Importantly, the attacker picks a target after observing the (possibly mixed) strategy that the defender \emph{commits} to. As such, the attacker will always respond to a commitment with the strategy that maximizes its utility (\emph{i.e.}, \emph{best response} to the defender), and hence the defender will commit to the mixed strategy that maximizes its own gain given that the attacker will be best responding. The associated solution concept to this arrangement is a \emph{Stackelberg equilibrium}---for a formal definition, see \citet{Korzhyk11::Stackelberg}.

\citet{Jiang13:Defender} expand this setting to games with multiple defenders with identical interests, where different resources might be controlled by different defenders, who cannot correlate their strategies. Alternatively, their expansion can be seen as giving the defender imperfect recall (in the extensive form game where it assigns each resource one by one). Hence, their results on the price of miscoordination translate to value of recall bounds for (single-defender) security games, where the defender may have imperfect recall. In particular, while the value of recall can be unbounded in this class, we have the following bounds for the subclass of games with identical targets (in terms of the covered/uncovered utilities for the defenders and the attackers):
\begin{cor}[Consequence of \citealp{Jiang13:Defender}]
    Let $\class^{it}$ be the class of security games with identical targets, where the defender may have imperfect recall. Then, $4/3 \leq \vor^{\text{Stack}}(\class^{it}) \leq \frac{e}{e-1}$.
\end{cor}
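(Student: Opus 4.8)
The plan is to establish a utility-preserving correspondence between the multi-defender security games of~\citet{Jiang13:Defender} and single-defender imperfect-recall security games in extensive form, under which their price of miscoordination coincides exactly with $\vor^{\text{Stack}}$; the bounds then follow by directly invoking their results. First I would fix a security game with target set $T$, resource set $R$, and a collection of defenders partitioning $R$, and build an extensive-form game $\Gamma$ in which a single defender (Player 1) assigns the resources one at a time in a fixed order, but is placed into infosets that forget which resources it has already committed. The attacker observes the committed (mixed) defender strategy and best responds, matching the Stackelberg arrangement. The construction must be arranged so that behavioral strategies of the imperfect-recall defender are \emph{realization-equivalent} to independent (uncorrelated) assignments by the separate defenders; this identifies $u_1(\text{Stack}(\Gamma))$ with the defender's utility in the uncoordinated (multi-defender) Stackelberg equilibrium.

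The second step is to compute the coarsest perfect-recall refinement $\pr_1(\Gamma)$ and verify that it recovers exactly the single, fully coordinated defender. Because recalling the record of its own previous assignments lets the defender condition each new assignment on all earlier ones, the strategies available in $\pr_1(\Gamma)$ correspond precisely to arbitrary (correlated) joint assignments over $R$; hence $u_1(\text{Stack}(\pr_1(\Gamma)))$ equals the coordinated defender's Stackelberg utility. Combining the two steps yields $\vor^{\text{Stack}}(\Gamma) = u_1(\text{Stack}(\pr_1(\Gamma)))/u_1(\text{Stack}(\Gamma))$, which is by construction the price of miscoordination of that instance. Taking the supremum over $\class^{it}$ per~\Cref{def:vor} identifies $\vor^{\text{Stack}}(\class^{it})$ with the worst-case price of miscoordination over security games with identical targets, so the bounds $4/3 \le \vor^{\text{Stack}}(\class^{it}) \le e/(e-1)$ transfer verbatim: the lower bound from an instance of~\citet{Jiang13:Defender} achieving miscoordination at least $4/3$, and the upper bound from their identical-targets guarantee.

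The main obstacle I anticipate is the faithful modeling of the Stackelberg commitment structure inside the imperfect-recall formalism, together with ensuring that $\pr_1(\Gamma)$ yields precisely the coordinated defender rather than an over-refined game granting extra information—the same subtlety illustrated for $\pr(\Gamma)$ in~\Cref{fig:prr}. I would need to argue that the defender forgets \emph{only} the record of its own past resource assignments and nothing about the targets or the attacker, so that $\obs_1(h)$ collapses exactly across nodes differing in earlier assignments and the coarsest refinement restores coordination without inadvertently revealing more than the defender originally knew. A secondary point is that the identical-targets restriction is exactly what keeps $\vor^{\text{Stack}}$ finite (it can be unbounded otherwise), so the reduction must preserve this structural assumption, and I should check that the optimal-commitment Stackelberg value is the appropriate $\SC$ selection on both sides so that the translation of bounds is an equality rather than merely one-directional.
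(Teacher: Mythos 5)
Your proposal is correct and follows essentially the same route as the paper: the paper likewise views the multi-defender security game of \citet{Jiang13:Defender} as a single imperfect-recall defender assigning resources one by one (with infosets forgetting past assignments, so that behavioral strategies correspond to uncorrelated defenders), identifies the perfect-recall refinement with the coordinated single defender, and transfers the price-of-miscoordination bounds for identical targets verbatim to $\vor^{\text{Stack}}(\class^{it})$. Your elaboration of the realization-equivalence step and the check that $\pr_1(\Gamma)$ restores exactly coordination (and no extra information) is more explicit than the paper's informal one-paragraph argument, but it is the same reduction.
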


\subsection{Proof of \Cref{thm:partial}}

We next proceed with the proof of~\Cref{thm:partial}. It is based on a reduction from \emph{exact cover by $3$-sets} (\XS). Here, we are given as input a universe $U \coloneqq \{1, 2, \dots, n\} $ and a collection of sets $ \mathcal{F} \coloneqq \{ F_1, \dots, F_m \}$ such that $F_i \subseteq U$ and $|F_i| = 3$ for all $i \in [m]$. The goal is to determine whether there is a subset $\mathcal{F}' \subseteq \mathcal{F}$ such that 
\begin{itemize}
    \item $F \cap F' = \emptyset $ for all $F, F' \in \mathcal{F}'$, and
    \item $\bigcup_{ F \in \mathcal{F}'} F = U$.
\end{itemize}

We assume that $n = 3m$. In the following proof, we will make use the fact that $\XS$ is \NP-hard~\citep{Garey79:Computers}.

\partialhard*

\begin{proof}
    Given as input an instance $\mathcal{P}$ to the $\XS$ problem, we construct a single-player (imperfect-recall) extensive-form game $\Gamma = \Gamma(\mathcal{P})$ as follows. Chance selects an element in the universe $U = \{1, \dots, n \}$ uniformly at random, which is subsequently observed by the player. Next, in each of the resulting infosets (each containing a single node), the player has a single (``dummy'') action, which now leads to the same infoset $I=\{h_u\}_{u \in U}$. That is, the player forgets the observation indicating which element of the universe was selected initially by the chance node. At each node $h_u \in I$, the player acts by selecting one of $|\mathcal{F}|$ actions, each corresponding to a set in $\mathcal{F}$, whereupon the game terminates. The utilities are then defined as follows. If the element $u \in U$ was originally drawn and the player selected the action corresponding to $F_j \in \mathcal{F}$, the utility is $1$ if $u \in F_j$ and $0$ otherwise. To complete this polynomial-time reduction, we let $k \coloneqq m-1$.

    We first show that if $\XS(\mathcal{P}) = 1$, then there is a $k$-partial recall refinement of $\Gamma$, with $k = m-1$, such that the utility of the player under an optimal strategy is $1$. Indeed, let $\mathcal{F}'$ correspond to a partition that solves $\XS(\mathcal{P})$. We then consider the partial recall refinement of $\Gamma$ in which for every $F \in \mathcal{F'}$, all nodes in $\{h_u\}_{u \in F}$ belong to a single infoset (say $I_F$) of their own; by the assumed property of $\mathcal{F}'$, it follows that there is such a $k$-partial recall refinement with $k = m-1$. We then consider the strategy in that $k$-partial recall refinement in which for every infoset $I_F$, corresponding to some set $F \subseteq U$, the player selects $F \in \mathcal{F}$. By construction, it follows that this strategy secures a utility of $1$, as claimed.

    Finally, we argue that if $\XS(\mathcal{P}) = 0$, then every strategy in any possible $k$-partial recall refinement of $\Gamma$ attains a utility (strictly) below $1$. Indeed, by construction of $\Gamma$, a player can secure a utility of $1$ iff only terminal nodes with utility $1$ have positive reach probability under its strategy. Consider a partition of $U$ corresponding to a $k$-partial recall refinement of $\Gamma$. Then, for any node $h_u$ for $u \in U$, it must be the case that the player assigns positive probability only to a subset $F \in \mathcal{F}$ such that $u \in F$. Since $\mathcal{F}$ only contains subsets with exactly $3$ elements, it follows that any $k$-partial recall refinement of $\Gamma$ with an optimal strategy securing utility $1$ must consist of infosets with exactly $3$ nodes. Further, it must also be the case that any such nodes form a set belonging to $\mathcal{F}$, and that the corresponding action is selected in an optimal strategy. This is only possible if $\XS(\mathcal{P}) = 1$.
\end{proof}
\section{Discussion on CDT and CDT-Nash}
\label{sec:cdt-nash}

Finally, this section provides further background on CDT and CDT-Nash equilibria, which was omitted from the main body due to space constraints. As we pointed out earlier (\Cref{rem:edt = cdt w/o abs}), CDT differs from EDT only in games with absentmindedness. To illustrate the difference, recall that player $i$ receives expected utility $u_i(\pi)$ from strategy profile $\pi$. EDT reasons that if player~$i$ deviates from $\pi$ at an infoset $I$ to a randomized action $\sigma \in \Delta(A_I)$, player $i$ can expect to receive the utility $u_i(\pi^{I \mapsto \sigma}_i, \pi_{-i})$ instead. CDT, on the other hand, argues that the player should instead expect the utility
\begin{align}
\label{appeq:cdt utils}
\begin{aligned}
    \U_i^{\cdt}&(\sigma \mid \pi, I) \coloneqq
    \U_i(\pi) + \sum_{a \in A_I} (\sigma(a) - \pi(a \mid I)) \cdot \nabla_{I,a} \, \U_{i}(\pi),
\end{aligned}
\end{align}
where $\nabla_{I,a} \, \U_i(\pi)$ denotes the partial derivative of $\U_i$ with respect to action $a$ and infoset $I$ at point~$\pi$. In particular, CDT utilities are first-order approximations of the nonlinear utility function $\U_i$. As such, and as we saw earlier in the main body, the CDT equilibrium conditions are characterized by admitting no first-order improvements compared to the utility currently achieved---formalized through the notion of a \emph{Karush-Kuhn-Tucker (KKT)} point~\citep{Boyd04:Convex}.

Let us next turn to equilibrium refinements. In the main body, we introduced the new notion of a EDT-Nash equilibrium; the definition of CDT-Nash equilibria, introduced by~\citet{Lambert19:Equilibria}, has been deferred until now; it is worth noting its resemblance with \emph{sequential equilibria}~\citep{Kreps82:Sequential}. We define $\Fr(I \mid \pi) := \sum_{h \in I} \Prob(h \mid \pi)$ to be the \emph{frequency} with which an infoset $I$ is \emph{visited} under strategy $\pi$ in game $\Gamma$.

\begin{defn}
\label{appdefn:cdt seq rat}
    A strategy $\pi$ in a single-player game $\Gamma$ is called \emph{CDT-\rat{}} if there is a sequence $(\pi^{(k)}, \eps^{(k)})_{k \in \N}$ such that
    \begin{enumerate}
        \item each $\pi^{(k)}$ is a strategy in $\Gamma$ such that $\pi^{(k)}(a \mid I) > 0$ for all $I$ and $a$, and $(\pi^{(k)})_{k \in \N}$ converges to $\pi$;
        \item each $\eps^{(k)} > 0$ and $(\epsilon^{(k)})_{k \in \N}$ converge to $0$; and
        \item for each $k$, for all $I$ with $\Fr(I \mid \pi^{(k)} ) >0$ and $\sigma \in \Delta(A_I)$,
        \begin{align*}
        \label{appeq:seq rat}
            \frac{1}{\Fr(I \mid \pi^{(k)} )} \cdot \Big( \U_1^{\cdt}&(\sigma \mid \pi^{(k)}, I) - \U_1(\pi^{(k)}) \Big) \leq \eps^{(k)}.
        \end{align*}
    \end{enumerate}
\end{defn}

We are now ready to introduce the definition of CDT-Nash, which mirrors the definition of EDT-Nash defined earlier in the main body.

\begin{defn}[\citealt{Lambert19:Equilibria}]
\label{appdefn:cdt nash}
    A profile $\pi$ is an \emph{CDT-\NE{}} of $\Gamma$ if it is a CDT equilibrium and if for all $i \in \calN$, and in the single-player perspective of $\Gamma$ (where every other player plays fixed $\pi_{-i}$), the strategy $\pi_i$ is realization-equivalent to a CDT-\rat{} strategy $\pi$. 
\end{defn}

In the context of our paper, a key property of CDT-Nash is that, under perfect recall, it agrees with Nash equilibrium~\citep{Lambert19:Equilibria}; this was stated in the main body as~\Cref{prop:cdt-nash}.

\subsection{Proof of \Cref{prop:edt-nash}}

We next proceed with the proof of~\Cref{prop:edt-nash}, the statement of which is recalled below.

\edtnash*

\begin{proof}
    The first part of the claim is clear by definition. For the remaining statements, we will use~\citep{Tewolde23:Computational}[Lemma 13; Appendix E]. To do so, we note that for a single-player game $\Gamma$, strategy $\pi$, infoset $I$, and deviation $\sigma \in \Delta(A_I)$,
\begin{enumerate}
    \item[(a)] $\frac{1}{\Prob(I \mid \pi)} \cdot \Big( \U_1(\pi^{I \mapsto \sigma}) - \U_1(\pi) \Big)$ in our notation translates to $$\textnormal{EU}_{\textnormal{EDT}, \textnormal{GDH}}(\sigma \mid \pi, I) - \textnormal{EU}_{\textnormal{EDT}, \textnormal{GDH}}\big( \pi(\cdot \mid I) \mid \pi, I \big)$$ in their notation; this holds because $\U_1(\pi^{I \mapsto \sigma}) - \U_1(\pi)$ simplifies to $$\displaystyle \sum_{z \in \calZ \text{ with } I \in \obs(z)} \Prob(z \mid \pi) \cdot u_1(z) - \sum_{z \in \calZ \text{ with } I \in \obs(z)} \Prob(z \mid \pi^{I \mapsto \sigma}) \cdot u_1(z).$$
    \item[(b)] Further, $\frac{1}{\Fr(I \mid \pi)} \cdot \Big( \U_1^{\cdt}(\sigma \mid \pi, I) - \U_1(\pi) \Big)$ in the notation of \Cref{appdefn:cdt seq rat} translates to $\textnormal{EU}_{\textnormal{CDT}, \textnormal{GT}}(\sigma \mid \pi, I) - \textnormal{EU}_{\textnormal{CDT}, \textnormal{GT}}\big( \pi(\cdot \mid I) \mid \pi, I \big)$ in their notation; this is a consequence of~\eqref{appeq:cdt utils} together with~\cite{Tewolde23:Computational}[Lemma~14 and (6)].
\end{enumerate}

As a result, it follows that for multi-player games without absentmindedness \Cref{defn:edt nash} and \Cref{appdefn:cdt seq rat} coincide (where we used that $\Prob(I \mid \pi) = \Fr(I \mid \pi)$ under that assumption); in turn, this implies that EDT-Nash and CDT-Nash coincide. Therefore, we can use \Cref{prop:cdt-nash} to conclude that in the special case of perfect recall, EDT-Nash coincides with Nash. This completes the proof.
\end{proof}

\end{document}